\let\proof\relax
\let\endproof\relax
\newenvironment{proof-sketch}{\proof}{\endproof}
\let\originalleft\left
\let\originalright\right
\renewcommand{\left}{\mathopen{}\mathclose\bgroup\originalleft}
\renewcommand{\right}{\aftergroup\egroup\originalright}
\newcommand{\gh}{\mathit{GH}}                
\newcommand{\cnot}{\mathsf{CNOT}}
\newcommand{\X}{\mathsf{X}}
\newcommand{\Y}{\mathsf{Y}}
\newcommand{\Z}{\mathsf{Z}}
\renewcommand{\H}{\mathsf{H}}
\renewcommand{\P}{\mathsf{P}}
\newcommand{\T}{{\sf T}\xspace}
\newcommand{\I}{\mathsf{I}}
\newcommand{\C}{\mathsf{C}}
\newcommand{\U}{\mathsf{U}}
\newcommand{\OR}{\mathsf{OR}}
\newcommand{\Eval}{{\mathsf{Eval}}}
\newcommand{\KeyGen}{{\mathsf{KeyGen}}}
\newcommand{\CL}{{\sf CL}\xspace}
\newcommand{\QHE}{{\sf QHE}\xspace}
\newcommand{\HE}{{\sf HE}}
\newcommand{\AUX}{{\sf AUX}\xspace}
\newcommand{\EPR}{{\sf EPR}\xspace}
\newcommand{\Enc}{{\sf Enc}}
\newcommand{\Dec}{{\sf Dec}}
\newcommand{\Rec}{{\sf Rec}}
\newcommand{\sk}{\mathit{sk}}
\newcommand{\pk}{\mathit{pk}}
\newcommand{\evk}{\mathit{evk}}
\newcommand{\id}{{\mathbb{I}}}
\newcommand{\TOY}{{\sf TOY}}
\newcommand{\NC}{{\sf NC}}
\newcommand{\meas}{
\begin{tikzpicture}
\filldraw[fill=white] (0,.25) rectangle (.7,-.25);
\draw (.67,-.1) arc (50:130:.5);
\draw (.35,-.2)--(.525,.2);
\end{tikzpicture}
}
\newcommand{\advA}{{\mathscr{A}}}
\newcommand{\bra}[1]{{\left\langle{#1}\right\vert}}
\newcommand{\ket}[1]{{\left\vert{#1}\right\rangle}}
\newcommand{\proj}[1]{\ket{#1}\bra{#1}}
\newcommand{\SCHEME}{{\sf TP}\xspace}
\newcommand{\GenGadget}{{\mathsf{GenGadget}}}
\newcommand{\GenMeasurement}{{\mathsf{GenMeasurement}}}
\newcommand{\Sim}{\mathsf{Sim}}
\newcommand{\encrypted}[2][]{
\ifthenelse{\equal{#1}{}}
     {\widetilde{#2}}
     {\widetilde{#2}^{[#1]}}
}
\newcommand{\PubK}[1]{\mathsf{PubK}^{\mathsf{cpa}}_{#1}(\kappa)}
\newcommand{\PubKm}[1]{\mathsf{PubK}^{\mathsf{cpa-mult}}_{#1}(\kappa)}
\newenvironment{outdent}
               {\list{}{\leftmargin-2cm
                \rightmargin\leftmargin}                \item\relax}
               {\endlist}
\begin{document}
\title{Quantum homomorphic encryption for polynomial-sized circuits}
\ifbool{anonymous}{\author{ } \institute{ }}{
\author{Yfke Dulek\inst{1,3} \and Christian Schaffner\inst{1,2,3} \and Florian Speelman\inst{2,3}}
\institute{
University of Amsterdam
\and
CWI, Amsterdam
\and
\href{http://www.qusoft.org/}{QuSoft}
}}
\date{\today}
\maketitle

\thispagestyle{plain} 
\begin{abstract}
We present a new scheme for quantum homomorphic encryption which is compact and allows for efficient evaluation of arbitrary polynomial-sized quantum circuits. 
Building on the framework of Broadbent and Jeffery~\cite{BJ15} and recent results in the area of instantaneous non-local quantum computation~\cite{Spe15arxiv}, we show how to construct quantum gadgets that allow perfect correction of the errors which occur during the homomorphic evaluation of \T gates on encrypted quantum data.
Our scheme can be based on any classical (leveled) fully homomorphic encryption (FHE) scheme and requires no computational assumptions besides those already used by the classical scheme. 
The size of our quantum gadget depends on the space complexity of the classical decryption function -- which aligns well with the current efforts to minimize the complexity of the decryption function. 

Our scheme (or slight variants of it) offers a number of additional advantages such as ideal compactness,
the ability to supply gadgets ``on demand'', circuit privacy for the evaluator against passive adversaries,
and a three-round scheme for blind delegated quantum computation which puts only very limited demands on the quantum abilities of the client.

\end{abstract}
{\bf Keywords:} Homomorphic encryption, quantum cryptography, quantum teleportation, garden-hose model 

\newpage
\section{Introduction}
Fully homomorphic encryption (FHE) is the holy grail of modern cryptography. Rivest, Adleman and Dertouzous were the first to observe the possibility of manipulating encrypted data in a meaningful way, rather than just storing and retrieving it~\cite{RAD78}. After some partial progress~\cite{GM84,Pai99,BGN05,IP07} over the years, a breakthrough happened in 2009 when Gentry presented a fully-homomorphic encryption (FHE) scheme~\cite{Gen09}. Since then, FHE schemes have been simplified~\cite{DGHV10} and based on more standard assumptions~\cite{BV11}. The exciting developments around FHE have sparked a large amount of research in other areas such as functional encryption~\cite{GKPVZ13b,GVW13,GKPVZ13,SW14} and obfuscation~\cite{GGHRSW13}. 

Developing quantum computers is a formidable technical challenge, so it currently seems likely that quantum computing will not be available immediately to everyone and hence quantum computations have to be outsourced. Given the importance of classical\footnote{Here and throughout the article, we use ``classical'' to mean ``non-quantum''.} FHE for ``computing in the cloud'', it is natural to wonder about the existence of encryption schemes which can encrypt \emph{quantum data} in such a way that a server can carry out arbitrary \emph{quantum computations} on the encrypted data (without interacting with the encrypting party\footnote{In contrast to \emph{blind} or \emph{delegated quantum computation} where some interaction between client and server is usually required, see Section~\ref{sec:relatedwork} for references.}). While previous work on \emph{quantum homomorphic encryption} has mostly focused on information-theoretic security (see Section~\ref{sec:relatedwork} below for details), schemes that are based on computational assumptions have only recently been thoroughly investigated by Broadbent and Jeffery. In~\cite{BJ15}, they give formal definitions of quantum fully homomorphic encryption (QFHE) and its security and they propose three schemes for quantum homomorphic encryption assuming the existence of classical FHE.

A natural idea is to encrypt a message qubit with the quantum one-time pad (i.e.\ by applying a random Pauli operation), and send the classical keys for the quantum one-time pad along as classical information, encrypted by the classical FHE scheme. This basic scheme is called \CL in~\cite{BJ15}. It is easy to see that \CL allows an evaluator to compute arbitrary Clifford operations on encrypted qubits, simply by performing the actual Clifford circuit, followed by homomorphically updating the quantum one-time pad keys according to the commutation rules between the performed Clifford gates and the Pauli encryptions. The \CL scheme can be regarded as analogous to additively homomorphic encryption schemes in the classical setting. The challenge, like multiplication in the classical case, is to perform non-Clifford gates such as the \T gate. Broadbent and Jeffery propose two different approaches for doing so, accomplishing homomorphic encryption for circuits with a limited number of \T gates. These results lead to the following main open problem:

\begin{center}
{\it Is it possible to construct a quantum homomorphic scheme that allows evaluation of polynomial-sized quantum circuits?}
\end{center}

\subsection{Our Contributions}\label{sec:contributions}
We answer the above question in the affirmative by presenting a new scheme \SCHEME (as abbreviation for teleportation) for quantum homomorphic encryption which is both compact and efficient for circuits with polynomially many \T gates. The scheme is secure against chosen plaintext attacks from quantum adversaries, as formalized by the security notion \emph{q-IND-CPA security} defined by Broadbent and Jeffery \cite{BJ15}.

Like the schemes proposed in \cite{BJ15}, our scheme is an extension of the Clifford scheme \CL. We add auxiliary quantum states to the evaluation key which we call quantum \emph{gadgets} and which aid in the evaluation of the \T gates. The size of a gadget depends only on (a certain form of) the space complexity of the decryption function of the classical FHE scheme. This relation turns out to be very convenient, as classical FHE schemes are often optimized with respect to the complexity of the decryption operation (in order to make them bootstrappable). As a concrete example, if we instantiate our scheme with the classical FHE scheme by Brakerski and Vaikuntanathan~\cite{BV11}, each evaluation gadget of our scheme consists of a number of qubits which is polynomial in the security parameter.

In \SCHEME, we require exactly one evaluation gadget for every \T gate that we would like to evaluate homomorphically. Intuitively, after a \T gate is performed on a one-time-pad encrypted qubit $\X^a\Z^b\ket{\psi}$, the result might contain an unwanted phase $\P^a$ depending on the key $a$ with which the qubit was encrypted, since $\T \X^a \Z^b \ket{\psi} = \P^a \X^a \Z^{b} \T  \ket{\psi}$. Obviously, the evaluator is not allowed to know the key $a$. Instead, he holds an encryption $\tilde{a}$ of the key, produced by a classical FHE scheme. The evaluator can teleport the encrypted qubit ``through the gadget"~\cite{GC99} in a way that depends on $\tilde{a}$, in order to remove the unwanted phase. In more detail, the quantum part of the gadget consists of a number of EPR pairs which are prepared in a way that depends on the secret key of the classical FHE scheme. Some classical information is provided with the gadget that allows the evaluator to homomorphically update the encryption keys after the teleportation steps.
On a high level, the use of an evaluation gadget corresponds to a \emph{instantaneous non-local quantum computation}\footnote{This
term is not related to the term `instantaneous quantum computation' \cite{SB08}, and instead first
was used as a specific form of non-local quantum computation, one where all parties have to act simultaneously.} where one party holds the secret key of the classical FHE scheme, and the other party holds the input qubit and a classical encryption of the key to the quantum one-time pad. Together, this information determines whether an inverse phase gate $\P^{\dag}$ needs to be  performed on the qubit or not.
Very recent results by Speelman~\cite{Spe15arxiv} show how to perform such computations with a bounded amount of entanglement. These techniques are the crucial ingredients for our construction and are the reason why the \emph{garden-hose complexity}~\cite{BFSS13} of the decryption procedure of the classical FHE is related to the size of our gadgets.

The quantum part of our evaluation gadget is strikingly simple, which provides a number of advantages. To start with, the evaluation of a \T gate requires only one gadget, and does not cause errors to accumulate on the quantum state. The scheme is very compact in the sense that the state of the system after the evaluation of a \T gate has the same form as after the initial encryption, except for any classical changes caused by the classical FHE evaluation. This kind of compactness also implies that individual evaluation gadgets can be supplied ``on demand'' by the holder of the secret key. Once an evaluator runs out of gadgets, the secret key holder can simply supply more of them.

Furthermore, $\SCHEME$ does not depend on a specific classical FHE scheme, hence any advances in classical FHE can directly improve our scheme.
Our requirements for the classical FHE scheme are quite modest: we only require the classical scheme to have a space-efficient decryption procedure and to be secure against quantum adversaries.
In particular, no circular-security assumption is required. Since we supply at most a polynomial number of evaluation gadgets, our scheme $\SCHEME$ is leveled homomorphic by construction, and we can simply switch to a new classical key after every evaluation gadget.
In fact, the Clifford gates in the quantum evaluation circuit only require additive operations from the classical homomorphic scheme, while each $\T$ gate needs a fixed (polynomial) number of multiplications.
Hence, we do not actually require fully homomorphic classical encryption, but leveled fully homomorphic schemes suffice.

Finally,     circuit privacy in the passive setting almost comes for free. When wanting to hide which circuit was evaluated on the data, the evaluating party can add an extra randomization layer to the output state by applying his own one-time pad. We show that if the classical FHE scheme has the circuit-privacy property, then this extra randomization completely hides the circuit from the decrypting party. This is not unique to our specific scheme: the same is true for \CL.

\medskip

In terms of applications, our construction can be appreciated as a constant-round scheme for \emph{blind delegated quantum computation}, using computational assumptions. The server can evaluate a universal quantum circuit on the encrypted input, consisting of the client's quantum input and a (classical) description of the client's circuit. In this context, it is desirable to minimize the quantum resources needed by the client. We argue that our scheme can still be used for constant-round blind delegated quantum computation if we limit either the client's quantum memory or the types of quantum operations the client can perform.

As another application, we can instantiate our construction with a classical FHE scheme that allows for \emph{distributed} key generation and decryption amongst different parties that all hold a share of the secret key~\cite{AJLTVW12}. In that case, it is likely that our techniques can be adapted to perform \emph{multiparty quantum computation}~\cite{BCGHS06} in the semi-honest case. However, the focus of this article lies on the description and security proof of the new construction, and more concrete applications are the subject of upcoming work.

\subsection{Related Work} \label{sec:relatedwork}

Early classical FHE schemes were limited in the sense that they could not facilitate arbitrary operations on the encrypted data: some early schemes only implemented a single operation (addition or multiplication)\cite{RSA78,GM84,Pai99}; later on it became possible to combine several operations in a limited way \cite{BGN05,GHV10,SYY99}.  Gentry's first fully homomorphic encryption scheme \cite{Gen09} relied on several non-standard computational assumptions. Subsequent work~\cite{BGV12,BV11} has relaxed these assumptions or replaced them with more conventional assumptions such as the hardness of learning with errors (LWE), which is believed to be hard also for quantum attackers.
 It is impossible to completely get rid of computational assumptions for a classical FHE scheme, since the existence of such a scheme would imply the existence of an information-theoretically secure protocol for private information retrieval (PIR)~\cite{KO97} that breaks the lower bound on the amount of communication required for that task \cite{CKG+98,Fil12}.

While quantum fully homomorphic encryption (QFHE) is closely related to the task of blind or delegated quantum computation \cite{Chi05,BFK09,ABE10,VFPR14,FBS+14,Bro15,Lia15}, QFHE does not allow interaction between the client and the server during the computation. Additionally, in QFHE, the server is allowed to choose which unitary it wants to apply to the (encrypted) data.

Yu, P\'{e}rez-Delgado and Fitzsimons \cite{YPF14} showed that perfectly in\-for\-ma\-tion-theo\-ret\-i\-cally secure QFHE is not possible unless the size of the encryption grows exponentially in the input size. Thus, any scheme that attempts to achieve information-theoretically secure QFHE has to leak some proportion of the input to the server \cite{AS06,RFG12} or can only be used to evaluate a subset of all unitary transformations on the input \cite{RFG12,Lia13,TKO+14}. Like the multiplication operation is hard in the classical case, the hurdle in the quantum case seems to be the evaluation of non-Clifford gates. A recent result by Ouyang, Tan and Fitzsimons provides information-theoretic security for circuits with at most a constant number of non-Clifford operations~\cite{OTF15}.

Broadbent and Jeffery \cite{BJ15} proposed two schemes that achieve homomorphic encryption for nontrivial sets of quantum circuits. Instead of trying to achieve information-theoretic security, they built their schemes based on a classical FHE scheme and hence any computational assumptions on the classical scheme are also required for the quantum schemes. Computational assumptions allow bypassing the impossibility result from \cite{YPF14} and work toward a (quantum) fully homomorphic encryption scheme.

Both of the schemes presented in \cite{BJ15} are extensions of the scheme \CL described in Section~\ref{sec:contributions}. These two schemes use different methods to implement the evaluation of a \T gate, which we briefly discuss here. In the \EPR scheme, some entanglement is accumulated in a special register during every evaluation of a \T gate, and stored there until it can be resolved in the decryption phase. Because of this accumulation, the complexity of the decryption function scales (quadratically) with the number of \T gates in the evaluated circuit, thereby violating the compactness requirement of QFHE. The scheme \AUX also extends \CL, but handles \T gates in a different manner. The evaluator is supplied with auxiliary quantum states, stored in the evaluation key, that allow him to evaluate \T gates and immediately remove any error that may have occurred. In this way, the decryption procedure remains very efficient and the scheme is compact. Unfortunately, the required auxiliary states grow doubly exponentially in size with respect to the \T depth of the circuit, rendering \AUX useful only for circuits with constant \T depth.
Our scheme \SCHEME is related to \AUX in that extra resources for removing errors are stored in the evaluation key. In sharp contrast to \AUX, the size of the evaluation key in \SCHEME only grows linearly in the number of \T gates in the circuit (and polynomially in the security parameter), allowing the scheme to be leveled fully homomorphic. Since the evaluation of the other gates causes no errors on the quantum state, no gadgets are needed for those; any circuit containing polynomially many \T gates can be efficiently evaluated.

\subsection{Structure of this paper}
We start by introducing some notation in Section~\ref{sec:preliminaries} and presenting the necessary preliminaries on quantum computation, (classical and quantum) homomorphic encryption, and the garden-hose model which is essential to the most-general construction of the gadgets. In Section~\ref{sec:scheme}, we describe the scheme \SCHEME and show that it is compact. The security proof of \SCHEME is somewhat more involved, and is presented in several steps in Section~\ref{sec:security}, along with an informal description of a circuit-private variant of the scheme. In Section~\ref{sec:gadget-construction}, the rationale behind the quantum gadgets is explained, and some examples are discussed to clarify the construction. We conclude our work in Section~\ref{sec:conclusion} and propose directions for future research.

\section{Preliminaries}\label{sec:preliminaries}
\subsection{Quantum computation}
We assume the reader is familiar with the standard notions in the field of quantum computation (for an introduction, see~\cite{NC00}). In this subsection, we only mention the concepts that are essential to our construction.

The single-qubit \emph{Pauli group} is, up to global phase, generated by the bit flip and phase flip operations,
\[
\X = \left[
\begin{array}{c c}
0&1\\1&0
\end{array}
\right], \ \ \ 
\Z = \left[
\begin{array}{c c}
1&0\\0&-1
\end{array}
\right].
\]
A \emph{Pauli operator} on $n$ qubits is simply any tensor product of $n$ independent single-qubit Pauli operators. All four single-qubit Pauli operators are of the form $\X^a\Z^b$ with $a,b \in \{0,1\}$. Here, and in the rest of the paper, we ignore the global phase of a quantum state, as it is not observable by measurement.

The \emph{Clifford group} on $n$ qubits consists of all unitaries $\C$ that commute with the Pauli group, i.e.\ the Clifford group is the normalizer of the Pauli group. Since all Pauli operators are of the form $\X^{a_1} \Z^{b_1}\otimes \cdots \otimes \X^{a_n}\Z^{b_n}$, this means that $\C$ is a Clifford operator if for any $a_1,b_1, \dots, a_n, b_n \in \{0,1\}$ there exist $a_1',b_1', \dots, a_n', b_n' \in \{0,1\}$ such that (ignoring global phase):
\[\C(\X^{a_1} \Z^{b_1}\otimes \cdots\otimes \X^{a_n}\Z^{b_n}) = (\X^{a'_1} \Z^{b'_1}\otimes\cdots\otimes\X^{a'_n}\Z^{b'_n})\C .\]
All Pauli operators are easily verified to be elements of the Clifford group, and the entire Clifford group is generated by
\[
\P = \left[
\begin{array}{c c}
1&0\\0&i
\end{array}
\right], \ \ \ 
\H = \frac{1}{\sqrt{2}}\left[
\begin{array}{c c}
1&1\\1&-1
\end{array}
\right], \ \ \  \mbox{and} \ \ 
\cnot = \left[
\begin{array}{c c c c}
1&0&0&0\\
0&1&0&0\\
0&0&0&1\\
0&0&1&0
\end{array}
\right]. \ \ \ 
\]
(See for example~\cite{Got98}.) The Clifford group does not suffice to simulate arbitrary quantum circuits, but by adding any single non-Clifford gate, any quantum circuit can be efficiently simulated with only a small error. As in~\cite{BJ15}, we choose this non-Clifford gate to be the $\T$ gate,
\[
\T = \left[
\begin{array}{c c}
1&0\\0&e^{i\pi/4}
\end{array}
\right].
\]
Note that the $\T$ gate, because it is non-Clifford, does not commute with the Pauli group. More specifically, we have $\T\X^a\Z^b = \P^a\X^a\Z^b\T$. It is exactly the formation of this $\P$ gate that has proven to be an obstacle to the design of an efficient quantum homomorphic encryption scheme.

We use $\ket\psi$ or $\ket\varphi$ to denote pure quantum states. Mixed states are denoted with $\rho$ or $\sigma$. Let $\id_d$ denote the identity matrix of dimension $d$: this allows us to write the \emph{completely mixed state} as $\id_d / d$.

Define $\ket{\Phi^+} := \frac{1}{\sqrt{2}}(\ket{00}+\ket{11})$ to be an EPR pair.

If $X$ is a random variable ranging over the possible basis states $B$ for a quantum system, then let $\rho(X)$ be the density matrix corresponding to $X$, i.e.~$\rho(X) := \sum_{b \in B} \Pr[X = b]\ket b \bra b$.

Applying a Pauli operator that is chosen uniformly at random results in a single-qubit completely mixed state, since
\[
\forall \rho \; : \; \sum_{a,b \in \{0,1\}} \left( \frac{1}{4}  \X^a\Z^b\rho(\X^a\Z^b)^{\dag}\right) = \frac{\id_2}{2}
\]
This property is used in the construction of the \emph{quantum one-time pad}: applying a random Pauli $\X^a\Z^b$ to a qubit completely hides the content of that qubit to anyone who does not know the key $(a,b)$ to the pad. Anyone in possession of the key can decrypt simply by applying $\X^a\Z^b$ again.

\subsection{Homomorphic encryption}\label{sec:he-definition}
This subsection provides the definitions of (classical and quantum) homomorphic encryption schemes, and the security conditions for such schemes. In the current work, we only consider homomorphic encryption in the public-key setting. For a more thorough treatment of these concepts, and how they can be transferred to the symmetric-key setting, see~\cite{BJ15}.

\subsubsection{The classical setting}
A classical homomorphic encryption scheme $\HE$ consists of four algorithms: key generation, encryption, evaluation, and decryption. The key generator produces three keys: a public key and evaluation key, both of which are publicly available to everyone, and a secret key which is only revealed to the decrypting party. Anyone in possession of the public key can encrypt the inputs $x_1,\ldots,x_\ell$, and send the resulting ciphertexts $c_1,\ldots,c_\ell$ to an evaluator who evaluates some circuit $\C$ on them. The evaluator sends the result to a party that possesses the secret key, who should be able to decrypt it to $\C(x_1,\ldots,x_\ell)$.

More formally, $\HE$ consists of the following four algorithms which run in probabilistic polynomial time in terms of their input and parameters~\cite{BV11}:
\begin{description}
\item[$(\pk,\evk,\sk) \leftarrow \HE.\KeyGen(1^{\kappa})$] where $\kappa \in \mathbb{N}$ is the \emph{security parameter}. Three keys are generated: a public key $\pk$, which can be used for the encryption of messages; a secret key $\sk$ used for decryption; and an evaluation key $\evk$ that may aid in evaluating the circuit on the encrypted state. The keys $\pk$ and $\evk$ are announced publicly, while $\sk$ is kept secret.
\item[$c \leftarrow \HE.\Enc_{\pk}(x)$] for some one-bit message $x \in \{0,1\}$. This probabilistic procedure outputs a ciphertext $c$, using the public key $\pk$.
\item[$c' \leftarrow \HE.\Eval^{\C}_\evk(c_1,\ldots,c_\ell)$] uses the evaluation key to output some ciphertext $c'$ which decrypts to the evaluation of circuit $\C$ on the decryptions of $c_1,\ldots,c_\ell$. We will often think of $\Eval$ as an evaluation of a function $f$ instead of some canonical circuit for $f$, and write $\HE.\Eval^{f}_\evk(c_1,\ldots,c_\ell)$ in this case.
\item[$x' \leftarrow \HE.\Dec_{\sk}(c)$] outputs a message $x' \in \{0,1\}$, using the secret key $\sk$.
\end{description}
In principle, $\HE.\Enc_{\pk}$ can only encrypt single bits. When encrypting an $n$-bit message $x \in \{0,1\}^n$, we encrypt the message bit-by-bit, applying the encryption procedure $n$ times. We sometimes abuse the notation $\HE.\Enc_{\pk}(x)$ to denote this bitwise encryption of the string $x$.

For $\HE$ to be a homomorphic encryption scheme, we require \emph{correctness} in the sense that for any circuit $\C$, there exists a negligible\footnote{A \emph{negligible function} $\eta$ is a function such that for every positive integer $d$, $\eta(n) < 1/n^d$ for big enough $n$.} function $\eta$ such that, for any input $x$,
\[
\Pr[\HE.\Dec_{\sk}(\HE.\Eval^{\C}_{\evk}(\HE.\Enc_\pk(x))) \neq \C(x)] \leq \eta(\kappa) \, .
\]
In this article, we assume for clarity of exposition that classical schemes $\HE$ are perfectly correct, and that it is possible to immediately decrypt after encrypting (without doing any evaluation).

Another desirable property is \emph{compactness}, which states that the complexity of the decryption function should not depend on the size of the circuit: a scheme is compact if there exists a polynomial $p(\kappa)$ such that for any circuit $\C$ and any ciphertext $c$, the complexity of applying $\HE.\Dec$ to the result of $\HE.\Eval^{C}(c)$ is at most $p(\kappa)$.

A scheme that is both correct for all circuits and compact, is called \emph{fully} homomorphic. If it is only correct for a subset of all possible circuits (e.g.~all circuits with no multiplication gates) or if it is not compact, it is considered to be a \emph{somewhat} homomorphic scheme. Finally, a \emph{leveled} fully homomorphic scheme is (compact and) homomorphic for all circuits up to a variable depth $L$, which is supplied as an argument to the key generation function~\cite{Vai11}.

We will use the notation $\encrypted{x}$ to denote the result of running $\HE.\Enc_{\pk}(x)$: that is, $\Dec_{\sk}(\encrypted{x}) = x$ with overwhelming probability. In our construction, we will often deal with multiple classical key sets $(\pk_i,\sk_i,\evk_i)_{i \in I}$ indexed by some set~$I$. In that case, we use the notation $\encrypted[i]{x}$ to denote the result of $\HE.\Enc_{\pk_i}(x)$, in order to avoid confusion. Also note that (e.g.) $\pk_i$ does \emph{not} refer to the $i$th bit of the public key: in case we want to refer to the $i$th bit of some string $s$, we will use the notation $s[i]$.

When working with multiple key sets, it will often be necessary to transform an already encrypted message $\encrypted[i]{x}$ into an encryption $\encrypted[j]{x}$ using a different key set $j \neq i$. To achieve this transformation, we define the procedure $\HE.\Rec_{i \to j}$ that can always be used for this \emph{recryption} task as long as we have access to an encrypted version $\encrypted[j]{\sk_{i}}$ of the old secret key $\sk_{i}$. Effectively, $\HE.\Rec_{i \to j}$ homomorphically evaluates the decryption of $\encrypted[i]{x}$:
\[
\HE.\Rec_{i \to j}(\encrypted[i]{x}) := \HE.\Eval^{\HE.\Dec}_{\evk_j}\Bigl(\encrypted[j]{\sk_i}, \HE.\Enc_{\pk_j}(\encrypted[i]{x})\Bigr).
\]

\subsubsection{The quantum setting}
A quantum homomorphic encryption scheme $\QHE$, as defined in~\cite{BJ15}, is a natural extension of the classical case, and differs from it in only a few aspects. The secret and public keys are still classical, but the evaluation key is allowed to be a quantum state. This means that the evaluation key is not necessarily reusable, and can be consumed during the evaluation procedure.
The messages to be encrypted are qubits instead of bits, and the evaluator should be able to evaluate quantum circuits on them.

All definitions given above carry over quite naturally to the quantum setting (see also~\cite{BJ15}):

\begin{description}
\item[$(\pk, \rho_{\evk},\sk) \leftarrow \QHE.\KeyGen(1^{\kappa})$] where $\kappa \in \mathbb{N}$ is the security parameter. In contrast to the classical case, the evaluation key is a quantum state.
\item[$\sigma \leftarrow \QHE.\Enc_{\pk}(\rho)$] produces, for every valid public key $\pk$ and input state $\rho$ from some message space, to a quantum cipherstate $\sigma$ in some cipherspace.
\item[$\sigma' \leftarrow \QHE.\Eval_{\rho_{\evk}}^{\C}(\sigma)$] represents the evaluation of a circuit $\C$. If $\C$ requires $n$ input qubits, then $\sigma$ should be a product of $n$ cipherstates. The evaluation function maps it to a product of $n'$ states in some output space, where $n'$ is the number of qubits that $\C$ would output. The evaluation key $\rho_{\evk}$ is consumed in the process.
\item[$\rho' \leftarrow \QHE.\Dec_{\sk}(\sigma')$] maps a single state $\sigma'$ from the output space to a quantum state $\rho'$ in the message space. Note that if the evaluation procedure $\QHE.\Eval$ outputs a product of $n'$ states, then $\QHE.\Dec$ needs to be run $n'$ times.
\end{description}
The decryption procedure differs from the classical definition in that we require the decryption to happen subsystem-by-subsystem: this is fundamentally different from the more relaxed notion of \emph{indivisible schemes}~\cite{BJ15} where an auxiliary quantum register may be built up for the entire state, and the state can only be decrypted as a whole. In this work, we only consider the divisible definition.

\subsubsection{Quantum security}\label{sec:prelim-security}

The notion of security that we aim for is that of \emph{indistinguishability under chosen-plaintext attacks}, where the attacker may have quantum computational powers (q-IND-CPA). This security notion was introduced in~\cite[Definition 3.3]{BJ15} (see~\cite{GHS15arxiv} for a similar notion of the security of classical schemes against quantum attackers) and ensures semantic security~\cite{ABFGSS16}.
We restate it here for completeness.

\begin{definition}\cite{BJ15}
The \emph{quantum CPA indistinguishability experiment} with respect to a scheme \QHE and a quantum polynomial-time adversary $\advA=(\advA_1, \advA_2)$, denoted by $\PubK{\advA,\mathsf{QHE}}$, is defined by the following procedure:
\begin{enumerate}
\item $\KeyGen(1^\kappa)$ is run to obtain keys $(pk,sk,\rho_{evk})$.
\item Adversary $\advA_1$ is given $(pk,\rho_{evk})$ and outputs a quantum state on $\mathcal{M} \otimes \cal E$.
\item For $r\in \{0,1\}$, let   $\Xi_{\QHE}^{{\sf cpa},r}: D(\mathcal{M}) \rightarrow D(\mathcal{C})$ be:
$\Xi_{\QHE}^{{\sf cpa},0}(\rho)=  \QHE.\Enc_{pk}(\ket{0}\bra{0})$ and $\Xi_{\QHE}^{{\sf cpa},1}(\rho)=  \QHE.\Enc_{pk}(\rho)$.
A random bit $r \in \{0,1\}$ is chosen and $\Xi_{\QHE}^{{\sf cpa},r}$ is applied to the state in  $\mathcal{M}$ (the output being a state in $\mathcal{C}$).
\item Adversary $\advA_2$ obtains the system in $\mathcal{C} \otimes \mathcal{E}$ and outputs a bit $r'$.
\item The output of the experiment is defined to be~1 if~$r'=r$ and~$0$~otherwise.  In case $r=r'$, we say that $\advA$ \emph{wins} the experiment.
\end{enumerate}
\end{definition}

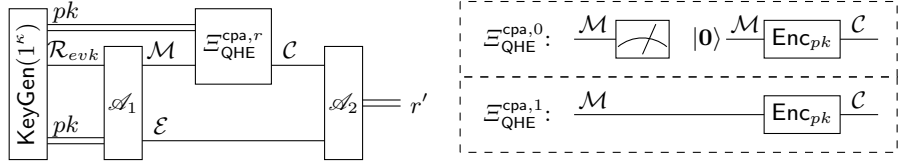
\begin{figure}[h]
\centering
\begin{tikzpicture}
\node at (0,0) {
\begin{tikzpicture}
\draw (-.25,1.54) -- (2,1.54);
\draw (-.25,1.46) -- (2,1.46);
\draw (.5,0)--(3.5,0);
\draw (.5,1)--(3.5,1);
\draw (-.25,0.04)--(.5,0.04);
\draw (-.25,-.04)--(.5,-.04);
\draw (-.25,1) -- (.5,1);
\draw (3.9,.46)--(4.4,.46);
\draw (3.9,.54)--(4.4,.54);

\filldraw[fill=white] (-.75,-.25) rectangle (-.25,1.75);
\node[rotate=90] at (-.5,.75) {\small$\KeyGen(1^\kappa)$};
\node at (0,1.7) {$pk$};
\node at (.1,1.2) {\small $\mathcal{R}_{evk}$};
\node at (0,.2) {$pk$};

\filldraw[fill=white] (.5,1.25) rectangle (1,-.25);
\node at (.75,.5) {$\advA_1$};

\node at (1.25,1.2) {${\small\cal M}$};
\node at (1.25,.2) {${\small\cal E}$};

\filldraw[fill=white] (1.7,1.75) rectangle (2.7,.75);
\node at (2.2,1.25) {$\Xi_{\QHE}^{{\sf cpa},r}$};

\node at (2.95,1.2) {$\small\cal C$};

\filldraw[fill=white] (3.4,1.25) rectangle (3.9,-.25);
\node at (3.65,.5) {$\advA_2$};

\node at (4.65,.5) {$r'$};
\end{tikzpicture}
};

\node at (6,0) {
\begin{tikzpicture}
\node at (0,1) {$\Xi_{\QHE}^{{\sf cpa},0}$:};
\draw (.75,1)--(1.5,1);
\node at (1.65,1) {\meas};
\node at (2.5,1) {$\ket{\mathbf{0}}$};
\draw (2.75,1)--(4.75,1);
\node at (3,1.2) {\small $\cal M$};
\filldraw[fill=white] (3.25,1.25) rectangle (4.25,.75);
\node at (3.75,1) {$\mathsf{Enc}_{pk}$};
\node at (4.5,1.2) {\small $\cal C$};
\node at (1,1.2) {\small $\cal M$};

\node at (0,0) {$\Xi_{\QHE}^{{\sf cpa},1}$:};
\draw (.75,0)--(4.75,0);
\filldraw[fill=white] (3.25,.25) rectangle (4.25,-.25);
\node at (3.75,0) {$\mathsf{Enc}_{pk}$};
\node at (1,.2) {\small $\cal M$};
\node at (4.5,.2) {\small $\cal C$};

\draw[dashed] (-.75,1.5) rectangle (5,.5);

\draw[dashed] (-.75,-.5) rectangle (5,.5);

\end{tikzpicture}
};
\end{tikzpicture}
\caption{\cite[reproduced with permission of the authors]{BJ15} The
  quantum CPA indistinguishability experiment
  $\PubK{\advA,\mathsf{QHE}}$. Double lines represent classical
  information flow, and single lines represent quantum information
  flow. The adversary $\advA$ is split up into two separate algorithms
  $\advA_1$ and $\advA_2$, which share a working memory represented by
  the quantum state in register $\mathcal{E}$.}\label{fig:PubK-def}
\end{figure}

The game  $\PubK{\advA,\mathsf{QHE}}$ is depicted in Figure~\ref{fig:PubK-def}. Informally, the challenger randomly chooses whether to encrypt some message, chosen by the adversary, or instead to encrypt the state $\ket{0}\bra{0}$. The adversary has to guess which of the two happened. If he cannot do so with more than negligible advantage, the encryption procedure is considered to be q-IND-CPA secure:

\begin{definition}\cite[Definition 3.3]{BJ15}
A (classical or quantum) homomorphic encryption scheme $\mathsf{S}$ is \emph{q-IND-CPA} secure if for any quantum polynomial-time adversary $\advA = (\advA_1,\advA_2)$ there exists a negligible function $\eta$ such that:
\[
\Pr[\PubK{\advA, \mathsf{S}} = 1] \leq \frac{1}{2} + \eta(\kappa).
\]
\end{definition}

Analogously to $\PubK{\advA,\mathsf{S}}$, in the game $\PubKm{\advA,\mathsf{S}}$, the adversary can give multiple messages to the challenger, which are either all encrypted, or all replaced by zeros. Broadbent and Jeffery~\cite{BJ15} show that these notions of security are equivalent.

\subsection{Garden-hose complexity}
\label{sec:prelim-gardenhose}

The \emph{garden-hose model} is a model of communication complexity which was introduced by Buhrman, Fehr, Schaffner and Speelman~\cite{BFSS13}
to study a protocol for position-based quantum cryptography. The model recently saw new use, when Speelman~\cite{Spe15arxiv} used it to
construct new protocols for the task of instantaneous non-local quantum computation, thereby breaking a wider class of schemes for position-based quantum cryptography. (Besides the garden-hose model,
this construction used tools from secure delegated computation. These techniques were first used in the setting of instantaneous non-local quantum computation
by Broadbent~\cite{Bro15arxiv}.)

We will not explain the garden-hose model thoroughly, but instead give a short overview. 
The garden-hose model involves two parties, Alice with input $x$ and Bob with input $y$, that jointly want to compute a function $f$.
To do this computation, they are allowed to use garden hoses to link up pipes that run between them, one-to-one,
in a way which depends on their local inputs. Alice also has a water tap, which she connects to one of the pipes. Whenever $f(x,y)=0$, the water has to exit at an open pipe on Alice's side, and whenever $f(x,y)=1$ the water should exit on Bob's side.

The applicability of the garden-hose model to our setting stems from a close correspondence between protocols in the garden-hose model and teleporting
a qubit back-and-forth; the `pipes' correspond to EPR pairs and the `garden hoses' can be translated into Bell measurements. Our construction of the gadgets in Section~\ref{sec:gadget-construction-logspace} will depend on the number of pipes needed to compute the decryption function $\HE.\Dec$ of a classical fully homomorphic encryption scheme. It will turn out that any log-space computable decryption function allows for efficiently constructable polynomial-size gadgets.

\section{The TP scheme}\label{sec:scheme}
Our scheme $\SCHEME$ (for teleportation) is an extension of the scheme $\CL$ presented in \cite{BJ15}: the quantum state is encrypted using a quantum one-time pad, and Clifford gates are evaluated simply by performing the gate on the encrypted state
and then homomorphically updating the encrypted keys to the pad. The new scheme $\SCHEME$, like $\AUX$, includes additional resource states (gadgets) in the evaluation key. These gadgets can be used to immediately correct any $\P$ errors that might be present after the application of a $\T$ gate. The size of the evaluation key thus grows linearly with the upper bound to the number of $\T$ gates in the circuit: for every $\T$ gate the evaluation key contains one gadget, along with some classical information on how to use that gadget.

\subsection{Gadget}\label{sec:gadget}
In this section we only give the general form of the gadget, which suffices to prove security.
The explanation on how to construct these gadgets, which depend on the decryption function of the classical homomorphic scheme $\HE.\Dec$,
is deferred to Section~\ref{sec:gadget-construction}. 

Recall that when a $\T$ gate is applied to the state $\X^a\Z^b\ket\psi$, an unwanted $\P$ error may occur since
$
\T\X^a\Z^b = \P^a\X^a\Z^b\T.
$
If $a$ is known, this error can easily be corrected by applying $\P^{\dag}$ whenever $a = 1$. However, as we will see, the evaluating party only has access to some encrypted version $\encrypted{a}$ of the key $a$,
and hence is not able to decide whether or not to correct the state.

We show how the key generator can create a gadget ahead of time that corrects the state, conditioned on $a$, when the qubit $\P^a\X^a\Z^b\T\ket\psi$ is teleported through it. The gadget will not reveal any information about whether or not a $\P$ gate was present before the correction. Note that the value of $a$ is completely unknown to the key generator, so the gadget cannot depend on it. Instead, the gadget will depend on the secret key $\sk$,
and the evaluator will use it in a way that depends on $\encrypted{a}$.

The intuition behind our construction is as follows. A gadget consists of a set of fully entangled pairs that are crosswise linked
up in a way that depends on the secret key $\sk$ and the decryption function of the classical homomorphic scheme $\HE$. If the decryption function $\HE.\Dec$ is simple enough, i.e.~computable in logarithmic space or by low-depth binary circuit, the size of this state is polynomial in the security parameter.

Some of these entangled pairs have an extra inverse phase gate applied to them. Note that teleporting any qubit $\X^a \Z^b \ket{\psi}$ through, for example, $(\P^{\dagger} \otimes \I) \ket{\Phi^+}$, effectively applies an inverse phase gate to the qubit, which ends up in the state $\X^{a'} \Z^{b'} \P^{\dagger} \ket{\psi}$, where the new Pauli corrections
$a'$,$b'$ depend on $a$,$b$ and the outcome of the Bell measurement.

When wanting to remove an unwanted phase gate, the evaluator of the circuit teleports a qubit through this gadget state in a way which is specified by $\encrypted{a}$.
The gadget state is constructed so that the qubit follows a path through this gadget which passes an inverse phase gate if and only if $\HE.\Dec_{\sk}(\encrypted{a})$ equals $1$.
The Pauli corrections can then be updated using the homomorphically-encrypted classical information and the measurement outcomes.

\subsubsection*{Specification of gadget.}
Assume $\HE.\Dec$ is computable in space logarithmic in the security parameter $\kappa$.
In Section~\ref{sec:gadget-construction} we will show that there exists an efficient algorithm $\SCHEME.\GenGadget_{\pk'}(\sk)$ which
produces a gadget: a quantum state $\Gamma_{pk'}(\sk)$ of the form as specified in this section.

The gadget will able to remove a single phase gate $\P^a$, using only knowledge of $\encrypted{a}$, where $\encrypted{a}$ decrypts to $a$ under
the secret key $\sk$. The public key $\pk'$ is used to encrypt all classical information which is part of the gadget.

The quantum part of the gadget consists of $2 m$ qubits, with $m$ some number which is polynomial in the security parameter $\kappa$. Let
$\{ (s_1, t_1), (s_2, t_2), \dots, (s_m, t_m) \}$ be disjoint pairs in $\{1, 2, \dots, 2m \}$,
and let $p \in \{0,1\}^m$ be a string of $m$ bits. Let $g(\sk)$ be a shorthand for the tuple of both of these, together with the secret key $\sk$;
\[
 g(\sk) := ( \{ (s_1, t_1), (s_2, t_2), \dots, (s_m, t_m) \} , p, \sk) \,.
\]
The tuple $g(\sk)$ is the classical information that determines the structure of the gadget as a function of the secret key $\sk$. The length of $g(\sk)$ is not dependent on the secret key: the number of qubits $m$ and 
the size of $\sk$ itself are completely determined by the choice of protocol $\HE$ and the security parameter $\kappa$.

For any bitstring $x,z\in \{0,1\}^m$, define the quantum state
\[
 \gamma_{x,z}\bigl(g(\sk)\bigr) := \prod^m_{i=1} \X^{x[i]} \Z^{z[i]} \bigl( \P^{\dagger}\bigr)^{p[i]}  \proj{\Phi^+}_{s_i t_i} \P^{p[i] } \Z^{z[i]} \X^{x[i]} \, .
\]
(Here the single-qubit gates are applied to $s_i$, the first qubit of the entangled pair.) This quantum state is a collection of maximally-entangled pairs of qubits, some with an extra inverse phase gate applied, where the pairs are determined by the disjoint pairs $\{ (s_1, t_1), (s_2, t_2), \dots, (s_m, t_m) \} $ chosen earlier. The entangled pairs have arbitrary Pauli operators applied to them, described by the bitstrings $x$ and $z$.

Note that, no matter the choice of gadget structure, averaging over all possible $x,z$ gives the completely mixed state on $2m$ qubits,
\[
 \frac{1}{2^{2m}} \sum_{\mathclap{x,z \in \{0,1\}^m}} \gamma_{x,z}\bigl(g(\sk)\bigr) = \frac{\id_{2^{2m}}}{2^{2m}} \,.
\]
This property will be important in the security proof; intuitively it shows that these gadgets do not reveal any information about $\sk$ whenever $x$ and $z$ are encrypted with a secure classical encryption scheme.

The entire gadget then is given by 
\[
 \Gamma_{\pk'}(\sk) = \rho (\HE.\Enc_{\pk'}\bigl(g(\sk)\bigr) ) \otimes \frac{1}{2^{2m}} \sum_{\mathclap{x,z \in \{0,1\}^m}} \rho (\HE.\Enc_{\pk'}(x,z)) \otimes \gamma_{x,z}\bigl(g(\sk)\bigr) \, .
\]
To summarize, the gadget consists of a quantum state $\gamma_{x,z}\bigl(g(\sk)\bigr)$, instantiated with randomly chosen $x,z$, the classical information denoting the random choice of $x,z$,
and the other classical information $g(\sk)$ which specifies the gadget. All classical information is homomorphically encrypted with a public key $\pk'$.

Since this gadget depends on the secret key $\sk$, simply encrypting this information using the public key corresponding
to $\sk$ would not be secure, unless we assume that $\HE.\Dec$ is circularly secure.
In order to avoid the requirement of circular security, we will always use a fresh, independent key $\pk'$ to encrypt this information.
The evaluator will have to do some recrypting before he is able to use this information, but otherwise using independent keys does not complicate the construction much. More details on how the evaluation procedure deals with the different keys is provided in Section~\ref{sec:evaluation}.

\subsubsection*{Usage of gadget.}
The gadget is used by performing Bell measurements between pairs of its qubits, together with an input qubit that needs a correction, without having knowledge of the structure of the gadget.

The choice of measurements can be generated by an efficient (classical) algorithm $\SCHEME.\GenMeasurement(\encrypted{a})$ which produces a list $M$ containing $m$ disjoint pairs 
of elements in $\{0,1,2,\dots,2m\}$. Here the labels $1$ to $2m$ refer to the qubits that make up a gadget and $0$ is the label of the qubit with the possible $\P$ error. The pairs represent which qubits will be connected through Bell measurements; note that all but a single qubit will be measured according to $M$.

Consider an input qubit, in some arbitrary state $\P^a \ket{\psi}$, i.e.\ the qubit has an extra phase gate if $a=1$. Let $\encrypted{a}$ be an encrypted
version of $a$, such that $a = \HE.\Dec_{\sk}(\encrypted{a})$. Then the evaluator performs Bell measurements
on $\Gamma_{\pk'}(\sk)$ and the input qubit,
according to $M \leftarrow \SCHEME.\GenMeasurement(\encrypted{a})$.
By construction,
one out the $2m+1$ qubits is still unmeasured.
This qubit will be in the state $\X^{a'} \Z^{b'} \ket{\psi}$, for some $a'$ and $b'$, both of which are functions of the specification of the gadget, the measurement
choices which depend on $\encrypted{a}$, and the outcomes of the teleportation measurements. Also see Section~\ref{sec:evaluation} and
Appendix~\ref{app:key-update-gadget} for a more in-depth explanation of how the accompanying classical information is updated.

Intuitively, the `path' the qubit takes through the gadget state, goes through one of
the fully entangled pairs with an inverse phase gate whenever $\HE.\Dec_{\sk}(\encrypted{a}) = 1$, and avoids all such pairs whenever $\HE.\Dec_{\sk}(\encrypted{a}) = 0$.

\subsection{Key generation}
Using the classical $\HE.\KeyGen$ as a subroutine to create multiple classical homomorphic keysets, we generate a classical secret and public key, and a classical-quantum evaluation key that contains $L$ gadgets, allowing evaluation of a circuit containing up to $L$ $\T$ gates.
Every gadget depends on a different secret key, and its classical information is always encrypted using the next public key.
The key generation procedure $\SCHEME.\KeyGen(1^\kappa, 1^L)$ is defined as follows:
\begin{enumerate}
\item For $i = 0$ to $L$: execute $(\pk_i, \sk_i, \evk_i) \leftarrow \HE.\KeyGen(1^{\kappa})$ to obtain $L+1$ independent classical homomorphic key sets.
\item Set the public key to be the tuple $(\pk_i)_{i = 0}^{L}$.
\item Set the secret key to be the tuple $(\sk_i)_{i = 0}^{L}$.
\item For $i = 0$ to $L-1$:  Run the procedure
$\SCHEME.\GenGadget_{\pk_{i+1}}(\sk_i)$ to create the gadget $\Gamma_{\pk_{i+1}}(\sk_i)$ as described in Section~\ref{sec:gadget}.

\item Set the evaluation key to be the set of all gadgets created in the previous step (including their encrypted classical information), plus the tuple $(\evk_i)_{i=0}^L$.
The resulting evaluation key is the quantum state
\[
\bigotimes_{i = 0}^{L-1} \left( \Gamma_{\pk_{i+1}}(\sk_i) \otimes \proj{\evk_i} \right).
\]
\end{enumerate}

\subsection{Encryption}
The encryption procedure $\SCHEME.\Enc$ is identical to $\CL.\Enc$, using the first public key $\pk_0$ for the encryption of the one-time-pad keys. We restate it here for completeness.

Every single-qubit state $\sigma$ is encrypted separately with a quantum one-time pad, and the pad key is (classically) encrypted and appended to the quantum encryption of $\sigma$, resulting in the classical-quantum state:
\[
\sum_{a,b \in \{0,1\}} \frac{1}{4}\rho(\HE.\Enc_{\pk_0}(a), \HE.\Enc_{\pk_0}(b)) \otimes X^a Z^b \sigma Z^b X^a .
\]

\subsection{Circuit evaluation}\label{sec:evaluation}
Consider a circuit with $n$ wires. The evaluation of the circuit on the encrypted data is carried out one gate at a time. 

Recall that our quantum circuit is written using a gate set that consists of the Clifford group generators $\{\H, \P, \cnot \}$ and the $\T$ gate.
A Clifford gate may affect multiple wires at the same time, while $\T$ gates can only affect a single qubit.
Before the evaluation of a single gate $U$, the encryption of an $n$-qubit state $\rho$ is of the form 
\[
\bigl(\X^{a_1} \Z^{b_1} \otimes \dots \otimes \X^{a_n} \Z^{b_n} \bigr) \rho \, \bigl( \X^{a_1} \Z^{b_1} \otimes \dots \otimes \X^{a_n} \Z^{b_n} \bigr) .
\]
The evaluating party holds the encrypted versions $\encrypted[i]{a_1}, \dots, \encrypted[i]{a_n}$ and $\encrypted[i]{b_1}, \dots, \encrypted[i]{b_n}$, with respect to the $i$th key set for some $i$ (initially, $i = 0$). The goal is to obtain a quantum encryption of the state $U\rho \, U^{\dagger}$,
such that the evaluator can homomorphically compute the encryptions of the new keys to the quantum one-time pad.
If $U$ is a Clifford gate, these encryptions will still be in the $i$th key. If $U$ is a $\T$ gate, then all encryptions are transferred to the $(i+1)$th key during the process.

\begin{itemize}
 
\item
If $U$ is a Clifford gate, we proceed exactly as in $\CL.\Eval$. The gate $U$ is simply applied to the encrypted qubit, and since $U$ commutes with the Pauli group, the evaluator only needs to update the encrypted keys in a straightforward way.
For more details, see Appendix~\ref{app:key-update-clifford}.

\item
If $U = \T$, the evaluator should start out by applying a $\T$ gate to the appropriate wire $w$. Afterwards, the qubit at wire $w$ is in the state \[ \bigl( \P^{a_w}\X^{a_w}\Z^{b_w} \T \bigr) \rho_w  \bigl( \T^\dagger  \X^{a_w}\Z^{b_w} (\P^\dagger)^{a_w} \bigr) .\]
In order to remove the $\P$ error, the evaluator uses one gadget $\Gamma_{\pk_{i+1}}(\sk_{i})$ from the evaluation key; he possesses the classical information $\encrypted[i]{a_w}$ encrypted with the correct key to be able to compute measurements $M \leftarrow \SCHEME.\GenMeasurement(\encrypted[i]{a_w})$ and performs the measurements on the pairs given by $M$.
Afterwards, using his own measurement outcomes, the classical information accompanying the gadget (encrypted using $\pk_{i+1}$), and the recryptions of $\encrypted[i]{a_w}$ and $\encrypted[i]{b_w}$ into $\encrypted[i+1]{a_w}$ and $\encrypted[i+1]{b_w}$, the evaluator homomorphically computes the new keys $\encrypted[i+1]{a'_w}$ and $\encrypted[i+1]{b'_w}$. See also Figure~\ref{fig:t-gate-evaluation} and Appendix~\ref{app:key-update-gadget}. After these computations, he should also recrypt the keys of all other wires into the $(i+1)$th key set.

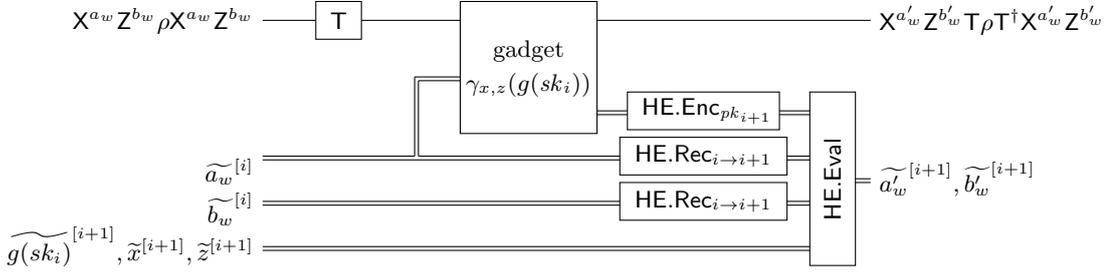
\begin{figure}[h]
\centering
\hspace*{-1cm} \begin{tikzpicture}

%%WIRES%%

%1. the qubit:
\node[anchor=east] at (0,3) {$\X^{a_w}\Z^{b_w}\rho\X^{a_w}\Z^{b_w}$};
\draw(0,3) -- (8,3);
\node[anchor=west] at (8,3) {$\X^{a'_w}\Z^{b'_w}\T\rho\T^\dag\X^{a'_w}\Z^{b'_w}$};

%2. classical key info a:
\node[anchor=east] at (0,1) {$\encrypted[i]{a_w}$};
\draw (0,1.2)   -- (2,1.2); \draw(2.05,1.2) -- (7.5,1.2);
\draw (0,1.15) -- (7.5,1.15);
\draw (2,     1.2) -- (2,     2.25);
\draw (2.05,1.2) -- (2.05,2.2);
\draw (2,     2.25) -- (3.5,2.25);
\draw (2.05,2.2)   -- (3.5,2.2);

%3. measurement outcomes:
\draw(3.5,1.8) -- (7.5,1.8);
\draw(3.5,1.75) -- (7.5,1.75);

%4. classical key info b:
\node[anchor=east] at (0,0.5) {$\encrypted[i]{b_w}$};
\draw (0,0.6) -- (7.5,0.6);
\draw (0,0.55) -- (7.5,0.55);

%5. classical gadget information:
\node[anchor=east] at (0,0) {$\encrypted[i+1]{g(sk_i)}, \encrypted[i+1]{x}, \encrypted[i+1]{z}$};
\draw(0,0) -- (7.5,0);
\draw(0,-0.05) -- (7.5,-0.05);

%6 the new encrypted keys:
\node[anchor=west] at (8,0.9) {$\encrypted[i+1]{a'_w}, \encrypted[i+1]{b'_w}$};
\draw(7.5,0.9) -- (8,0.9);
\draw(7.5,0.85) -- (8,0.85);

%%GATES%%

% the T gate:
\filldraw[fill=white](0.7,2.75) rectangle (1.3,3.25);
\node at (1,3) {$\T$};

% the gadget:
\filldraw[fill=white](2.6,1.5) rectangle (4.4,3.25);
\node at (3.5,2.6) {gadget};
\node at (3.5,2.15) {$\gamma_{x,z}(g(\sk_{i}))$};

%the encryption:
\filldraw[fill=white](4.8,1.55) rectangle (6.8,2.05);
\node at (5.85,1.8) {$\HE.\Enc_{\pk_{i+1}}$};

%the recryptions:
\filldraw[fill=white](4.7,0.95) rectangle(6.9,1.45);
\node at (5.8,1.2) {$\HE.\Rec_{i \to i+1}$};
\filldraw[fill=white](4.7,0.35) rectangle(6.9,0.85);
\node at (5.8,0.6) {$\HE.\Rec_{i \to i+1}$};

%the classical homomorphic evaluation:
\filldraw[fill=white](7.2,-0.25) rectangle (7.8,2.05);
\node[rotate=90] at (7.5,0.9) {$\HE.\Eval$};

\end{tikzpicture}
\caption{The homomorphic evaluation of the $(i+1)$th $\T$ gate of the circuit. The gadget is consumed during the process. After the use of the gadget, the evaluator encrypts his own classical information (including measurement outcomes) in order to use it in the homomorphic computation of the new keys. $\HE.\Eval$ evaluates this fairly straightforward computation that consists mainly of looking up values in a list and adding them modulo~2. Note that $\encrypted[i+1]{\sk_i}$, needed for the recryption procedures, is contained in the evaluation key.}
\label{fig:t-gate-evaluation}
\end{figure}

\end{itemize}

At the end of the evaluation of some circuit $\C$ containing $k$ $\T$ gates, the evaluator holds a one-time-pad encryption of the state $\C\ket\psi$, together with the keys to the pad, classically encrypted in the $k$th key.  The last step is to recrypt (in $L-k$ steps) this classical information into the $L$th (final) key. Afterwards, the quantum state and the key encryptions are sent to the decrypting party.

\subsection{Decryption}
The decryption procedure is identical to $\CL.\Dec$. For each qubit, $\HE.\Dec_{\sk_L}$ is run twice in order to retrieve the keys to the quantum pad. The correct Pauli operator can then be applied to the quantum state in order to obtain the desired state $\C\ket\psi$.

The decryption procedure is fairly straightforward, and its complexity does not depend on the circuit that was evaluated. This is formalized in a compactness theorem for the $\SCHEME$ scheme:

\begin{theorem}
If $\HE$ is compact, then $\SCHEME$ is compact.
\end{theorem}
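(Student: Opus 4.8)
The plan is to trace through the decryption procedure of $\SCHEME$ and observe that it is essentially the decryption procedure of $\CL$, which in turn calls $\HE.\Dec$ a bounded number of times per output qubit. Recall from Section~\ref{sec:evaluation} that at the end of evaluating any circuit $\C$, the evaluator holds a quantum one-time-pad encryption $\X^{a_w}\Z^{b_w}\rho_w\Z^{b_w}\X^{a_w}$ of each output qubit together with classical encryptions $\encrypted[L]{a_w}$ and $\encrypted[L]{b_w}$ of the pad keys under the \emph{final} key set $\sk_L$. The decryption procedure $\SCHEME.\Dec$ then, for each output qubit, runs $\HE.\Dec_{\sk_L}$ twice (once on $\encrypted[L]{a_w}$, once on $\encrypted[L]{b_w}$) to recover $a_w,b_w$, and applies $\Z^{b_w}\X^{a_w}$ to the quantum register. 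So the plan is: first establish that the input to each invocation of $\HE.\Dec$ is exactly a ciphertext of the underlying classical scheme $\HE$ (not something of growing size), and second invoke the compactness of $\HE$ to bound the cost of each such invocation by a fixed polynomial $p(\kappa)$.

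For the first point, I would argue that the key encryptions $\encrypted[L]{a_w},\encrypted[L]{b_w}$ handed to the decryptor are outputs of $\HE.\Eval$ (together with some $\HE.\Rec$ steps, which are themselves instances of $\HE.\Eval$ applied to $\HE.\Dec$) on ciphertexts of $\HE$, and hence are valid ciphertexts of $\HE$ on which $\HE.\Dec$ acts with cost at most $p(\kappa)$ by the compactness of $\HE$. The recryption chains have length at most $L-k \le L$, but crucially $\HE.\Rec_{i\to i+1}$ produces a fresh $\HE$-ciphertext under key $i+1$ regardless of how many recryptions preceded it, so no blow-up in ciphertext size accumulates; each application of $\HE.\Dec$ in the final decryption step therefore costs at most $p(\kappa)$. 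Since the number of invocations of $\HE.\Dec$ is exactly two per output qubit, and the number of output qubits $n'$ is a property of the circuit's output interface (not its internal size), the per-qubit decryption cost of $\SCHEME$ is at most $2p(\kappa)$ plus the $O(1)$ cost of applying a single-qubit Pauli. Recall also that the definition of compactness for $\QHE$ in Section~\ref{sec:he-definition} requires only that the complexity of decrypting a single output subsystem be bounded by a fixed polynomial in $\kappa$, independent of $\C$ --- which is exactly what we have shown.

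I expect the only subtlety --- rather than a genuine obstacle --- to be making precise that ``$\SCHEME.\Dec$ is identical to $\CL.\Dec$'' really does mean its complexity is governed entirely by the two $\HE.\Dec$ calls, i.e.\ that the quantum state handed to the decryptor is genuinely in the simple one-time-padded form claimed, with no leftover gadget qubits or accumulated entanglement. This is where the compactness statement really bites and where $\SCHEME$ differs from $\EPR$: one must appeal to the fact (established in Section~\ref{sec:evaluation}, and formalized in the surrounding discussion of the scheme) that each $\T$ gate is resolved \emph{immediately} by consuming one gadget, leaving the encrypted state in exactly the same form $\X^{a}\Z^{b}\rho\Z^{b}\X^{a}$ as after initial encryption, with only the classical key encryptions updated. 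Given that structural invariant, the theorem follows immediately from the compactness of $\HE$; the proof is short.
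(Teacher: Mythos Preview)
Your proposal is correct and follows essentially the same approach as the paper: observe that $\SCHEME.\Dec$ acts qubit-by-qubit, invoke compactness of $\HE$ to bound each of the two $\HE.\Dec$ calls by a fixed polynomial $p(\kappa)$, and add the $O(1)$ cost of applying the Pauli. Your additional remarks about recryptions not inflating ciphertext size and about the structural invariant after each $\T$ gate are sound elaborations, but the paper's proof simply takes these for granted and is correspondingly shorter.
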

\begin{proof}
Note that because the decryption only involves removing a one-time pad from the quantum ciphertext produced by the circuit evaluation,
this decryption can be carried out a single qubit at a time. 
By compactness of $\HE$, there exists a polynomial $p(\kappa)$ such that for any function $f$, the complexity of applying $\HE.\Dec$ to the output of $\HE.\Eval^f$ is at most $p(\kappa)$.
Since the keys to the quantum one-time pad of any wire $w$ are two single bits encrypted with the classical $\HE$ scheme, decrypting the keys for one wire requires at most $2p(\kappa)$ steps.
Obtaining the qubit then takes at most two steps more for (conditionally) applying $\X^{a_w}$ and $\Z^{b_w}$.
The total number of steps is polynomial in $\kappa$ and independent of $\C$, so we conclude that $\SCHEME$ is compact.
\end{proof}

\section{Security of TP}\label{sec:security}
In order to guarantee the privacy of the input data, we need to argue that an adversary that does not possess the secret key cannot learn anything about the data with more than negligible probability. To this end, we show that $\SCHEME$ is q-IND-CPA secure, i.e.\ no polynomial-time quantum adversary can tell the difference between an encryption of a real message and an encryption of $\proj{0}$, even if he gets to choose the message himself (recall the definition of q-IND-CPA security from Section~\ref{sec:prelim-security}). Like in the security proofs in~\cite{BJ15}, we use a reduction argument to relate the probability of being able to distinguish between the two encryptions to the probability of winning an indistinguishability experiment for the classical $\HE$, which we already know to be small. The aim of this section is to prove the following theorem:

\begin{theorem}\label{thm:security}
If $\HE$ is q-IND-CPA secure, then $\SCHEME$ is q-IND-CPA secure for circuits containing up to polynomially (in $\kappa$) many $\T$ gates.
\end{theorem}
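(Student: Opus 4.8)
The plan is a hybrid argument reducing q-IND-CPA security of $\SCHEME$ to q-IND-CPA security of the classical scheme $\HE$. Two structural facts drive it. First, by the identity noted in Section~\ref{sec:gadget}, $\tfrac{1}{2^{2m}}\sum_{x,z\in\{0,1\}^m}\gamma_{x,z}(g(\sk)) = \id_{2^{2m}}/2^{2m}$ regardless of the gadget structure, so once the classical strings $x$, $z$, $g(\sk)$ are hidden the quantum part of a gadget is just the completely mixed state and carries no information about $\sk$. Second, in $\SCHEME.\KeyGen$ the secret key $\sk_i$ occurs encrypted \emph{only} under the next public key $\pk_{i+1}$ (it is a component of $g(\sk_i)$, which is encrypted inside $\Gamma_{\pk_{i+1}}(\sk_i)$), the key $\sk_L$ occurs nowhere, and the one-time-pad keys $a,b$ are encrypted only under $\pk_0$. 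No key is encrypted under itself, so the key sets form a chain that can be peeled off from the top without any circular-security assumption.

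First I would prove a \emph{gadget-simulation lemma}: if $\pk'$ is a fresh key (whose secret key is used nowhere else), then $\Gamma_{\pk'}(\sk)$ is indistinguishable, for quantum polynomial-time distinguishers, from the \emph{dummy gadget} $\rho(\HE.\Enc_{\pk'}(0))\otimes\rho(\HE.\Enc_{\pk'}(0))\otimes \id_{2^{2m}}/2^{2m}$ (zeros of the appropriate lengths), which does not depend on $\sk$. This is a reduction to q-IND-CPA of $\HE$ under $\pk'$ --- using the multi-message variant, equivalent to the single-message one by~\cite{BJ15}, or two single-message steps: the classical CPA adversary draws a \emph{uniformly random} $(x,z)\in\{0,1\}^{2m}$, submits the plaintexts $g(\sk)$ and $(x,z)$, receives either their encryptions or encryptions of zero, appends the quantum state $\gamma_{x,z}(g(\sk))$ built from the \emph{same} $(x,z)$, and hands the result to the distinguisher. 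Averaging over the adversary's own choice of $(x,z)$ turns the ``real'' branch into $\Gamma_{\pk'}(\sk)$ and, by the mixing identity, the ``zero'' branch into the dummy gadget.

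Then I would define modified schemes $H_L, H_{L-1}, \dots, H_0, H_{-1}$, where $H_k$ is $\SCHEME$ with every gadget $\Gamma_{\pk_{j+1}}(\sk_j)$ with $j\ge k$ replaced by a dummy gadget, and $H_{-1}$ is $H_0$ with $\HE.\Enc_{\pk_0}(a)$, $\HE.\Enc_{\pk_0}(b)$ in the encryption also replaced by encryptions of $0$; thus $H_L=\SCHEME$. Indistinguishability of $\PubK{\advA,H_k}$ from $\PubK{\advA,H_{k-1}}$ is one application of the gadget-simulation lemma with $\pk'=\pk_k$: in $H_k$ the key $\sk_k$ is used nowhere (its gadget $\Gamma_{\pk_{k+1}}(\sk_k)$ is already dummified), so a reduction can generate every key set except $\pk_k$, obtain $(\pk_k,\evk_k)$ from the classical challenger, route the classical content of the one remaining real gadget $\Gamma_{\pk_k}(\sk_{k-1})$ through that challenger, assemble the rest of the (partly dummified) evaluation key and the public key, and \emph{itself} simulate the quantum CPA experiment --- picking the challenge bit, running $\advA_1$ and $\advA_2$, and doing the $\pk_0$-encryption --- so that its classical-CPA advantage equals half of $\bigl|\Pr[\PubK{\advA,H_k}=1]-\Pr[\PubK{\advA,H_{k-1}}=1]\bigr|$. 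The step $\PubK{\advA,H_0}$ to $\PubK{\advA,H_{-1}}$ is the standard one-time-pad argument: indistinguishability holds by q-IND-CPA of $\HE$ under $\pk_0$ (now fresh), and in $H_{-1}$ the encryption of any state $\rho$ equals $(\id_2/2)\otimes\rho(\HE.\Enc_{\pk_0}(0))^{\otimes 2}$ after averaging over the uniform pad keys, hence is independent of $\rho$ and of the challenge bit, so $\Pr[\PubK{\advA,H_{-1}}=1]=\tfrac12$ exactly. Telescoping the $L+1$ gaps, each negligible, and using that $L=\mathrm{poly}(\kappa)$, gives $\Pr[\PubK{\advA,\SCHEME}=1]\le \tfrac12+\mathrm{negl}(\kappa)$.

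I expect the main obstacle to be the gadget-simulation lemma --- arguing rigorously that $\gamma_{x,z}(g(\sk))$ leaks nothing about $\sk$ despite being built from $g(\sk)$. The key point is that the mixing identity must be realized by the \emph{reduction's own randomness} (its uniform choice of $(x,z)$, which doubles as the CPA challenge message), so that in the ``zero'' branch the distinguisher really sees a $\sk$-independent state; one must also be careful that the same $(x,z)$ is used for the challenge plaintext and for $\gamma_{x,z}$, and that the classical information that travels \emph{with} the gadget and later gets recrypted during evaluation is entirely contained in what the CPA challenger encrypts, so that no leftover dependence on $\sk$ escapes. The remaining ingredients --- that peeling the chain from the top keeps the current key fresh at each step, and that a sum of $\mathrm{poly}(\kappa)$ negligible gaps is negligible --- are routine.
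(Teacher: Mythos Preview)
Your proposal is correct and follows essentially the same hybrid argument as the paper: your hybrids $H_\ell$ coincide with the paper's schemes $\SCHEME^{(\ell)}$, and your gadget-simulation step is exactly the content of Lemma~\ref{lem:security-step1} (the reduction there also samples $x,z$ itself and routes $g(\sk_{\ell-1}),x,z$ through the classical multi-message CPA challenger under $\pk_\ell$). The only presentational difference is that you unpack the final step explicitly via an extra hybrid $H_{-1}$, whereas the paper stops at $\SCHEME^{(0)}$ and invokes the known q-IND-CPA security of $\CL$ from~\cite{BJ15}, which is the same one-time-pad-plus-classical-CPA argument you spell out.
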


In order to prove Theorem~\ref{thm:security}, we first prove that an efficient adversary's performance in the indistinguishability game is only negligibly different whether or not he receives a real evaluation key with real gadgets, or just a completely mixed quantum state with encryptions of 0's accompanying them (Corollary~\ref{cor:security}). Then we argue that without the evaluation key, an adversary does not receive more information than in the indistinguishability game for the scheme $\CL$, which has already been shown to be q-IND-CPA secure whenever $\HE$ is.

We start with defining a sequence of variations on the $\SCHEME$ scheme. For $\ell \in \{0, \dots, L\}$, let $\SCHEME^{(\ell)}$ be identical to $\SCHEME$, except for the key generation procedure: $\SCHEME^{(\ell)}.\KeyGen$ replaces, for every $i \geq \ell$, all classical information accompanying
the $i$th gadget with the all-zero string before encrypting it. For any number $i$, define the shorthand 
\[
g_i := g(\sk_i).  
\]
As seen in Section~\ref{sec:gadget}, the length of the classical information does not depend on $\sk_i$ itself, so a potential adversary cannot gain any information about $\sk_i$ just from this encrypted string. In summary,
\begin{align*}
\SCHEME^{(\ell)}.\KeyGen(1^\kappa, 1^L) &:= \bigotimes_{i=0}^{L-1} \proj{\evk_i} \otimes \bigotimes_{i = 0}^{\ell-1} \Gamma_{\pk_{i+1}}(\sk_i) \otimes 
                                                                                 \\
                                                     &\phantom{:= \ }\bigotimes_{i = \ell}^{L-1}  \Bigl(  \rho( \HE.\Enc_{\pk_{i+1}}(0^{|g_i|}) ) \otimes \\
                                                     &\phantom{:= \  \bigotimes_{i = \ell}^{L-1}  \Bigl( } \frac{1}{2^{2m}} \enskip \sum_{ \mathclap{ x, z \in \{0,1\}^m } } \, \rho(\HE.\Enc_{\pk_{i+1}}(0^{m},0^{m}) ) \otimes \gamma_{x,z}(g_i) \Bigr) \, . \\
\end{align*}

Intuitively, one can view $\SCHEME^{(\ell)}$ as the scheme that provides only $\ell$ usable gadgets in the evaluation key.
Note that $\SCHEME^{(L)} = \SCHEME$, and that in $\SCHEME^{(0)}$, only the classical evaluation keys remain, since without the encryptions of the classical $x$ and $z$, the quantum part
of the gadget is just the completely mixed state. That is, we can rewrite the final line of the previous equation as
\begin{align}
 \frac{1}{2^{2m}} \enskip \sum_{ \mathclap{ x, z \in \{0,1\}^m } } \,& \rho(\HE.\Enc_{\pk_{i+1}}(0^{m},0^{m}) ) \otimes \gamma_{x,z}(g_i) \nonumber \\
\\ &= \rho(\HE.\Enc_{\pk_{i+1}}(0^{m},0^{m}) ) \otimes \frac{\id_{2^{2m}} }{2^{2m}} \, \label{eq:mixedgadget}  .
\end{align}

With the definitions of the new schemes, we can lay out the steps to prove Theorem~\ref{thm:security} in more detail. First, we show that in the quantum CPA indistinguishability experiment, any efficient adversary interacting with $\SCHEME^{(\ell)}$ only has negligible advantage over an adversary interacting with $\SCHEME^{(\ell-1)}$, i.e.~the scheme where the classical information $g_{\ell-1}$ is removed (Lemma~\ref{lem:security-step1}). 
By iteratively applying this argument, we are able to argue that the advantage of an adversary who interacts with $\SCHEME^{(L)}$
over one who interacts with $\SCHEME^{(0)}$ is also negligible (Corollary~\ref{cor:security}).
Finally, we conclude the proof by arguing that $\SCHEME^{(0)}$ is q-IND-CPA secure by comparison to the $\CL$ scheme.

\begin{lemma}\label{lem:security-step1}
Let $0 < \ell \leq L$. If $\HE$ is q-IND-CPA secure, then for any quantum polynomial-time adversary $\advA = (\advA_1, \advA_2)$, there exists a negligible function $\eta$ such that
\[
\Pr[\PubK{\advA, \SCHEME^{(\ell)}} = 1] - \Pr[\PubK{\advA, \SCHEME^{(\ell-1)}} = 1] \leq \eta(\kappa).
\]
\end{lemma}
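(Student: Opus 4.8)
The plan is to prove this by a reduction to the q-IND-CPA security of the classical scheme $\HE$, using the observation that $\SCHEME^{(\ell)}$ and $\SCHEME^{(\ell-1)}$ differ only in how the classical information attached to the $(\ell-1)$th gadget is encrypted: in $\SCHEME^{(\ell)}$ the evaluation key contains $\rho(\HE.\Enc_{\pk_\ell}(g_{\ell-1})) \otimes \frac{1}{2^{2m}}\sum_{x,z}\rho(\HE.\Enc_{\pk_\ell}(x,z)) \otimes \gamma_{x,z}(g_{\ell-1})$, whereas in $\SCHEME^{(\ell-1)}$ the classical plaintexts $g_{\ell-1}$ and $(x,z)$ are all replaced by zero strings before encryption under $\pk_\ell$. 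Crucially, the \emph{quantum} register $\gamma_{x,z}(g_{\ell-1})$ is present in both schemes, and by the averaging identity~\eqref{eq:mixedgadget} (more precisely the identity $\frac{1}{2^{2m}}\sum_{x,z}\gamma_{x,z}(g(\sk)) = \id_{2^{2m}}/2^{2m}$ stated in Section~\ref{sec:gadget}), once we replace the encryptions of $x,z$ by encryptions of zeros this quantum register decouples and becomes the fixed completely mixed state $\id_{2^{2m}}/2^{2m}$, independent of $g_{\ell-1}$ and hence of $\sk_{\ell-1}$. So the only thing an adversary could exploit to distinguish the two schemes is the content of the ciphertexts under $\pk_\ell$.

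First I would build, from a supposed distinguisher $\advA = (\advA_1,\advA_2)$ for $\PubK{\advA,\SCHEME^{(\ell)}}$ vs.\ $\PubK{\advA,\SCHEME^{(\ell-1)}}$, an adversary $\advA' = (\advA'_1, \advA'_2)$ against the classical multi-message experiment $\PubKm{\advA',\HE}$. The reduction $\advA'$ receives the classical public key $\pk_\ell$ from the classical challenger. It then runs $\HE.\KeyGen$ itself $L$ more times to generate all the other key sets $(\pk_i,\sk_i,\evk_i)_{i\neq \ell}$; note that $\advA'$ knows $\sk_{\ell-1}$ (it generated it), so it can compute $g_{\ell-1} = g(\sk_{\ell-1})$ and sample fresh $x,z$. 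It submits the list of plaintexts $(g_{\ell-1}, x, z)$ to the classical challenger, which returns either encryptions of these values or encryptions of all-zero strings of the same lengths — this is exactly the difference between the $(\ell-1)$th gadget in $\SCHEME^{(\ell)}$ and in $\SCHEME^{(\ell-1)}$. Using these returned ciphertexts, together with the self-generated quantum state $\gamma_{x,z}(g_{\ell-1})$ and all the honestly-generated gadgets $\Gamma_{\pk_{i+1}}(\sk_i)$ for $i < \ell-1$ and the zeroed-out gadgets for $i \geq \ell$, $\advA'$ assembles a full $\SCHEME$-style evaluation key, hands $(\pk_0,\dots,\pk_L)$ and this evaluation key to $\advA_1$, and then faithfully plays the role of the $\SCHEME$-challenger for the rest of the game (choosing its own random bit $r$, applying $\Xi^{{\sf cpa},r}$ to $\advA_1$'s chosen message, running $\advA_2$), finally outputting $0$ or $1$ according to whether $\advA$'s guess $r'$ equals $r$. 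When the classical challenger encrypts the real plaintexts, $\advA$'s view is exactly $\PubK{\advA,\SCHEME^{(\ell)}}$; when it encrypts zeros, $\advA$'s view is exactly $\PubK{\advA,\SCHEME^{(\ell-1)}}$ (this is where the decoupling identity~\eqref{eq:mixedgadget} is needed, to see that the zeroed quantum register really is the completely mixed state of the $\SCHEME^{(\ell-1)}$ key). Hence $\advA'$'s distinguishing advantage in $\PubKm{\advA',\HE}$ equals the left-hand side of the lemma, up to constants; since $\HE$ is q-IND-CPA secure and $\PubKm{} \equiv \PubK{}$ for classical schemes (Broadbent--Jeffery), this advantage is negligible.

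I would then need to check two routine points. First, efficiency: $\advA'$ runs in quantum polynomial time because it only runs $\HE.\KeyGen$ a polynomial number of times ($L$ is polynomial in $\kappa$ by hypothesis), prepares polynomially many EPR pairs and Pauli/phase gates to build $\gamma_{x,z}$ (since $m = \mathrm{poly}(\kappa)$), and otherwise just runs $\advA_1$ and $\advA_2$ as black boxes. Second, the reduction must be careful that $\advA'$ needs \emph{none} of the secret keys that it does not possess — in particular it never needs $\sk_\ell$, and the only place a secret key is consumed in building a gadget $\Gamma_{\pk_{i+1}}(\sk_i)$ is as the \emph{plaintext} $\sk_i$ inside $g_i$, which for the critical index $i = \ell-1$ is handled by the classical challenger and for all other indices is known to $\advA'$. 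The main obstacle, and the only genuinely delicate step, is verifying that the view $\advA'$ hands to $\advA$ is \emph{identically} distributed to the appropriate $\PubK{}$ experiment in both branches — in particular arguing cleanly that when $x,z$ are encrypted as zeros, the joint state of (zeroed classical ciphertext, quantum register $\gamma_{x,z}(g_{\ell-1})$ averaged over the now-hidden $x,z$) collapses to (zeroed ciphertext, $\id_{2^{2m}}/2^{2m}$), matching $\SCHEME^{(\ell-1)}$ on the nose; this is exactly the content of~\eqref{eq:mixedgadget}, so it is really a matter of bookkeeping the registers correctly rather than a new idea. Everything else is a standard hybrid/reduction argument.
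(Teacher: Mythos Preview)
Your proposal is correct and follows essentially the same reduction as the paper: build an adversary $\advA'$ against $\PubKm{\advA',\HE}$ that generates all key sets except the $\ell$th one itself, samples $x,z$ and builds $\gamma_{x,z}(g_{\ell-1})$, hands $(g_{\ell-1},x,z)$ to the classical challenger for encryption under $\pk_\ell$, assembles the remaining gadgets (real for $i<\ell-1$, zeroed for $i\geq\ell$), and then internally plays the role of the $\SCHEME$ challenger for $\advA$, outputting $1$ iff $r'=r$. One small remark: the decoupling identity~\eqref{eq:mixedgadget} is not actually needed to match the $(\ell-1)$th slot to $\SCHEME^{(\ell-1)}$, since the definition of $\SCHEME^{(\ell-1)}.\KeyGen$ already keeps the mixture $\frac{1}{2^{2m}}\sum_{x,z}\gamma_{x,z}(g_{\ell-1})$ with zeroed ciphertexts in that slot; the paper invokes the identity only to justify that $\advA'$ can \emph{generate} the padding gadgets $i\geq\ell$ (where it lacks $\sk_\ell$) as completely mixed states, and you should make sure you also account for the evaluation key $\evk_\ell$ that the classical challenger supplies alongside $\pk_\ell$.
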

\begin{proof}
The difference between schemes $\SCHEME^{(\ell)}$ and $\SCHEME^{(\ell-1)}$ lies in whether the gadget state $\gamma_{x_{\ell-1},z_{\ell-1}}(g_{\ell-1})$ is supplemented with its classical information $\encrypted{g_{\ell-1}}, \encrypted{x_{\ell-1}},\encrypted{z_{\ell-1}}$, or just with an encryption of $0^{\lvert g_{\ell-1}\rvert + 2m}$.

Let $\advA = (\advA_1, \advA_2)$ be an adversary for the game $\PubK{\advA, \SCHEME^{(\ell)}}$. We will define an adversary $\advA' = (\advA'_1, \advA'_2)$ for $\PubKm{\advA', \HE}$ that will either simulate the game $\PubK{\advA, \SCHEME^{(\ell)}}$ or $\PubK{\advA,\SCHEME^{(\ell-1)}}$.
Which game is simulated will depend on some $s \in_R \{0,1\}$ that is unknown to $\advA'$ himself. Using the assumption that $\HE$ is q-IND-CPA secure, we are able to argue that $\advA'$ is unable to recognize which of the two schemes was simulated, and this fact allows us to bound the difference in success probabilities between the security games of $\SCHEME^{(\ell)}$ and $\SCHEME^{(\ell-1)}$. The structure of this proof is very similar to e.g.~Lemma~5.3 in \cite{BJ15}. The adversary $\advA'$ acts as follows (see also Figure~\ref{fig:security}):
\begin{description}
\item[$\advA'_1$] takes care of most of the key generation procedure: he generates the classical key sets 0 through $\ell-1$ himself, 
generates the random strings $x_0,z_0, \dots,$ $x_{\ell-1},z_{\ell-1}$, and constructs the gadgets $\gamma_{x_0,z_0}(g_0), \dots, \gamma_{x_{\ell-1},z_{\ell-1}}(g_{\ell-1})$ and their classical information $g_0, \dots, g_{\ell-1}$. He encrypts the classical information using the appropriate public keys. Only $g_{\ell-1}$, $x_{\ell-1}$ and $z_{\ell-1}$ are left unencrypted: instead of encrypting these strings himself using $\pk_{\ell}$, $\advA'_1$ sends the strings for encryption to the challenger. Whether the challenger really encrypts $g_{\ell-1}$, $x_{\ell-1}$ and $z_{\ell-1}$ or replaces the strings with a string of zeros, determines which of the two schemes is simulated. $\advA'$ is unaware of the random choice of the challenger.

The adversary $\advA'_1$ also generates the extra padding inputs that correspond to the already-removed gadgets $\ell$ up to $L-1$.
Since these gadgets consist of all-zero strings encrypted with independently chosen public keys that are not used anywhere else, together with a completely mixed quantum state (as shown in Equation~\ref{eq:mixedgadget}), the adversary can generate them without needing any extra information.
\item[$\advA'_2$] feeds the evaluation key and public key, just generated by $\advA'_1$, to $\advA_1$ in order to obtain a chosen message $\mathcal{M}$ (plus the auxiliary state $\mathcal{E}$). He then picks a random $r \in_R \{0,1\}$ and erases $\mathcal{M}$ if and only if $r = 0$. He encrypts the result according to the $\SCHEME.\Enc$ procedure (using the public key $(\pk_i)_{i=0}^{L}$ received from $\advA'_1$), and gives the encrypted state, plus $\mathcal{E}$, to $\advA_2$, who outputs $r'$ in an attempt to guess $r$. $\advA'_2$ now outputs 1 if and only if the guess by $\advA$ was correct, i.e.~$r \equiv r'$.
\end{description}

\begin{figure}[h]
\centering
\makebox[\textwidth][c]{\begin{tikzpicture}

%\def\meter(#1)(#2)
 % { \filldraw[fill=white] (9.75,9.8) rectangle (10.25,10.2);
%\draw (10.15,9.9) arc (45:135:0.25);
%\draw (10,9.9) -- (10.1,10.1); }

%keygen:
\draw(1,5.75) -- (5,5.75);
\draw(5.05,5.75) -- (9,5.75);
\draw(5,5.75) -- (5,6.25) -- (11.1,6.25);
\draw(5.05,5.75) -- (5.05,6.2) -- (11.1,6.2);
\draw(1,5.7) -- (9,5.7);
\node[anchor=south] at (1.6,5.7) {$\pk_\ell$};

\draw(1,-0.8) -- (11.1,-0.8);
\draw(1,-0.85) -- (11.1,-0.85);
\node[anchor=south] at (1.7,-0.85) {$\evk_\ell$};

\filldraw[fill=white] (0.6,-1) rectangle (1.2, 6);
\node[rotate=90] at (0.9,2.5) {$\HE.\KeyGen(1^{\kappa})$};

%own keygen:
\draw(2,0.25)   -- (5.7,0.25);
\draw(2,0.2) -- (5.7,0.2);
\node[anchor=south] at (2.7,0.2) {$\pk_i$};
\draw(2,0.95)   -- (5.7,0.95);
\draw(2,0.9) -- (5.7,0.9);
\node[anchor=south] at (2.7,0.9) {$\evk_i$};
\draw(2,2.45)   -- (3.5,2.45);
\draw(2,2.4) -- (3.5,2.4);
\node[anchor=south] at (2.7,2.4) {$\sk_i$};
\filldraw[fill=white] (1.7,0) rectangle (2.3,3.7);
\node[rotate=90] at (2,1.85) {$\HE.\KeyGen(1^{\kappa})$};

%randomness inside loop:
%\draw (2,4.2) -- (2.8,4.2) -- (2.8,3.2) -- (3.5,3.2);
%\draw (2,4.15) -- (2.75,4.14) -- (2.75, 3.15) -- (3.5,3.15);
\draw (2,4.2) -- (3.5,4.2);
\draw (2,4.15) -- (3.5,4.15);
\filldraw[fill=white] (1.7,3.9) rectangle (2.3,4.5);
\node at (2,4.2) {$\$$};
\node[anchor=south] at (2.7,4.15) {$x_i,z_i$};

%gadget construction:
\draw(4,1.7) -- (5.7,1.7);
\node[anchor=south] at (4.85,1.7) {$\gamma_{x_i,z_i}(g_i)$};
\draw(4,3.15) -- (5.7,3.15);
\draw(4,3.1) -- (5.7,3.1);
\node[anchor=south] at (4.7,3.1) {$g_i,x_i,z_i$};

\filldraw[fill=white] (3.1,1.4) rectangle (4,4.5);
\node[rotate=90] at (3.55, 3) {build gadget};
%\node at (3.9,2.65) {construct};
%\node at (3.9,2.15) {gadget};

%loop:
\draw (1.5,-0.2) rectangle (5.7,4.85);
\draw (5.9,-0.35) -- (5.9,4.7);
\draw(5.7,4.85) -- (5.9,4.7);
\draw(5.7,-0.2) -- (5.9,-0.35);
\draw(1.7,-0.35) -- (5.9, -0.35);
\draw(1.5,-0.2) -- (1.7,-0.35);
\node[anchor=south] at (3.1,4.8) {for $i = 0$ to $\ell-1$ do:};

%loop results:
\draw (5.9,4.35) -- (11.1,4.35);
\draw (5.9,4.3) -- (11.1,4.3);
\node[anchor=south] at (7.1,4.3) {$g_{\ell-1}, x_{\ell-1}, z_{\ell-1}$};

\draw (5.9,3.75) -- (11.1,3.75);
\draw (5.9,3.7) -- (11.1,3.7);
\node[anchor=south] at (7.1,3.7) {$g_{\ell-2},x_{\ell-2},z_{\ell-2}$};

\node at (6.2,3.5) {$\vdots$};

\draw (5.9,2.75) -- (11.1,2.75);
\draw (5.9,2.7) -- (11.1,2.7);
\node[anchor=south] at (6.6,2.7) {$g_0,x_0,z_0$};

\draw (5.9,1.55) -- (11.1,1.55);
\node[anchor=south] at (7.2,1.55) {$\bigotimes_{i=0}^{\ell-1} \gamma_{x_i,z_i}(g_i)$};

\draw (5.9,0.8) -- (11.1, 0.8);
\draw (5.9,0.75) -- (11.1,0.75);
\node[anchor=south] at (6.8,0.75) {$(\evk_i)_{i=0}^{\ell-1}$};

\draw (5.9,0.1) -- (11.1, 0.1);
\draw (5.9,0.05) -- (11.1,0.05);
\node[anchor=south] at (6.7,0.05) {$(\pk_i)_{i=0}^{\ell-1}$};

%encrypt:
\filldraw[fill=white] (8.3,2.5) rectangle (10.3,3);
\node at (9.1,2.75) {$\HE.\Enc_{\pk_1}$};

\filldraw[fill=white] (8.3,3.5) rectangle (10.3,4);
\node at (9.3,3.75) {$\HE.\Enc_{\pk_{\ell-1}}$};

%xi test:
\draw (7.8,5) -- (8.8,5);
\draw (7.8,5.05) -- (8.8,5.05);
\node at (7,5) {$0^{\lvert g_{\ell-1} \rvert + 2m}$};
\filldraw[fill=white] (8.3,4.15) rectangle (10.3,6);
\node at (9.3,5) {$\Xi_{\HE}^{\mathsf{cpa-mult}, s}$};
%\node[anchor=south] at (9.7,4.3) {$c$};

%padding:
\draw (9,-1.3) -- (11.3,-1.3);
\draw (9,-1.35) -- (11.3,-1.35);
\draw (9,-1.65) -- (11.3,-1.65);
\filldraw[fill=white] (7.5,-1.8) rectangle (10,-1.1);
\node at (8.75,-1.45) {create padding};

%advA1:
\draw (11.1,4) -- (13,4);
\node[anchor=south] at (11.7,4) {$\mathcal{M}$};

\draw (11.1, 2) -- (13,2);
\node[anchor=south] at (11.7,2) {$\mathcal{E}$};
\filldraw[fill=white] (10.8,-2) rectangle (11.4,6.5);
\node at (11.1,2.25) {$\advA_1$};

%erase message:
\draw(11.2,7.05) -- (12.2,7.05);
\draw(11.2,7) -- (12.2,7);
\node[anchor=south] at (11.6,7) {$r$};
\filldraw[fill=white] (10.8,6.7) rectangle (11.4,7.3);
\node at (11.1,7) {$\$$};

\filldraw[fill=white] (12,3.8) rectangle (12.7,7.3);
\node[rotate=90] at (12.35,5.55) {$\Xi^{\mathsf{cpa},r}_{\SCHEME}$};

%erase message:
%\draw (11.3,3.6) rectangle (12.5,4.4);
%\meter{11.6}{4}
%\node at (12.2,4) {$\ket0$};

%\draw(11.9,5.75) -- (11.9,4.8);
%\draw(11.95,5.75) -- (11.95,4.8);
%\node[anchor=west] at (11.9,5.2) {$r$};
%\node[anchor=south] at (11.9,4.35) {if $r = 0$:};
%\filldraw[fill=white] (11.6,5.5) rectangle (12.2,6);
%\node at (11.9,5.75) {$\$$};

%q-encrypt:
%\filldraw[fill=white] (12.7,3.7) rectangle (14.3,4.3);
%\node at (13.5,4) {$\SCHEME.\Enc_{\pk}$};

%advA2:
\draw (13.3,3.05) -- (13.8,3.05);
\draw (13.3,3) -- (13.8,3);
\node[anchor=west] at (13.8,3.1) {$r'$};
\filldraw[fill=white] (12.9,1.8) rectangle (13.5,4.3);
\node at (13.2,3.05) {$\advA_2$};

%adv':
\draw[dashed] (1.3,-2) -- (1.3,5.5) -- (8.2,5.5) -- (8.2,4.075) -- (10.4,4.075) -- (10.4,-2) -- (1.3,-2);
\node[anchor=north] at (5,-2) {$\advA'_1$};

\draw[dashed] (10.7,-2.2) -- (10.7,7.5) -- (14.3,7.5) -- (14.3,-2.2) -- (10.7,-2.2);
\node[anchor=north] at (13,-2.2) {$\advA'_2$};

%final answer:
\draw (14.3,3.05) -- (14.6,3.05);
\draw (14.3,3) -- (14.6,3);
\node[anchor=west] at (14.6,3.3) {$s' :=$};
\node[anchor=west] at (14.6,2.9) {$(r \equiv r')$};

%\filldraw[fill=white] (6.7,4.75) rectangle (7.3,5.25);

%\filldraw[fill=white] (8.5,4) rectangle (10.5,4.5);
%\node at (9.3,4.25) {$\HE.\Enc_{\pk_{L}}$};

%MIDDLE = 10,10
%\filldraw[fill=white] (9.75,9.8) rectangle (10.25,10.2);
%\draw (10.15,9.9) arc (45:135:0.25);
%\draw (10,9.9) -- (10.1,10.1);

%inside throw gadgets away:
%throw gadgets away (box):
%\node[anchor=south] at (6.6,2.45) {iff $i \geq k$:};
%\filldraw[fill=white] (6,10) rectangle (8,11.5);
%\draw (6,11) -- (6.5,11);
%\draw (7.5,11) -- (8,11);
%\node[anchor=east] at (7.5,11) {$\id_d$};
%\meter{6.5}{11}
%\draw (6,10.25) -- (6.5,10.25);
%\draw (6,10.2) -- (6.5,10.2);
%\draw (6.5,10.1) -- (6.5,10.35);
%\draw (7.5,10.25) -- (8,10.25);
%\draw(7.5,10.2)    -- (8,10.2);
%\node[anchor=east] at (7.6,10.25) {$0^{\lvert g_i \rvert}$};

\end{tikzpicture}}
\caption{A strategy for the game $\PubKm{\advA', \HE}$, using an adversary $\advA$ for  $\PubK{\advA,\SCHEME^{(\ell)}}$ as a subroutine. All the wires that form an input to $\advA_1$ together form the evaluation key and public key for $\SCHEME^{(\ell)}$ or $\SCHEME^{(\ell-1)}$, depending on $s$.
Note that $\Xi^{\mathsf{cpa},r}_{\SCHEME} = \Xi^{\mathsf{cpa},r}_{\SCHEME^{(\ell)}} = \Xi^{\mathsf{cpa},r}_{\SCHEME^{(\ell-1)}}$, so $\advA'_2$ can run either one of these independently of $s$ (i.e.~without having to query the challenger). The `create padding' subroutine generates dummy gadgets for $\ell$ up to $L-1$, as described in the definition of $\advA_1$.}
\label{fig:security}
\end{figure}

\noindent Because $\HE$ is q-IND-CPA secure, the probability that $\advA'$ wins $\PubKm{\advA', \HE}$, i.e.~that $s' \equiv s$, is at most $\frac{1}{2} + \eta'(\kappa)$ for some negligible function $\eta'$. There are two scenarios in which $\advA'$ wins the game:
\begin{itemize}
\item $s = 1$ \emph{and} $\advA$ guesses $r$ correctly: If $s = 1$, the game that is being simulated is $\PubK{\advA, \SCHEME^{(\ell)}}$. If $\advA$ wins the simulated game ($r \equiv r'$), then $\advA'$ will correctly output $s' = 1$. (If $\advA$ loses, then $\advA'$ outputs 0, and loses as well).
\item $s = 0$ \emph{and} $\advA$ does not guess $r$ correctly: If $s = 0$, the game that is being simulated is $\PubK{\advA, \SCHEME^{(\ell-1)}}$. If $\advA$ loses the game ($r \not\equiv r'$), then $\advA'$ will correctly output $s' = 0$. (If $\advA$ wins, then $\advA'$ outputs 1 and loses).
\end{itemize}
From the above, we conclude that
\begin{outdent}
\begin{alignat*}{4}
&& \Pr[s=1]\cdot \Pr[\PubK{\advA, \SCHEME^{(\ell)}} = 1] &+ \Pr[s=0]\cdot \Pr[\PubK{\advA, \SCHEME^{(\ell-1)}} = 0] &&\leq \frac{1}{2} + \eta'(\kappa)\\
\Leftrightarrow && \frac{1}{2} \Pr[\PubK{\advA, \SCHEME^{(\ell)}} = 1] &+ \frac{1}{2} \left(1 - \Pr[\PubK{\advA, \SCHEME^{(\ell-1)}} = 1]\right) &&\leq \frac{1}{2} + \eta'(\kappa)\\
\Leftrightarrow && \Pr[\PubK{\advA, \SCHEME^{(\ell)}} = 1] &- \Pr[\PubK{\advA, \SCHEME^{(\ell-1)}} = 1] && \leq 2\eta'(\kappa)
\end{alignat*}
\end{outdent}
Set $\eta(\kappa) := 2\eta'(\kappa)$, and the proof is complete.
\end{proof}

By applying Lemma~\ref{lem:security-step1} iteratively, $L$ times in total, we can conclude that the difference between $\SCHEME^{(L)}$ and $\SCHEME^{(0)}$ is negligible, because the sum of polynomially many negligible functions is still negligible:
\begin{corollary}\label{cor:security}
If $L$ is polynomial in $\kappa$, then for any quantum polynomial-time adversary $\advA = (\advA_1, \advA_2)$, there exists a negligible function $\eta$ such that
\[
\Pr[\PubK{\advA, \SCHEME^{(L)}} = 1] - \Pr[\PubK{\advA, \SCHEME^{(0)}} = 1] \leq \eta(\kappa).
\]
\end{corollary}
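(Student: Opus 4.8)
The plan is a routine hybrid (telescoping) argument over the chain of schemes $\SCHEME^{(0)}, \SCHEME^{(1)}, \dots, \SCHEME^{(L)}$. Fix a quantum polynomial-time adversary $\advA = (\advA_1, \advA_2)$. First I would write the quantity to be bounded as a telescoping sum:
\[
\Pr[\PubK{\advA, \SCHEME^{(L)}} = 1] - \Pr[\PubK{\advA, \SCHEME^{(0)}} = 1] = \sum_{\ell = 1}^{L} \Bigl( \Pr[\PubK{\advA, \SCHEME^{(\ell)}} = 1] - \Pr[\PubK{\advA, \SCHEME^{(\ell-1)}} = 1] \Bigr).
\]
Each summand is precisely the difference bounded in Lemma~\ref{lem:security-step1}. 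Applying that lemma to the fixed $\advA$ for every $\ell \in \{1, \dots, L\}$ produces negligible functions $\eta_1, \dots, \eta_L$ with $\Pr[\PubK{\advA, \SCHEME^{(\ell)}} = 1] - \Pr[\PubK{\advA, \SCHEME^{(\ell-1)}} = 1] \le \eta_\ell(\kappa)$, so the left-hand side is at most $\sum_{\ell = 1}^{L} \eta_\ell(\kappa)$.

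The remaining step is to certify that $\eta(\kappa) := \sum_{\ell=1}^{L(\kappa)} \eta_\ell(\kappa)$ is itself negligible, and this is the only point that deserves care: since $L = L(\kappa)$ grows with the security parameter, we are adding up a growing number of a priori distinct negligible functions, and ``a sum of negligibles is negligible'' is only safe when the number of terms is polynomially bounded \emph{and} the functions are obtained uniformly. Both conditions hold here — $L$ is polynomial in $\kappa$ by hypothesis, and the reduction used inside Lemma~\ref{lem:security-step1} is syntactically the same for every level $\ell$ — so the standard fact applies and $\eta$ is negligible; taking this $\eta$ completes the proof. If one prefers to make the uniformity explicit, I would instead unfold the proof of Lemma~\ref{lem:security-step1} once into a single adversary $\advA'$ against $\PubKm{\advA', \HE}$ that samples a uniformly random level $\ell \in_R \{1, \dots, L\}$, simulates the level-$\ell$ reduction (generating the gadgets for $i < \ell$ honestly, forwarding $g_{\ell-1}, x_{\ell-1}, z_{\ell-1}$ to the challenger, and padding the levels $i \ge \ell$ with completely mixed states and encryptions of zeros as in Equation~\ref{eq:mixedgadget}), and then runs the same win/lose case analysis; its advantage equals $\tfrac{1}{2L}$ times the telescoping sum, whence the sum is at most $2L$ times a negligible quantity, i.e.\ negligible.

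I do not expect a genuine obstacle: the content of the corollary is exhausted by the telescoping identity together with Lemma~\ref{lem:security-step1}. The subtlety to watch is exactly the one above — justifying the ``polynomially many negligibles'' step via a single uniform reduction (or an equivalently uniform restatement of the lemma) rather than an unbounded family of independent ones — plus the trivial bookkeeping, already recorded in the text, that $\SCHEME^{(L)} = \SCHEME$ and that $\SCHEME^{(0)}$ contributes no usable gadget information beyond the classical evaluation keys $\evk_i$.
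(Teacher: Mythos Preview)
Your proposal is correct and mirrors the paper's own argument exactly: the paper simply telescopes over $\ell = 1, \dots, L$, invokes Lemma~\ref{lem:security-step1} at each step, and concludes by noting that a sum of polynomially many negligible functions is negligible. Your extra care about uniformity (and the optional single random-level adversary) goes beyond what the paper spells out but is consistent with it.
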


Using Corollary~\ref{cor:security}, we can finally prove the q-IND-CPA security of our scheme $\SCHEME = \SCHEME^{(L)}$.

\begin{proof}[Proof of Theorem~\ref{thm:security}]
The scheme $\SCHEME^{(0)}$ is very similar to $\CL$ in terms of its key generation and encryption steps. The evaluation key consists of several classical evaluation keys, plus some completely mixed states and encryptions of 0 which we can safely ignore because they do not contain any information about the encrypted message. In both schemes, the encryption of a qubit is a quantum one-time pad together with the encrypted keys. The only difference is that in $\SCHEME^{(0)}$, the public key and evaluation key form a tuple containing, in addition to $\pk_0$ and $\evk_0$ which are used for the encryption of the quantum one-time pad, a list of public/evaluation keys that are independent of the encryption. These keys do not provide any advantage (in fact, the adversary could have generated them himself by repeatedly running $\HE.\KeyGen(1^{\kappa}, 1^L)$). Therefore, we can safely ignore these keys as well.

In \cite[Lemma 5.3]{BJ15}, it is shown that $\CL$ is q-IND-CPA secure. Because of the similarity between $\CL$ and $\SCHEME^{(0)}$, the exact same proof shows that $\SCHEME^{(0)}$ is q-IND-CPA secure as well, that is, for any $\advA$ there exists a negligible function $\eta'$ such that
\[
\Pr[\PubK{\advA, \SCHEME^{(0)}} = 1] \leq \frac{1}{2} + \eta'(\kappa).
\]
Combining this result with Corollary~\ref{cor:security}, it follows that
\begin{align*}
\Pr[\PubK{\advA, \SCHEME} = 1] &\leq \Pr[\PubK{\advA, \SCHEME^{(0)}} = 1] + \eta(\kappa)\\
&\leq \frac{1}{2} + \eta'(\kappa) + \eta(\kappa).
\end{align*}
Since the sum of two negligible functions is itself negligible, we have proved Theorem~\ref{thm:security}.
\end{proof}

\subsection{Circuit privacy}\label{sec:circuit-privacy}
The scheme \SCHEME as presented above ensures the privacy of the input data. It does not guarantee, however, that whoever generates the keys, encrypts, and decrypts cannot gain information about the circuit $\C$ that was applied to some input $\rho$ by the evaluator. Obviously, the output value $\C\rho\C^{\dag}$  often reveals something about the circuit $\C$, but apart from this necessary leakage of information, one may require a (quantum) homomorphic encryption scheme to ensure \emph{circuit privacy} in the sense that an adversary cannot statistically gain any information about $\C$ from the output of the evaluation procedure that it could not already gain from $\C\rho\C^\dag$ itself.

We claim that circuit privacy for \SCHEME in the semi-honest setting (i.e.\ against passive adversaries\footnote{Note that there various ways to define passive adversaries in the quantum setting~\cite{DNS10,BB14}. Here, we are considering adversaries that follow all protocol instructions exactly.}) can be obtained by modifying the scheme only slightly, given that the classical encryption scheme has the circuit privacy property.

\begin{restatable}{theorem}{circuitprivacy}
\label{thm:circuit-privacy}
If $\HE$ has circuit privacy in the semi-honest setting, then $\SCHEME$ can be adapted to a quantum homomorphic encryption scheme with circuit privacy.
\end{restatable}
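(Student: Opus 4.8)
The plan is to turn the informal remark from Section~\ref{sec:contributions} into a proof: the evaluator re-randomizes its output with a fresh quantum one-time pad, and the circuit-privacy property of $\HE$ then takes care of the accompanying classical ciphertexts. \emph{The modification.} After finishing the homomorphic evaluation of a circuit $\C$ with at most $L$ $\T$ gates and recrypting all key material into the final key set $L$, the evaluator holds, for each output wire $w$, a state $\X^{a_w}\Z^{b_w}\sigma_w\Z^{b_w}\X^{a_w}$, where $\sigma_w$ is the corresponding marginal of $\C\rho\C^{\dagger}$, together with the classical ciphertexts $\encrypted[L]{a_w}$ and $\encrypted[L]{b_w}$. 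The modified $\SCHEME.\Eval$ now draws fresh uniform $a_w',b_w'\in\{0,1\}$, applies $\X^{a_w'}\Z^{b_w'}$ to wire $w$, and homomorphically replaces $\encrypted[L]{a_w},\encrypted[L]{b_w}$ by $\encrypted[L]{a_w\oplus a_w'},\encrypted[L]{b_w\oplus b_w'}$ (a homomorphic XOR with the constants $a_w',b_w'$, which the evaluator knows). Correctness is immediate, since $\SCHEME.\Dec$ simply removes $\X^{a_w\oplus a_w'}\Z^{b_w\oplus b_w'}$; compactness and q-IND-CPA security are untouched, as this is only post-processing of the evaluator's output.

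\emph{Circuit privacy.} I would exhibit a simulator $\Sim$ that, given $\C\rho\C^{\dagger}$ and the public key $(\pk_i)_{i=0}^{L}$, produces for each output wire a freshly one-time-padded $\X^{\hat a_w}\Z^{\hat b_w}\sigma_w\Z^{\hat b_w}\X^{\hat a_w}$ with uniform $\hat a_w,\hat b_w$, accompanied by the output of the classical circuit-privacy simulator of $\HE$ run on $(\hat a_w,\pk_L)$ and $(\hat b_w,\pk_L)$. The claim is that the state output by the modified $\SCHEME.\Eval$ is statistically close to $\Sim(\C\rho\C^{\dagger},(\pk_i)_i)$, and that this holds even relative to the decrypting party's full view, i.e.\ conditioned on all secret keys $(\sk_i)_i$ and the (pre-$\C$, and in any case consumed) evaluation key. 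The argument splits the output into its quantum and classical parts. For the quantum part: conditioned on any fixed value of $(a_w,b_w)$ — a value that depends on $\C$ and on all of the scheme's and the evaluator's randomness — the re-randomized keys $(a_w\oplus a_w',b_w\oplus b_w')$ are uniform and independent of that value, hence identically distributed to $(\hat a_w,\hat b_w)$, and conditioned on the key value the quantum state is literally the same expression in both worlds. For the classical part: every classical ciphertext the decrypting party ever receives, namely $\encrypted[L]{a_w\oplus a_w'}$ and $\encrypted[L]{b_w\oplus b_w'}$, is the output of one classical homomorphic evaluation applied to honestly generated inputs — the ciphertexts from $\SCHEME.\Enc$, the encrypted gadget data $\HE.\Enc_{\pk_{i+1}}(g(\sk_i))$ and $\HE.\Enc_{\pk_{i+1}}(x,z)$ and the encrypted old secret keys $\encrypted[i+1]{\sk_i}$ contained in the evaluation key, and the evaluator's freshly encrypted measurement outcomes — all composed with the final XOR. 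By correctness this composed computation outputs exactly $(a_w\oplus a_w',b_w\oplus b_w')$, so by circuit privacy of $\HE$ its output ciphertexts are statistically close to the classical simulator's output on those plaintexts, with a distribution that no longer depends on $\C$. A triangle inequality over the polynomially many output wires then yields statistical closeness of the full states.

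\emph{Main obstacle.} The delicate step is the classical part: one must argue that the whole classical side-computation — all the per-gate key updates together with the entire chain of recryptions $\HE.\Rec_{i\to i+1}$ across the distinct key sets $0,\dots,L$ — can legitimately be viewed as a single homomorphic evaluation of one (honestly chosen) circuit on honestly generated inputs, so that the classical circuit-privacy guarantee can be invoked in one shot and ``washes out'' the entire computational history, including the number of $\T$ gates actually used. It does not suffice to apply circuit privacy only to the final XOR, because the ciphertext that XOR acts on is itself an evaluated (hence potentially history-revealing) ciphertext rather than a fresh encryption. A secondary point is to fix the right notion of circuit privacy for $\QHE$ in the divisible, semi-honest setting — in particular that statistical closeness must hold conditioned on the decrypting party's view (which contains all the $\sk_i$), and that nothing $\C$-dependent leaks outside the final output, which holds because the gadgets and the evaluation key are produced before $\C$ is known and are consumed during evaluation.
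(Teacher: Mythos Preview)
Your proposal is correct and follows essentially the same approach as the paper's proof in Appendix~\ref{app:circuit-privacy}: the same re-randomization step, the same simulator built from $\Sim_{\HE}$ on fresh one-time-pad keys, and you correctly pinpoint the one delicate step --- handling the chain of recryptions across the key sets $0,\dots,L$ --- which the paper isolates as Lemma~\ref{lem:circuit-privacy-keyswitch}. The paper's resolution of your ``main obstacle'' is not to view the whole chain as one evaluation, but to apply classical circuit privacy once to the outermost $\HE.\Eval_{\evk_L}$ and then use correctness of $\HE$ to collapse the inner layers, yielding $\Sim_{\HE}(1^\kappa,\pk_L,\evk_L,\cdot)$ on the final plaintext.
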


\begin{proof-sketch}
If the evaluator randomizes the encryption of the output data by applying a quantum one-time pad to the (already encrypted) result of the evaluation, the keys themselves are uniformly random and therefore do not reveal any information about what circuit was evaluated. The evaluator can then proceed to update the classical encryptions of those keys accordingly, and by the circuit privacy of the classical scheme, the resulting encrypted keys will also contain no information about the computations performed. A more thorough proof is given in Appendix~\ref{app:circuit-privacy}.
\end{proof-sketch}

\section{Constructing the gadgets}\label{sec:gadget-construction}
In this section we will first show how to construct gadgets that have polynomial size whenever the scheme $\HE$ has a decryption circuit with logarithmic depth (i.e., the decryption function is in $\NC^1$). 
This construction will already be powerful enough to instantiate $\SCHEME$ with current classical schemes for homomorphic encryption,
since these commonly have low-depth decryption circuits. 
Afterwards, in Section~\ref{sec:gadget-construction-logspace}, we will present a larger toolkit to construct gadgets, which is  efficient for a larger class of possible decryption functions.
To illustrate these techniques, we apply these tools to create
gadgets for schemes that are based on Learning With Errors (LWE).
Finally, we will reflect on the possibility of constructing these gadgets in scenarios where quantum power is limited.

\subsection{For log-depth decryption circuits}\label{sec:barrington}
The main tool for creating gadgets that encode log-depth decryption circuits comes from Barrington's theorem:
a classic result in complexity theory, which states that all boolean circuits of logarithmic depth can be encoded as polynomial-sized width-5 permutation branching programs. Every instruction of such a branching program will be encoded as connections between five Bell pairs.

\begin{definition}
 A \emph{width-$k$ permutation branching program} of length $L$ on an input $x \in \{0,1\}^n$ is a list of $L$ instructions of the form $\langle i_\ell, \sigma^1_\ell,\sigma^0_\ell \rangle$, for $1 \leq \ell \leq L$, such that
 $i_\ell \in [n]$, and $\sigma^1_\ell$ and $\sigma^0_\ell$ are elements of $S_{k}$, i.e., permutations of $[k]$. 
 The program is executed by composing the permutations given by the instructions 1 through $L$, selecting $\sigma^1_\ell$ if $x_{i_\ell} = 1$ and selecting $\sigma^0_\ell$ if $x_{i_\ell} = 0$.
 The program \emph{rejects} if this product equals the identity permutation and \emph{accepts} if it equals a fixed $k$-cycle.
\end{definition}

Since these programs have a very simple form, it came as a surprise when they were proven to be quite powerful~\cite{Bar89}.

\begin{theorem}[Barrington~\cite{Bar89}]
Every fan-in 2 boolean circuit~$C$ of depth~$d$ can be simulated by a width-5 permutation branching program
of length at most~$4^d$.
\end{theorem}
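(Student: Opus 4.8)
The plan is to prove Barrington's theorem by induction on the depth $d$ of the circuit $C$, strengthening the statement so that it can carry through the inductive step. The key observation is that the naive induction hypothesis (``$C$ can be simulated by a short branching program'') is too weak, because composing programs for subcircuits does not obviously give a program whose output is again either the identity or a fixed $k$-cycle. So first I would set up the stronger invariant: for any $5$-cycle $\tau \in S_5$, the circuit $C$ can be $\tau$-simulated by a width-$5$ permutation branching program of length at most $4^d$, meaning the program evaluates to the identity when $C$ outputs $0$ and to $\tau$ when $C$ outputs $1$. It suffices to handle a circuit built from $\mathsf{NOT}$ and fan-in-$2$ $\mathsf{AND}$ gates (or $\mathsf{NAND}$), since an arbitrary fan-in-$2$ boolean circuit of depth $d$ can be rewritten in terms of these with at most a constant-factor blowup in depth that the $4^d$ bound absorbs.

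The base case is $d = 0$: the circuit is a single input variable $x_i$, and the length-$1$ program $\langle i, \tau, \mathrm{id}\rangle$ does the job for any target $5$-cycle $\tau$. For the inductive step I would treat the gates at the output. For a $\mathsf{NOT}$ gate with subcircuit $C'$ of depth $d-1$: by induction $\tau$-simulate $C'$ with a program $P'$, then post-compose with the constant permutation $\tau^{-1}$ (realizable by appending an instruction whose two permutations are both $\tau^{-1}$, using any dummy input); the product becomes $\mathrm{id}$ exactly when $C'$ outputs $1$, i.e.\ when $\mathsf{NOT}\,C'$ outputs $0$ — up to relabeling this $\tau^{-1}$-simulates the output, and a final conjugation/reparametrization gives a $\tau$-simulation. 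The crucial and harder case is the $\mathsf{AND}$ gate on subcircuits $C_1, C_2$. Here I would use the commutator trick: pick two $5$-cycles $\sigma_1,\sigma_2 \in S_5$ whose commutator $[\sigma_1,\sigma_2] = \sigma_1\sigma_2\sigma_1^{-1}\sigma_2^{-1}$ is again a $5$-cycle (such a pair exists in $S_5$ — this is the group-theoretic fact that makes width $5$ work, and where width $4$ or less fails). By the induction hypothesis, $\sigma_1$-simulate $C_1$ with a program $P_1$ and $\sigma_2$-simulate $C_2$ with $P_2$; also form $\sigma_1^{-1}$- and $\sigma_2^{-1}$-simulating programs $P_1',P_2'$ (obtained from $P_1,P_2$ by replacing each instruction's permutations with their inverses and reversing the order). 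Concatenating $P_1 P_2 P_1' P_2'$ yields a program computing the commutator of the outputs: it evaluates to $[\sigma_1,\sigma_2]$ when both $C_1$ and $C_2$ output $1$, and to $\mathrm{id}$ otherwise (since if either output is $0$ the corresponding factors cancel). Since $[\sigma_1,\sigma_2]$ is a $5$-cycle, after one more reparametrization (conjugate so the target $5$-cycle is the desired $\tau$ — conjugating a whole program by a fixed permutation just conjugates its output and preserves the ``identity vs.\ $5$-cycle'' dichotomy) we get a $\tau$-simulation of $C_1 \wedge C_2$.

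For the length bound: in the $\mathsf{AND}$ case the new length is $|P_1| + |P_2| + |P_1'| + |P_2'| = 2(|P_1| + |P_2|) \le 2(4^{d-1} + 4^{d-1}) = 4^d$; the $\mathsf{NOT}$ case costs only one extra instruction and is dominated by this bound (one folds the additive constant into the estimate, or observes $\mathsf{NOT}$ can be pushed to the inputs / absorbed). This closes the induction. I expect the main obstacle — and the conceptual heart of the proof — to be the $\mathsf{AND}$ step: one must exhibit the explicit pair of $5$-cycles with $5$-cycle commutator and verify carefully that the four-fold concatenation with inverted/reversed programs implements exactly the commutator of the Boolean outputs, including checking the degenerate cases where one subcircuit outputs $0$. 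The reparametrization bookkeeping (tracking which $5$-cycle each program targets and conjugating to realign) is routine but must be stated precisely so the induction hypothesis applies verbatim to each subcircuit. Everything else — the base case, the depth-to-$\mathsf{NAND}$ normalization, the length arithmetic — is straightforward.
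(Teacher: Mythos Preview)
The paper does not prove this theorem at all: it is quoted as a classical result of Barrington~\cite{Bar89} and used as a black box in the gadget construction of Theorem~\ref{thm:gadget-logdepth}. So there is no ``paper's own proof'' to compare against.

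That said, your proposal is a correct outline of Barrington's original argument. The strengthening to $\tau$-simulation for an arbitrary $5$-cycle $\tau$, the commutator construction for $\mathsf{AND}$ using a pair of $5$-cycles in $S_5$ whose commutator is again a $5$-cycle, and the resulting length recurrence $L(d) \le 4\,L(d-1)$ are exactly the standard ingredients. One small point worth tightening: the way you handle $\mathsf{NOT}$ by appending a constant instruction would give length $4^{d}+1$ rather than $4^d$; the clean fix is either to absorb the post-multiplication by $\tau^{-1}$ into the final instruction of the program (replacing $\langle i,\sigma^1,\sigma^0\rangle$ by $\langle i,\sigma^1\tau^{-1},\sigma^0\tau^{-1}\rangle$ at no extra length), or to normalize the circuit so that negations appear only at the leaves and are handled in the base case. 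Your parenthetical ``absorbed'' is the right idea --- just make it explicit so the $4^d$ bound is exact.
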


Our gadget construction will consist of first transforming the decryption function $\HE.\Dec$ into a permutation branching program, and then
encoding this permutation branching program as a specification of a gadget,
as produced by $\SCHEME.\GenGadget_{\pk'}(\sk)$, and
 usage instructions $\SCHEME.\GenMeasurement(\encrypted{a})$.

\begin{theorem}\label{thm:gadget-logdepth}
Let $\HE.\Dec_{\sk}(\encrypted{a})$ be the decryption function of the classical homomorphic encryption scheme $\HE$. If $\HE.\Dec$ is computable by a boolean fan-in 2
circuit of depth $O(\log(\kappa))$, where $\kappa$ is the security parameter,
then there exist gadgets for $\SCHEME$ of size polynomial in $\kappa$.
\end{theorem}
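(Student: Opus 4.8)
The plan is to convert the decryption function into a branching program via Barrington's theorem, and then encode that branching program directly into the pair structure and phase string that define a gadget $\Gamma_{\pk'}(\sk)$ as specified in Section~\ref{sec:gadget}. First I would invoke Barrington's theorem on the depth-$O(\log \kappa)$ circuit for $\HE.\Dec_{\sk}$: since its depth is $d = O(\log \kappa)$, we obtain a width-$5$ permutation branching program of length $L' \leq 4^d = \mathrm{poly}(\kappa)$. Here the "input" $x$ to the branching program is the ciphertext $\encrypted{a}$; the bit $\sk$-dependence is hidden, so I would actually apply Barrington to the two-input function $(\sk, \encrypted{a}) \mapsto \HE.\Dec_{\sk}(\encrypted{a})$ and then \emph{hard-wire} the bits of $\sk$ into the program, leaving a branching program whose instructions $\langle i_\ell, \sigma^1_\ell, \sigma^0_\ell\rangle$ depend only on $\sk$ and whose variable input is $\encrypted{a}$. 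The $\SCHEME.\GenGadget_{\pk'}(\sk)$ algorithm runs this (polynomial-time) transformation, so efficiency of gadget generation is immediate.

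Next I would describe the translation of a single branching-program instruction into the gadget language, following the garden-hose intuition from Section~\ref{sec:prelim-gardenhose}. A width-$5$ program with $L'$ instructions tracks the "position" of a token in $\{1,\dots,5\}$; each instruction applies one of two permutations of these five positions depending on a bit of $\encrypted{a}$. I would represent the state "between instruction $\ell$ and $\ell+1$, token at position $j$" by one qubit of an EPR pair, so that each layer uses $5$ qubits (hence $m = O(L') = \mathrm{poly}(\kappa)$ entangled pairs, and $2m$ qubits, consistent with the specification). The disjoint pairs $\{(s_i,t_i)\}$ are chosen so that the EPR pair connecting layer $\ell$ to layer $\ell+1$ links position $j$ in layer $\ell$ to position $\sigma^b_\ell(j)$ in layer $\ell+1$; the \emph{choice} of which of the two wirings ($b=0$ or $b=1$) the qubit actually traverses is not baked into the gadget but is realized by $\SCHEME.\GenMeasurement(\encrypted{a})$, which selects the Bell measurements connecting the input qubit through the appropriate sequence of pairs according to the bits $\encrypted{a}[i_\ell]$. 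The phase string $p \in \{0,1\}^m$ is set so that exactly the pairs corresponding to the program's accepting $5$-cycle output carry a $\P^\dagger$; then the teleported qubit picks up $\P^\dagger$ iff the composed permutation is the accepting cycle, i.e.\ iff $\HE.\Dec_{\sk}(\encrypted{a}) = 1$, which is precisely what the usage description in Section~\ref{sec:gadget} requires. The remaining Pauli byproducts from the chain of Bell measurements are tracked by $x,z$ and the measurement outcomes, and updated homomorphically as already described.

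I would then note that the resulting object has \emph{exactly} the form of $\Gamma_{\pk'}(\sk)$ demanded in Section~\ref{sec:gadget}: a collection of $m$ maximally entangled pairs arranged by disjoint pairs, with $\P^\dagger$ applied to those indexed by $p$, all Pauli-masked by random $x,z$, plus the homomorphic encryptions of $g(\sk)$ and of $(x,z)$ under $\pk'$. Since $m = \mathrm{poly}(\kappa)$, the gadget has polynomial size, and since Barrington's transformation and the wiring readout are polynomial-time, $\SCHEME.\GenGadget$ and $\SCHEME.\GenMeasurement$ are efficient. The main obstacle I anticipate is not the size bound but the bookkeeping of \emph{correctness}: one must verify carefully that teleporting the input qubit through the chosen chain of EPR pairs indeed composes the permutations in the right order, that the single leftover unmeasured qubit is well-defined regardless of which path is taken, and that the accumulated Pauli corrections $a',b'$ are computable from the (encrypted) classical data and measurement outcomes — essentially checking that "teleportation chains simulate garden-hose paths" at the level of detail needed here. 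This is where I would expect to either cite the correspondence from~\cite{Spe15arxiv,BFSS13} or spend the bulk of the argument, deferring the explicit key-update formulas to an appendix as the paper does elsewhere; a clean way to organize it is to first prove a lemma that a single instruction-layer gadget behaves correctly, and then compose $L'$ of them.
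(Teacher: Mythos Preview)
Your overall plan---Barrington's theorem, then encode the resulting width-5 program as EPR pairs through which the qubit is teleported---is the paper's approach, and the polynomial size bound follows. But there is one real gap and one place where your description diverges from the paper in a way that does not obviously work.

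The paper does \emph{not} hard-wire $\sk$ into the branching program. Instead it arranges (inserting dummy instructions if needed) that the instructions \emph{alternate} between reading a bit of $\sk$ and reading a bit of $\encrypted{a}$. The even ($\sk$-reading) instructions become the gadget's EPR pairs: for each such $\ell$, five pairs connect $j_{\ell,\mathrm{in}}$ to $\sigma_\ell(j)_{\ell,\mathrm{out}}$ where $\sigma_\ell = \sigma_\ell^{\sk_{i_\ell}}$ is already determined by $\sk$. The odd ($\encrypted{a}$-reading) instructions become the evaluator's Bell measurements, five per instruction, joining the previous layer's ``out'' qubits to the next layer's ``in'' qubits according to $\sigma_\ell^{\encrypted{a}_{i_\ell}}$. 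In your version the gadget is supposed to lay down pairs realizing $\sigma^b_\ell$ while the evaluator ``chooses $b$'' later; but an EPR pair is one fixed link, so a single disjoint matching $\{(s_i,t_i)\}$ cannot simultaneously encode both $\sigma^0_\ell$ and $\sigma^1_\ell$. Some unstated duplication of qubits per layer would be needed to make that reading work, and it is not clear it fits the gadget format of Section~\ref{sec:gadget}.

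More importantly, what you list as bookkeeping---``the single leftover unmeasured qubit is well-defined regardless of which path is taken''---is the actual crux, and it is not well-defined in your construction. After the forward program the qubit sits at position $\tau(1)$ of the last layer of five, where $\tau$ is the composed permutation; $\tau$ depends on $\sk$, so the evaluator \emph{cannot tell which of the five qubits is the output}. The paper's fix is to place the $\P^\dagger$ gates on positions $2,\ldots,5$ of this final layer (so the correction is applied exactly when $\tau(1)\neq 1$, i.e.\ when the program accepts) and then append the entire \emph{inverse} branching program---same instructions in reverse order, each permutation replaced by its inverse. The qubit retraces its path and always lands at one fixed known location ($\sigma_1(1)$ of the very last layer), which the evaluator can identify from $\encrypted{a}$ alone. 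Without this doubling the evaluator has no output wire to continue the circuit on; it is not a detail you can defer.
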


\begin{proof}
Our description will consist of three steps. First, we write $\HE.\Dec$
as a width-5 permutation branching program, of which the
instructions alternately depend on the secret key~$\sk$ and on the ciphertext~$\encrypted{a}$. Secondly, we specify how to transform these instructions into
a gadget which almost works correctly, but for which the qubit
ends up at an unknown location.
Finally, we complete the construction by executing the inverse
program, so that the qubit ends up at a known location.

The first part follows directly from Barrington's theorem. The effective input of $\HE.\Dec$ can be seen as the concatenation of the secret key $\sk$ and the ciphertext $\encrypted{a}$.
Since by assumption the circuit is of depth $O(\log \kappa)$, there exists 
width-5 permutation branching program $\mathcal{P}$ of length $L = \kappa^{O(1)}$, with the following properties.
We write \[
\mathcal{P} = \left( \langle i_1, \sigma^1_1, \sigma^0_1 \rangle, 
\langle i_{2}, \sigma^1_2, \sigma^0_2 \rangle, 
\dots,
\langle i_{L}, \sigma^1_L, \sigma^0_L \rangle
\right)
\] as the list of instructions of the width-5 permutation branching program. Without loss of generality\footnote{This can be seen by inserting dummy instructions that always perform the identity permutation between any two consecutive instructions
that depend on the same variable. Alternatively, it would be possible to improve the construction by `multiplying out' consecutive instructions whenever they depend on the same variable.}, we can assume that the instructions alternately depend on bits
of $\encrypted{a}$ and bits of $\sk$. That is, the index $i_\ell$ refers to a bit of $\encrypted{a}$ if $\ell$ is odd, and to a bit of $\sk$ if $\ell$ is even. There are $L$ instructions in total, of which $L/2$ are odd-numbered and $L/2$ are even.

The output of $\SCHEME.\GenGadget_{\pk'}(\sk)$, i.e., the list of pairs that defines the structure of the gadget, will be created from the even-numbered instructions, evaluated using the secret key $\sk$. 
For every even-numbered $\ell \leq L$, we connect
ten qubits in the following way.
Suppose the $\ell^{\text{th}}$ instruction evaluates to some permutation $\sigma_\ell := \sigma^{\sk_{i_\ell}}_\ell$. Label the 10 qubits of this part of the gadget by $1_{\ell,\mathrm{in}}, 2_{\ell,\mathrm{in}}, \dots, 5_{\ell,\mathrm{in}}$ and $1_{\ell,\mathrm{out}}, 2_{\ell,\mathrm{out}}, \dots, 5_{\ell,\mathrm{out}}$. These will correspond to 5 EPR pairs, connected according to the permutation: $(1_{\ell,\mathrm{in}},
\sigma_{\ell}(1)_{\ell,\mathrm{out}})$, $(2_{\ell,\mathrm{in}}, 
\sigma_{\ell}(2)_{\ell,\mathrm{out}})$, etc., up to
$(5_{\ell,\mathrm{in}}, 
\sigma_{\ell}(5)_{\ell,\mathrm{out}})$.

After the final instruction of the branching program, $\sigma_L$,
also perform an inverse phase gate $\P^{\dag}$ on the qubits 
labeled as $2_{L,\mathrm{out}}$, $3_{L,\mathrm{out}}$,
$4_{L,\mathrm{out}}$, $5_{L,\mathrm{out}}$. Execution of
the gadget will teleport the qubit through one of these
whenever $\encrypted{a} = 1$.

For this construction, $\SCHEME.\GenMeasurement(\encrypted{a})$ will be 
given by the odd instructions, which depend on the bits of $\encrypted{a}$.
Again, for all odd $\ell \leq L$, let $\sigma_\ell := \sigma_\ell^{\encrypted{a}_{i_\ell}}$ 
be the permutation given by the evaluation of
instruction $\ell$ on $\encrypted{a}$.
For all $\ell$ strictly greater than one, the measurement instructions will be: perform a Bell measurement
according to the permutation $\sigma_\ell$ between the `out'
qubits of the previous set, and the `in' qubits of the next.
The measurement pairs will then be $(1_{\ell-1,\mathrm{out}}, \sigma(1)_{\ell,\mathrm{in}})$, $(2_{\ell-1,\mathrm{out}}, \sigma(2)_{\ell,\mathrm{in}})$, up to $(5_{\ell-1,\mathrm{out}}, \sigma(5)_{\ell,\mathrm{in}})$.

For $\ell=1$, there is no previous layer to connect to, only the input qubit. For that, we add the measurement instruction
$(0, \sigma(1)_{1,\mathrm{in}})$, where 0 is the label of the
input qubit.

By Barrington's theorem, if $\HE.\Dec_\sk(\encrypted{a}) = 0$
then the product, say $\tau$, of the permutations coming from the evaluated instructions equals the identity. In that case, consecutively
applying these permutations on `1', 
results in the unchanged starting value, `1'. If instead the decryption would output 1, the consecutive application results in another value in $\{2,3,4,5\}$, because in that case, $\tau$ is a $k$-cycle. 
After teleporting a qubit through these EPR pairs,
with teleportation measurements chosen accordingly,
the input qubit will be present at $\tau(1)_{L,\mathrm{out}}$, with an inverse phase gate if $\tau(1)$ is unequal to 1.

The gadget
constructed so far would correctly apply the phase gate,
conditioned on $\HE.\Dec_{\sk}(\encrypted{a})$,
with one problem: afterward, the qubit is at a location unknown
to the user of the gadget, because the user cannot compute $\tau$.

We fix this problem in the following way: execute the
inverse branching program afterwards. The entire construction is continued in the same way, but the instructions of the inverse program are used. The inverse program can be made from the original program by reversing the order of instructions, and then for each permutation using its inverse permutation instead.
The first inverse instruction is $\langle i_{L}, (\sigma^1_L)^{-1}, (\sigma^0_L)^{-1} \rangle$, then $\langle i_{L-1}, (\sigma^1_{L-1})^{-1}, (\sigma^0_{L-1})^{-1} \rangle$, 
with final instruction $\langle i_1, (\sigma^1_1)^{-1}, (\sigma^0_1)^{-1} \rangle$.
One small detail is that $i_{L}$
is used twice in a row, breaking the alternation; the user
of the gadget can simply perform the measurements that correspond
to the identity permutation $e$ in between, since $(\sigma^0_L)(\sigma^0_L)^{-1} = (\sigma^1_L)(\sigma^1_L)^{-1} = e$.

After having repeated the construction with the inverse permutation
branching program, the qubit is guaranteed to be at the location
where it originally started: $\sigma_1(1)$ of the final layer of five qubits -- that will then be the corrected qubit
which is the output of the gadget.

The total number of qubits which form the gadget,
created from a width-5 permutation branching program of length $L$, of which the instructions alternate between depending on $\encrypted{a}$ and depending on $\sk$, is $2\cdot(5L) = 10L$.
\end{proof}

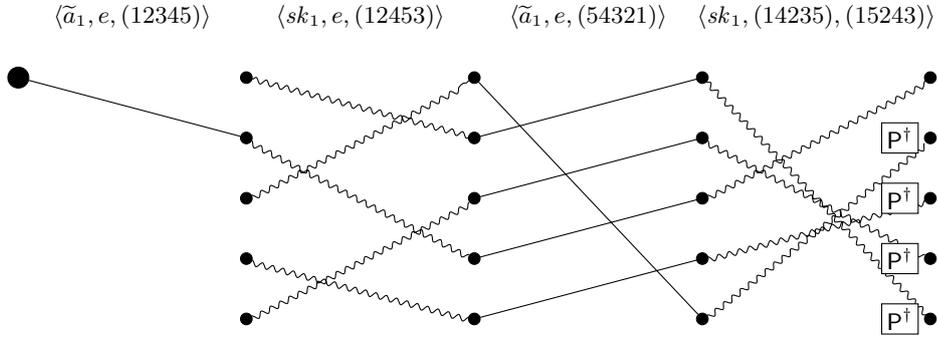
\begin{figure}
\centering
\begin{tikzpicture}[scale=0.8,every node/.style={draw,shape=circle,fill,scale=0.5},
decoration={snake, segment length=4, amplitude=1}]

\foreach \x in {2,...,5}
{
	\foreach \y in {1,...,5}
	{
		\pgfmathtruncatemacro{\label}{\x + 5 * \y}
		\node (\x-\y) at (3.75*\x,6-\y) {};
	}
}
\node[style={draw,shape=circle,fill,scale=1.8}] (1-1) at (3.75*1,6-1) {};

\draw     (1-1) -- (2-2);
          
\draw[decorate] (2-1) -- (3-2)
          (2-2) -- (3-4)
          (2-4) -- (3-5)
          (2-5) -- (3-3)
          (2-3) -- (3-1);

\draw     (3-5) -- (4-4)
          (3-4) -- (4-3)
          (3-3) -- (4-2)
          (3-2) -- (4-1)
          (3-1) -- (4-5);
 
 \draw[decorate]  (4-1) -- (5-5)
          (4-5) -- (5-2)
          (4-2) -- (5-4)
          (4-4) -- (5-3)
          (4-3) -- (5-1);

\filldraw[fill=white] (3.75*5 - 0.8,6-2 - 0.25) rectangle (3.75*5 - 0.2,6-2 + 0.25);
\node[draw=none,fill=none,scale=2] at (3.75*5 - 0.5, 6-2) {$\P^\dag$};
\filldraw[fill=white] (3.75*5 - 0.8,6-3 - 0.25) rectangle (3.75*5 - 0.2,6-3 + 0.25);
\node[draw=none,fill=none,scale=2] at (3.75*5 - 0.5, 6-3) {$\P^\dag$};
\filldraw[fill=white] (3.75*5 - 0.8,6-4 - 0.25) rectangle (3.75*5 - 0.2,6-4 + 0.25);
\node[draw=none,fill=none,scale=2] at (3.75*5 - 0.5, 6-4) {$\P^\dag$};
\filldraw[fill=white] (3.75*5 - 0.8,6-5 - 0.25) rectangle (3.75*5 - 0.2,6-5 + 0.25);
\node[draw=none,fill=none,scale=2] at (3.75*5 - 0.5, 6-5) {$\P^\dag$};

\node[draw=none,fill=none,scale=2] at (5.625,6) {$\langle \encrypted{a}_1, e, (12345)\rangle$};
\node[draw=none,fill=none,scale=2] at (9.375,6) {$\langle \sk_1, e, (12453)\rangle$};
\node[draw=none,fill=none,scale=2] at (13.125,6) {$\langle \encrypted{a}_1, e, (54321)\rangle$};
\node[draw=none,fill=none,scale=2] at (16.875,6) {$\langle \sk_1, (14235), (15243)\rangle$};

\end{tikzpicture}
\caption{Structure of the (first half of the) gadget, with measurements, coming from the 5-permutation branching program for the $\OR$ function on the input $(0,0)$. The example program's instructions are displayed above the permutations. The solid  lines correspond to Bell measurements, while the wavy lines
represent EPR pairs.}
\label{fig:5pbp}
\end{figure}

\paragraph{Example.}
The $\OR$ function on two bits can be computed using a width-5 permutation branching program of length 4, consisting of the following list of instructions:
\begin{enumerate}
\item $\langle 1, e, (12345)\rangle$
\item $\langle 2, e, (12453)\rangle$
\item $\langle 1, e, (54321)\rangle$
\item $\langle 2, (14235), (15243)\rangle$
\end{enumerate}
As a simplified example, suppose the decryption function $\HE.\Dec_{\sk}(\encrypted{a})$ is $\sk_1 \,\OR\,\, \encrypted{a}_1$. Then, for one possible example set of values of $\encrypted{a}$ and $\sk$, half of the gadget and measurements will be 
as given in Figure~\ref{fig:5pbp}. To complete this gadget, the same construction
is appended, reflected horizontally.

\subsection{For log-space computable decryption functions}\label{sec:gadget-construction-logspace}
Even though the construction based on Barrington's theorem has enough power for current classical homomorphic schemes, it is possible to improve on this construction
in two directions. Firstly, we extend our result to be able to handle a larger class of decryption functions: those that can be computed in logarithmic space, instead of only $\NC^1$. Secondly, for some specific decryption functions, executing the construction of Section~\ref{sec:barrington} might produce significantly larger 
gadgets than necessary. For instance, even for very simple circuits of depth $\log \kappa$, Barrington's theorem produces programs of length $\kappa^2$ ---
a direct approach can often easily improve on the exponent of the polynomial. See also the garden-hose protocols
for equality \cite{Mar14,CSWX14} and the majority function \cite{KP14} for examples of non-trivial protocols
that are much more efficient than applying Barrington's theorem
as a black box.

In Appendix~\ref{app:gadget-gh} we describe a construction for log-space computation in depth. 
The explanation in the appendix uses a different language than the direct encoding of the previous section:
there is a natural way of writing the requirements on the gadgets as a two-player task, and then writing strategies for this task in the \emph{garden-hose model}.

\begin{theorem}\label{thm:gadget-logspace}
Let $\HE.\Dec_{\sk}(\encrypted{a})$ be the decryption function of the classical homomorphic encryption scheme $\HE$. If $\HE.\Dec$ is computable by a Turing machine that uses space $O(\log \kappa)$, where $\kappa$ is the security parameter,
then there exist gadgets for $\SCHEME$ of size polynomial in $\kappa$.
\end{theorem}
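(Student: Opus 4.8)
The plan is to follow the same three-phase template as the proof of Theorem~\ref{thm:gadget-logdepth} --- run the computation of $\HE.\Dec$ forward, insert the conditional $\P^\dagger$ at the turning point, then run it backward to funnel the qubit to a known location --- but to replace the branching-program/Barrington encoding (which only reaches $\NC^1$) by an encoding through the garden-hose model, which reaches all of logarithmic space. First I would phrase the required behaviour as a two-party task: party $A$ holds $\sk$, party $B$ holds $\encrypted{a}$ together with a qubit in state $\P^a\ket\psi$; they pre-share entanglement whose linking structure $A$ may choose as a function of $\sk$, $B$ may perform Bell measurements as a function of $\encrypted{a}$, and the goal is that afterwards the qubit is in state $\X^{a'}\Z^{b'}\ket\psi$ at a single location that $B$ can name from $\encrypted{a}$ alone, with $a',b'$ determined by $\sk$, $\encrypted{a}$, and the measurement outcomes. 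This is exactly the kind of task the garden-hose model solves, under the dictionary ``pipes $\leftrightarrow$ EPR pairs, garden hoses $\leftrightarrow$ Bell measurements'' recalled in Section~\ref{sec:prelim-gardenhose} and used in~\cite{Spe15arxiv}; the appendix~\ref{app:gadget-gh} carries this out in the native garden-hose language.

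Next I would invoke the fact that a function computable by a Turing machine in space $O(\log\kappa)$ has garden-hose complexity polynomial in $\kappa$, for the input bipartition in which $A$ receives the bits of $\sk$ and $B$ receives the bits of $\encrypted{a}$ (see~\cite{BFSS13}; this is essentially the simulation of log-space by polynomial-size branching programs, with $A$ answering the $\sk$-queries and $B$ the $\encrypted{a}$-queries). Applying it to $\HE.\Dec$ produces a garden-hose protocol with $m=\kappa^{O(1)}$ pipes, which I would then compile into a gadget. The key step is that $A$'s hoses can be ``pre-executed'': rather than sharing two EPR pairs and having $A$ Bell-measure the two halves it holds, $A$ directly prepares a single EPR pair between the corresponding qubits on $B$'s side, absorbing the random measurement byproduct into the Pauli strings $x,z$. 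Doing this for all of $A$'s hoses yields precisely a state of the form $\gamma_{x,z}(g(\sk))$, together with the classical record $g(\sk)$ of which pipes became linked; $B$'s hoses become the list output by $\SCHEME.\GenMeasurement(\encrypted{a})$, and the state plus its encrypted classical data is what $\SCHEME.\GenGadget_{\pk'}(\sk)$ outputs. The gadget then has $2m=\kappa^{O(1)}$ qubits, is constructible in polynomial time from the log-space machine, and the uniform-Pauli identity $\frac{1}{2^{2m}}\sum_{x,z}\gamma_{x,z}(g(\sk))=\id_{2^{2m}}/2^{2m}$ holds automatically, regardless of the linking.

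Two points still need care. The conditional phase: the input qubit enters the network at the ``tap'' and, after the teleportation measurements, re-emerges along the path the protocol selects, which terminates on $B$'s side exactly when $\HE.\Dec_{\sk}(\encrypted{a})=1$; placing an inverse phase gate on the distinguished pipes reached only in that case guarantees the qubit acquires $\P^\dagger$ iff the decryption is $1$, with any extra Pauli byproducts folded into $a',b'$ (using $(\P^\dagger)^2=\Z$). The main obstacle is the second point: after the forward pass the qubit sits at a location that depends on $\sk$ and so is not directly addressable by the evaluator, so --- exactly as in the proof of Theorem~\ref{thm:gadget-logdepth} --- I would insert the $\P^\dagger$ gates at the turning point and then append the inverse garden-hose protocol, which brings the qubit back to the fixed pipe adjacent to the input while leaving the number of $\P^\dagger$ crossings unchanged; this doubles the pipe count but keeps it polynomial. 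The remaining byproduct-key bookkeeping (which $\X^{a'}\Z^{b'}$ results and how the evaluator updates its encrypted versions) is routine Bell-measurement accounting as in Appendix~\ref{app:key-update-gadget}. Assembling these pieces yields a polynomial-size, efficiently constructible gadget and proves the theorem; since $\NC^1\subseteq L$, it subsumes Theorem~\ref{thm:gadget-logdepth}.
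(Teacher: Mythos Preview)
Your proposal is correct and is essentially the same argument the paper gives: in the main text and Appendix~\ref{app:gadget-gh} the paper likewise recasts the problem as the two-party task (Alice with $\sk$, Bob with $\encrypted{a}$ and the qubit), invokes the log-space $\Rightarrow$ polynomial garden-hose result of~\cite{BFSS13}, pre-executes Alice's hoses via entanglement swapping to obtain the state $\gamma_{x,z}(g(\sk))$, and uses the doubling trick (packaged in~\cite[Lemma~8]{Spe15arxiv}) to return the qubit to a known location after the conditional $\P^\dagger$. One small orientation remark: in the paper's convention the qubit lands on \emph{Alice's} side when $\HE.\Dec_{\sk}(\encrypted{a})=1$ (since Bob holds the tap), so the $\P^\dagger$'s are inserted on Alice's open pipe ends before mirroring---but this is only a labeling issue and does not affect your argument.
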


Writing these gadgets in terms of the garden-hose model, even though it adds a layer of complexity to the construction, gives more insight into the structure of the gadgets, and forms its original inspiration. We therefore sketch the link between log-space computation and gadget construction within this framework.

Viewing the gadget construction as instance of the garden-hose model, besides clarifying the log-space construction,
also makes it easier to construct gadgets for specific cases.
Earlier work developed protocols in the garden-hose model for several functions, see for instance~\cite{Spe11, BFSS13, KP14}, and connections to other models of computation.
These results on the garden-hose model might serve as building blocks to create more efficient gadgets for specific decoding functions of classical homomorphic schemes, that are potentially much smaller than those created as a result of following the general constructions of Theorem~\ref{thm:gadget-logdepth} or~\ref{thm:gadget-logspace}.

\subsection{Specific case: Learning With Errors}
The scheme by Brakerski and Vaikuntanathan \cite{BV11} is well-suited
for our construction, and its decryption function is representative for a much wider class of schemes which are
based on the hardness of Learning With Errors (LWE).
As an example, we construct gadgets that enable quantum homomorphic encryption based
on the BV11 scheme. Let $\kappa$ be the security parameter, and let $p$ be the modulus 
of the integer ring over which the scheme operates.

The ciphertext $c$ is given by a pair $(\mathbf(v), w)$, with $\mathbf(v) \in \mathbb{Z}^\kappa_p$ and $w \in \mathbb{Z}_p$.
The secret key $\mathbf{s}$ is an element of $\mathbf{Z}^k_p$.
The decryption of a message $m$ involves computation of an inner product over the ring $\mathbb{Z}_p$,
\begin{equation}
m = (w - \langle \mathbf{v}, \mathbf{s}\rangle) \pmod p \pmod 2 \,. \label{eq:bv11decrypt}
\end{equation}

The BV11 scheme is able to make the modulus small, i.e.~polynomial in $\kappa$, before encryption.
In Appendix~\ref{app:bv11construction} we present an explicit construction for the case of small modulus $p$, which can be illustrative to read as example of our construction, and an implicit construction for more complicated gadgets for the case of superpolynomial $p$. 

\begin{proposition}
 The decryption function of the BV11 scheme translates into polynomial-sized gadgets.
\end{proposition}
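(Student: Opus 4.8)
The plan is to derive the statement as a corollary of the general gadget constructions in Theorem~\ref{thm:gadget-logdepth} and Theorem~\ref{thm:gadget-logspace}, by bounding the complexity of the BV11 decryption function of Equation~\ref{eq:bv11decrypt}, namely $m = (w - \langle \mathbf{v}, \mathbf{s}\rangle) \bmod p \bmod 2$, where the ciphertext supplies $(\mathbf{v}, w) \in \mathbb{Z}_p^{\kappa} \times \mathbb{Z}_p$ and the secret key is $\mathbf{s} \in \mathbb{Z}_p^{\kappa}$. I would split into the case of small modulus and the case of superpolynomial modulus, as in the discussion preceding the statement.

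First, the \textbf{small-modulus case}: by the dimension/modulus reduction available in BV11 we may assume $p = \kappa^{O(1)}$, so every ring element is an $O(\log\kappa)$-bit string. Then $\langle \mathbf{v}, \mathbf{s}\rangle$ is a sum of $\kappa$ products of pairs of $O(\log\kappa)$-bit integers; each product is at most $p^2 = \kappa^{O(1)}$ and the whole sum is at most $\kappa p^2 = \kappa^{O(1)}$, hence it still has $O(\log\kappa)$ bits. Iterated addition of $\kappa$ numbers of $O(\log\kappa)$ bits, one subtraction, the reduction modulo the polynomial-sized $p$ (subtract the correct multiple $k p$ with $k = \kappa^{O(1)}$), and extraction of the least significant bit are all computable by fan-in 2 boolean circuits of depth $O(\log\kappa)$ --- indeed by uniform $\mathsf{TC}^{0}\subseteq\NC^{1}$ circuits. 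Hence $\HE.\Dec$ lies in $\NC^{1}$ and Theorem~\ref{thm:gadget-logdepth} produces gadgets of size $\kappa^{O(1)}$. One can equally well observe that scanning the coordinates one at a time while maintaining an $O(\log\kappa)$-bit running sum modulo $p$ is an $O(\log\kappa)$-space Turing machine, and invoke Theorem~\ref{thm:gadget-logspace} instead; Appendix~\ref{app:bv11construction} additionally works out an explicit, hand-optimised gadget for this case as a concrete illustration.

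Finally, for the \textbf{superpolynomial-modulus case} a single ring element no longer fits into the $O(\log\kappa)$ depth or space budget, so the general theorems cannot be applied as black boxes. Here I would instead give a direct construction in the garden-hose language of Appendix~\ref{app:gadget-gh}: assemble the computation of Equation~\ref{eq:bv11decrypt} out of small garden-hose gadgets for the elementary ring operations --- schoolbook multiplication, iterated addition with explicit carry propagation, and one final modular reduction --- paying only polynomially in $\kappa$ and in $\log p$ (which is itself polynomial in the security parameter in the regimes where BV11 is used). The main obstacle is precisely this arithmetic bookkeeping: for polynomial $p$ the resource bound is standard and the case is essentially a citation, whereas for superpolynomial $p$ one must lay out a carry-aware, constant-overhead garden-hose protocol for inner-product-mod-$p$-mod-$2$, which is why it is deferred to the appendix as the ``implicit construction'' mentioned above.
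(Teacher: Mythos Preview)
Your small-modulus argument is fine and is in fact explicitly acknowledged by the paper as a valid route: Appendix~\ref{app:bv11construction} says one could just invoke Theorem~\ref{thm:logspace}, and only gives the explicit permutation-gadget construction for the sake of a concrete, smaller example. So on that half you and the paper agree.

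Where you diverge from the paper --- and where there is a genuine gap --- is the superpolynomial-modulus case. You claim that the general Theorems~\ref{thm:gadget-logdepth} and~\ref{thm:gadget-logspace} ``cannot be applied as black boxes'' because a single ring element already exceeds the $O(\log\kappa)$ budget, and you therefore propose a bespoke garden-hose protocol for the arithmetic. The paper does the \emph{opposite}: it applies Theorem~\ref{thm:gadget-logdepth} as a black box, precisely in the large-modulus regime. The point you are missing is that the relevant quantity is circuit \emph{depth}, not the bit-length of an intermediate value. Using the binary decomposition of Equation~\eqref{eq:bv11decomp} and a Wallace-tree addition, the paper cites \cite[Lemma~4.5]{BV11} to get depth $O(\log\kappa + \log\log p)$ for the decryption circuit. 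Since $\log p$ is polynomial in $\kappa$ whenever the scheme is efficient, $\log\log p = O(\log\kappa)$, so the depth is $O(\log\kappa)$ and Theorem~\ref{thm:gadget-logdepth} applies directly, yielding polynomial-size gadgets with no further work.

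So your proposed ``carry-aware garden-hose protocol'' is unnecessary, and the sentence asserting that the general theorems fail here is simply incorrect. You have the two cases reversed: the paper gives a hand-built gadget for small $p$ (where it is easy and illustrative) and falls back on the black-box depth bound for large $p$ (where the explicit construction would be harder), not the other way around.
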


\subsection{Constructing gadgets using limited quantum resources}
In a setting where a less powerful client wants to delegate some quantum computation to a more powerful server, it is important to minimize the amount of effort required from the client. In delegated quantum computation, the complexity of a protocol can be measured by, among other things, the total amount of communication between client and server, the number of rounds of communication, and the quantum resources available to the client, such as possible quantum operations and memory size.

We claim that $\SCHEME$ gives rise to a three-round delegated quantum computation protocol in a setting where the client can perform only Pauli and swap operations. $\SCHEME.\Enc$ and $\SCHEME.\Dec$ only require local application of Pauli operators to a quantum state, but $\SCHEME.\KeyGen$ is more involved because of the gadget construction. However, when supplied with a set of EPR pairs from the server (or any other untrusted source), the client can generate the quantum evaluation key for \SCHEME using only Pauli and swap operations. Even if the server produces some other state than the claimed list of EPR pairs, the client can prevent the leakage of information about her input by encrypting the input with random Pauli operations. More details are supplied in Appendix~\ref{app:gadget-construction-limited-quantum-power}.

Alternatively, $\SCHEME$ can be regarded as a two-round delegated quantum computation protocol in a setting where the client can perform arbitrary Clifford operations, but is limited to a constant-sized quantum memory, given that $\HE.\Dec$ is in $\NC^1$. In that case, the gadgets can be constructed ten qubits at a time, by constructing the sets of five EPR pairs as specified in Section~\ref{sec:barrington}. By decomposing the 5-cycles into products of 2-cycles, the quantum memory can even be reduced to only four qubits. The client sends these small parts of the gadgets to the server as they are completed. Because communication remains one-way until all gadgets have been sent, this can be regarded as a single round of communication.

\section{Conclusion}\label{sec:conclusion}
We have presented the first quantum homomorphic encryption scheme \SCHEME that is compact and allows evaluation of circuits with polynomially many \T gates in the security parameter, i.e.~arbitrary polynomial-sized circuits. Assuming that the number of wires involved in the evaluation circuit is also polynomially related to the security parameter, we may consider \SCHEME to be leveled fully homomorphic. The scheme is based on an arbitrary classical FHE scheme, and any computational assumptions needed for the classical scheme are also required for security of \SCHEME. However, since \SCHEME uses the classical FHE scheme as a black box, any FHE scheme can be plugged in to change the set of computational assumptions.

Our constructions are based on a new and interesting connection between the area of instantaneous non-local quantum computation and quantum homomorphic encryption. Recent techniques developed by Speelman~\cite{Spe15arxiv}, based on the garden-hose model~\cite{BFSS13}, have turned out to be crucial for our construction of quantum gadgets which allow homomorphic evaluation of \T gates on encrypted quantum data. 

\subsection{Future work} Since Yu, P\'erez-Delgado and Fitzsimons~\cite{YPF14} showed that
information-theoretically secure QFHE is impossible (at least in the exact case), it is natural to wonder whether it is possible to construct a non-leveled QFHE scheme based on computational assumptions. If such a scheme is not possible, can one find lower bounds on the size of the evaluation key of a compact scheme? Other than the development of more efficient QFHE schemes, one can consider the construction of QFHE schemes with extra properties, such as circuit privacy against active adversaries. It is also interesting to look at other cryptographic tasks that might be executed using QFHE. In the classical world for example, multiparty computation protocols can be constructed from fully homomorphic encryption~\cite{CDN01}.
We consider it likely that our new techniques will also be useful in other contexts such as quantum indistinguishability obfuscation~\cite{AF16arxiv}.

\ifbool{anonymous}{}{
\section*{Acknowledgements}
We acknowledge useful discussions with Anne Broadbent, Harry Buhrman, and Leo Ducas. We thank Stacey Jeffery for providing the inspiration for a crucial step in the security proof, and Gorjan Alagic and Anne Broadbent for helpful comments on a draft of this article. This work was supported by the 7th framework EU SIQS and QALGO, and a NWO VIDI grant. 
}

\bibliographystyle{alpha}
\bibliography{homomorphic}

\newpage
\appendix

\section{Key update rules}\label{app:key-update}
\subsection{Applying Clifford group gates}\label{app:key-update-clifford}
For convenience, we repeat the key update rules when applying the generators of the Clifford group to a state
that is encrypted with the quantum one-time pad. These can be found in many places in the literature (or can be easily calculated by hand), see also e.g.~\cite[Appendix C]{BJ15}.

After applying a gate to the $i$th wire of a quantum state that has one-time pad keys $a_i$ and $b_i$, we update the keys as
\[
\P_i : (a_i, b_i) \to (a_i, a_i \oplus b_i)
\]
and
\[
\H_i : (a_i, b_i) \to (b_i, a_i) \,.
\]

For the two-qubit $\cnot$ gate applied on control wire $i$, with target $j$, we update the corresponding keys as
\[
\cnot_{i,j} : (a_i, b_i; a_j, b_j) \to (a_i, b_i \oplus b_j; a_i \oplus a_j, b_j)  \,.
\]

\subsection{Using the gadget}\label{app:key-update-gadget}
After using the gadget, but before updating any classical information, the evaluator has: 
the encrypted one-time pad keys $\encrypted{a}$, $\encrypted{b}$, a list of $m$ pairs for Bell measurements $M \leftarrow \SCHEME.\GenMeasurement(\encrypted{a})$
and a list of outcomes for each of these $m$ measurements, say $c,d \in \{0,1\}^m$. 

The evaluator also has encrypted versions of $\{ (s_1, t_1), (s_2, t_2), \dots, (s_m, t_m) \}$, $p \in \{0,1\}^m$, and $x,z \in \{0,1\}^m$ that specify the structure of the gadget.

Say an arbitrary qubit\footnote{The input qubit is not necessarily a pure state, but we write an arbitrary pure
state without loss of generality, to simplify notation.} was teleported through the gadget, so that the qubit started in some state $\P^a \X^a \Z^b \ket{\psi}$ and is currently in state $\X^{a'} \Z^{b'} \ket{\psi}$. 
We sketch the algorithm an evaluator would execute on this encrypted state, to compute (encrypted versions of)
the updated keys $a'$ and $b'$.
Updating the keys is not complicated, it mostly involves bookkeeping to keep track
of the current location of the qubit, and its current $\X$-correction, $\Z$-correction and phase.

We explain the calculation as if performed with
the unencrypted versions; in the actual execution, only the encrypted versions of all variables are used, and this entire calculation
is performed homomorphically. Since all the mentioned classical information either is or can be encrypted with the same public key, this calculation
can be handled by the classical homomorphic scheme $\HE$.

The algorithm tracks the path the qubit takes through the gadget, by resolving the teleportations that involve the qubit one by one.
Even though the measurements were all performed at the same time, we will describe them as if ordered in this manner.
All additions of the keys of the one-time pad will be performed modulo 2, since $\X^2=\Z^2=\id$.

Let $\mathbf{a}, \mathbf{b}$ be variables that hold the current key to the one-time pad at every step of the algorithm. We initialize these as $\mathbf{a} \leftarrow a$ and $\mathbf{b} \leftarrow b$.
Let $\mathbf{q}$ be a variable that stores whether or not the qubit currently has an extra phase gate, initialized as $\mathbf{q} \leftarrow a$.
Let $\mathbf{r}$ be the variable that contains the current location of the qubit, with possible locations 0 to $2m$, initialized to 0. 
That is, we view the current state as being $\P^\mathbf{q} \X^{\mathbf{a}} \Z^{\mathbf{b}} \ket{\psi}$
at location $\mathbf{r}$.
For every step we update the location depending on $M$ and $\{ (s_1, t_1), (s_2, t_2), \dots, (s_m, t_m) \} $, and update the keys depending on the corresponding measurement outcomes.

First, find the pair in $M$ that contains the current location $\mathbf{r}$, say pair $i$ which consists of $(r,s)$
for some other location $s$. The outcome of this measurement is given by $c[i]$ and $d[i]$.
Effectively, these outcomes change the current state to 
\[
\X^{c[i]} \Z^{d[i]} \P^{\mathbf{q}} \X^{\mathbf{a}} \Z^{\mathbf{b}} \ket{\psi} = \P^{\mathbf{q}} \X^{\mathbf{a} + c[i]} \Z^{\mathbf{b} + d[i] + \mathbf{q} \cdot c[i]} \ket{\psi} \,,
\]
therefore we update $\mathbf{a} \leftarrow \mathbf{a} + c[i]$ and $\mathbf{b} \leftarrow \mathbf{b} + d[i] + q \cdot c[i]$. Note that the key update
rules also involve multiplication -- an extra $\Z$ gate is added if the phase gate was present, $\mathbf{q}=1$, and the teleportation measurement required an X correction, $c[i]=1$.

Next, find the pair in $\{ (s_1, t_1), (s_2, t_2), \dots, (s_m, t_m) \} $ that contains the new location $s$, say pair $j$ containing $(s, t)$. The teleportation
of the qubit through this pair effectively applies $\X^{x[j]} \Z^{z[j]} (\P^{\dagger})^{p[j]}$ to the state.
Then, if we already use the updated $\mathbf{a}$ and $\mathbf{b}$, the quantum state at this step equals
\begin{align*}
 \X^{x[j]} \Z^{z[j]} (\P^{\dagger})^{p[j]}  \P^{\mathbf{q}} \X^{\mathbf{a}} \Z^{\mathbf{b}} \ket{\psi} &= \X^{x[j]} \Z^{z[j]} \P^{p[j]}  \P^{\mathbf{q}} \X^{\mathbf{a}} \Z^{\mathbf{b} + p[j]} \ket{\psi} \\
 &= \P^{p[j] + \mathbf{q} \!\!\! \pmod 2} \X^{\mathbf{a} + x[j]} \Z^{\mathbf{b} + z[j] + p[j] \cdot (1 + \mathbf{q}) +  x[j]\cdot (p[j] + \mathbf{q})} \ket{\psi}\,. \\
\end{align*}
For rewriting, we used the fact that $\P^2 = \Z$ and that $\P^\dag = \P \Z$, together with the commutation relations
from the previous section. We therefore update the phase $\mathbf{q} \leftarrow p[j] + \mathbf{q}  \pmod 2$,
and the components of the quantum one-time pad to $\mathbf{a} \leftarrow \mathbf{a} + x[j]$ and $\mathbf{b} \leftarrow \mathbf{b} + z[j] + p[j] \cdot (1 + q[j]) +  x[j]\cdot (p[j] + q[j])$.
Finally, set the new location of the qubit $\mathbf{r} \leftarrow t$.

The previous two steps are then repeated $m$ times, where $2m$ is the size of the gadget, to eventually (homomorphically) compute the new updated keys $a'$,$b'$ to
the quantum one-time pad. Afterwards, all temporary variables can be discarded, and only the updated keys will be needed
for continuining the protocol.

\section{Circuit privacy}\label{app:circuit-privacy}

In this appendix, we demonstrate that with only a slight modification of $\SCHEME$, the scheme has circuit privacy in the semi-honest setting, i.e. against passive adversaries. Classically, circuit privacy is defined by requiring the existence of a simulator $\Sim_\HE$ whose inputs are the public parameters and $\C(x)$ and which produces an  output which is indistinguishable from the homomorphic evaluation of $\C$ on the encryption of $x$. Formally, circuit privacy is defined as follows.

\begin{definition}[Classical circuit privacy -- semi-honest setting \cite{IP07}]\label{def:cp-classical-honest}\\
A classical homomorphic encryption scheme $\HE$ has statistical circuit privacy in the semi-honest (`honest-but-curious') model if there exists a $PPT$ algorithm $\Sim_\HE$ and a negligible function $\eta$ such that for any security parameter $\kappa$, input $x$, key set $(\pk, \evk, \sk) \leftarrow \HE.\KeyGen(1^\kappa)$, and circuit $\C$:
\[
\delta (\HE.\Eval^{\C}_{\evk}(\HE.\Enc_{pk}(x)), \Sim_\HE(1^\kappa, \pk, \evk, \C(x))) \leq \eta(\kappa)
\]
\end{definition}
Here, $\delta(X, Y) := \frac{1}{2}\sum_{u \in U} \bigl\lvert \Pr[X = u] - \Pr[Y = u] \bigr\rvert$ is the \emph{statistical distance} between two random variables over a finite universe $U$. For notational convenience, we will often write $\Sim_\HE(\C(x))$ if the rest of the arguments are clear from the context. Also we will sometimes write $\X \approx_{a} Y$ to denote that $\delta(X,Y) \leq a$.

If the recryption functionality $\Rec_{i \to j}$ is defined as the composition of the procedures $\HE.\Eval_{\evk_j}^{\HE.\Dec_i}$ and $\HE.\Enc_{\pk_j}$, as in Section~\ref{sec:he-definition}, then recryptions do not degrade the privacy of the computation: a homomorphic evaluation of some function with key switching is statistically close to running the simulator directly on the function output using only the \emph{last} key set.

\begin{lemma}\label{lem:circuit-privacy-keyswitch}
Suppose $\HE$ has statistical circuit privacy in the semi-honest setting, and let $\Sim_{\HE}$ and $\eta$ be as in Definition~\ref{def:cp-classical-honest}. Then for any security parameter $\kappa$, $L$ polynomial in $\kappa$, list of circuits $\C_1, ..., \C_L$ and list of keysets $(\pk_i, \evk_i, \sk_i)_{i=1}^L$ generated by $\HE.\KeyGen(1^\kappa)$, and input $x$, the statistical distance between
\[
\HE.\Eval_{\evk_L}^{\C_L}(\HE.\Rec_{(L-1) \to L}(\HE.\Eval_{\evk_{L-1}}^{\C_{L-1}}(\cdots \HE.\Eval_{\evk_1}^{\C_1}(\HE.\Enc_{\pk_1}(x))))
\]
and
\[
\Sim_{\HE}(1^{\kappa},\pk_L,\evk_L,\C_L(\C_{L-1}(\cdots C_1(x))))
\]
is negligible in $\kappa$.
\end{lemma}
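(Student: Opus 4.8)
The plan is to prove the statement by induction on $L$, peeling off one layer of homomorphic evaluation and recryption at a time. The base case $L=1$ is just the definition of classical circuit privacy (Definition~\ref{def:cp-classical-honest}) applied to the single circuit $\C_1$ and key set $(\pk_1,\evk_1,\sk_1)$, which tells us that $\HE.\Eval_{\evk_1}^{\C_1}(\HE.\Enc_{\pk_1}(x)) \approx_{\eta(\kappa)} \Sim_\HE(1^\kappa,\pk_1,\evk_1,\C_1(x))$.

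For the inductive step, the key observation is that the recryption $\HE.\Rec_{(L-1)\to L}$ is, by its very definition in Section~\ref{sec:he-definition}, the composition $\HE.\Eval_{\evk_L}^{\HE.\Dec_{(L-1)}} \circ \HE.\Enc_{\pk_L}$. Therefore the full composite
\[
\HE.\Eval_{\evk_L}^{\C_L}\bigl(\HE.\Rec_{(L-1)\to L}\bigl(\HE.\Eval_{\evk_{L-1}}^{\C_{L-1}}(\cdots)\bigr)\bigr)
\]
can be regarded as a \emph{single} homomorphic evaluation, under the final key set, of the function $g := \C_L \circ \HE.\Dec_{(L-1)} \circ (\text{embedding})$ applied to the freshly re-encrypted ciphertext $\HE.\Enc_{\pk_L}\bigl(\HE.\Eval_{\evk_{L-1}}^{\C_{L-1}}(\cdots)\bigr)$. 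Now I want to apply circuit privacy of $\HE$ in the form of Definition~\ref{def:cp-classical-honest} to this outer evaluation. The subtlety is that the "plaintext" being fed in under key $L$ is itself a ciphertext produced by the earlier chain, so I cannot directly quote the definition; instead I first move to an idealized hybrid in which that inner ciphertext is replaced by a fresh encryption under $\pk_L$ of the \emph{plaintext value} $\C_{L-1}(\cdots\C_1(x))$. By the inductive hypothesis, the inner chain $\HE.\Eval_{\evk_{L-1}}^{\C_{L-1}}(\cdots)$ is $\varepsilon$-close (with $\varepsilon$ negligible) to $\Sim_\HE(1^\kappa,\pk_{L-1},\evk_{L-1},\C_{L-1}(\cdots))$; by correctness of $\HE$ this in turn decrypts under $\sk_{L-1}$ to the correct value except with negligible probability. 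Re-encrypting under $\pk_L$ and running the decryption-of-$\C_{L-1}(\cdots)$-composed-with-$\C_L$ homomorphically then lets me invoke circuit privacy once more, yielding closeness to $\Sim_\HE(1^\kappa,\pk_L,\evk_L,\C_L(\C_{L-1}(\cdots\C_1(x))))$. Collecting the three statistical-distance bounds (inductive hypothesis, correctness slack, outer circuit privacy) and using the triangle inequality for $\delta$, together with the fact that a sum of polynomially many negligible functions is negligible, gives the claim.

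The main obstacle I anticipate is making the hybrid argument rigorous: the random variable $\HE.\Eval_{\evk_{L-1}}^{\C_{L-1}}(\cdots)$ that gets re-encrypted under $\pk_L$ is \emph{not} a deterministic function of the plaintext, so I need to argue that feeding it (rather than a canonical fresh encryption of the right value) into the recryption-then-$\C_L$-evaluation does not change the output distribution by more than the inductive slack plus a correctness term. The cleanest way is to first replace the inner object by $\Sim_\HE(1^\kappa,\pk_{L-1},\evk_{L-1},\C_{L-1}(\cdots))$ at cost $\eta(\kappa)$, then note that $\HE.\Rec_{(L-1)\to L}$ applied to \emph{any} ciphertext that decrypts to a fixed value $v$ under $\sk_{L-1}$ produces, by circuit privacy of the outer $\HE.\Eval^{\HE.\Dec\text{-then-}\C_L}$, something $\eta(\kappa)$-close to $\Sim_\HE(1^\kappa,\pk_L,\evk_L,\C_L(v))$ — here it is essential that circuit privacy hides not only the circuit but also the "junk" in the input ciphertext, which is exactly what Definition~\ref{def:cp-classical-honest} guarantees since its right-hand side depends only on $\C(x)$. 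One should also double-check the index bookkeeping and the $\pmod$-of-decryption details do not introduce any non-negligible error; these are routine given the assumed perfect correctness of $\HE$ stated in Section~\ref{sec:he-definition}.

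A clean write-up will therefore proceed as: (i) state the induction hypothesis; (ii) handle $L=1$ by Definition~\ref{def:cp-classical-honest}; (iii) for general $L$, introduce hybrid $H_1$ = the true chain, $H_2$ = same but with the inner ($L{-}1$)-chain replaced by $\Sim_\HE$, $H_3$ = $\Sim_\HE(1^\kappa,\pk_L,\evk_L,\C_L(\C_{L-1}(\cdots\C_1(x))))$; (iv) bound $\delta(H_1,H_2)$ by the inductive hypothesis (the outer $\HE.\Eval/\HE.\Rec/\HE.\Enc$ map is a fixed randomized post-processing and cannot increase statistical distance); (v) bound $\delta(H_2,H_3)$ by outer circuit privacy plus the correctness term; (vi) conclude by the triangle inequality and negligibility of a polynomial sum of negligible functions.
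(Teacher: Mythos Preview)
Your plan is sound and would produce a valid proof, but it takes a longer route than the paper's. Both you and the paper begin with the same key observation: unfolding the definition of $\HE.\Rec_{(L-1)\to L}$ collapses the two outermost layers into a single evaluation
\[
\HE.\Eval_{\evk_L}^{\C_L \circ \HE.\Dec_{\sk_{L-1}}}\bigl(\HE.\Enc_{\pk_L}(\text{inner chain})\bigr).
\]
At this point the paper simply applies Definition~\ref{def:cp-classical-honest} \emph{once}, to this outermost evaluation, with the entire inner chain playing the role of the plaintext (averaging over its randomness). This yields closeness to $\Sim_\HE\bigl(1^\kappa,\pk_L,\evk_L,\C_L(\HE.\Dec_{\sk_{L-1}}(\text{inner chain}))\bigr)$. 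A direct appeal to iterated correctness of $\HE$ then replaces $\HE.\Dec_{\sk_{L-1}}(\text{inner chain})$ by $\C_{L-1}(\cdots\C_1(x))$, and the triangle inequality finishes. No induction on the lemma statement is used, and circuit privacy is invoked exactly once.

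Your inductive argument instead detours through the hybrid $H_2$ in which the inner chain is replaced by $\Sim_\HE(1^\kappa,\pk_{L-1},\evk_{L-1},\cdot)$. This is correct, but it forces you to argue separately that the simulator's output still decrypts correctly under $\sk_{L-1}$ --- which, as you note, pulls in the inductive hypothesis a second time together with a data-processing step --- whereas in the paper's line the \emph{real} inner ciphertext is still present when $\HE.\Dec_{\sk_{L-1}}$ is applied, so correctness applies directly. The net effect is that your proof is longer and invokes circuit privacy $L$ times rather than once; the paper's direct argument is tighter.
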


\begin{proof}
Since $\HE.\Rec_{(L-1) \to L} = \HE.\Eval_{\evk_L}^{\HE.\Dec_{\sk_{L-1}}} \circ \HE.\Enc_{\pk_L}$ by definition, we have that
\begin{align*}
& &&\HE.\Eval_{\evk_L}^{\C_L}(\HE.\Rec_{(L-1) \to L}(\HE.\Eval_{\evk_{L-1}}^{\C_{L-1}}(\cdots \HE.\Eval_{\evk_1}^{\C_1}(\HE.\Enc_{\pk_1}(x))))\\
&= &&\HE.\Eval_{\evk_L}^{\C_L \circ \HE.\Dec_{\sk_{L-1}}}(\HE.\Enc_{\pk_L}(\HE.\Eval_{\evk_{L-1}}^{\C_{L-1}}(\cdots \HE.\Eval_{\evk_1}^{\C_1}(\HE.\Enc_{\pk_1}(x))))\\
&\approx_{\eta(\kappa)} &&\Sim_{\HE}(1^\kappa,\pk_L,\evk_L,\C_L(\HE.\Dec_{\sk_{L-1}}(\HE.\Eval_{\evk_{L-1}}^{\C_{L-1}}(\cdots \HE.\Eval_{\evk_1}^{\C_1}(\HE.\Enc_{\pk_1}(x))))))
\end{align*}
which, by correctness of $\HE$, is statistically indistinguishable from
\[
\Sim_{\HE}(1^{\kappa},\pk_L, \evk_L, \C_L(\C_{L-1}(\C_{L-2}(\cdots \C_1(x)))))
\]
as long as $L$ is polynomial in $\kappa$. By triangle inequality, the statement of the lemma follows.
\end{proof}

In the quantum setting, we need to take into account the fact that the input state may be part of some larger (possibly entangled) system. This leads to the following definition of \emph{quantum circuit privacy} in the semi-honest setting:

\begin{definition}[Quantum circuit privacy -- semi-honest setting]\label{def:circuit-privacy}
A quantum homomorphic encryption scheme $\QHE$ has statistical circuit privacy in the semi-honest setting if there exists a quantum $PPT$ algorithm $\Sim_{\QHE}$ and a negligible function $\eta$ such that for any security parameter $\kappa$, depth parameter $L$, key set $(\pk, \rho_{\evk},\sk) \leftarrow \QHE.\KeyGen(1^\kappa,1^L)$, state $\sigma$, and circuit $\C$ with up to $L$ $\T$-gates:
\[
\left\|
\left(\Phi_{\QHE.\Eval}^{\C,\rho_{\evk},\pk} \circ \Phi_{\QHE.\Enc}^{\pk}\right)
-
\left(\Phi_{\Sim_{\QHE}}^{\rho_{\evk},\pk} \circ \Phi_{\C}\right)
\right\|_{\lozenge} \leq \eta(\kappa)
\]
\end{definition}
In this definition, $\Phi_{\U}$ denotes the quantum channel induced by
the circuit or functionality $\U$. The diamond norm
$\left\|\Phi_U\right\|_{\lozenge}$ is defined in terms of the trace
norm: $\left\|\Phi_U\right\|_{\lozenge} := \max_{\sigma} \left\|(\Phi_U \otimes \id) \sigma\right\|_1$ where the maximisation is over input states $\sigma$.

We now show that the scheme $\SCHEME$ can, with very little overhead, be modified to provide circuit privacy, as stated in Theorem~\ref{thm:circuit-privacy} from Section~\ref{sec:circuit-privacy}:

\circuitprivacy*

\begin{proof}
We make the following alteration to the scheme $\SCHEME$: at the end of the evaluation procedure, the evaluator applies a (random) quantum one-time pad to the output of the evaluation, and updates the classical encryptions of the keys accordingly. The rest of the scheme remains exactly the same, and it is clear that this altered version of $\SCHEME$ is still compact and correct.

Intuitively, the randomization step at the end of the evaluation phase completely hides the circuit: the keys to the quantum one-time pads themselves are now entirely independent of the circuit, and circuit privacy of $\HE$ will ensure that even the classical encryption of these keys does not reveal any information about the computations performed on them.

To formalize this intuition, we define a quantum algorithm $\Sim_{\SCHEME}$ satisfying the constraints given in Definition~\ref{def:circuit-privacy}. Let $\Sim_{\HE}$ be the classical simulator guaranteed to exist by the classical circuit privacy of $\HE$ (see Definition~\ref{def:cp-classical-honest}). Given some security parameter $\kappa$, some keys $\pk = (\pk_1, ..., \pk_L)$ and $\evk = (\evk_1, ..., \evk_L)$, and some quantum state $\sigma$, let $\Sim_{\SCHEME}$ apply a uniformly random quantum one-time pad to $\sigma$, and apply $\Sim_{\HE}(1^{\kappa}, \pk_L, \evk_L, \cdot)$ to the pad keys. The resulting classical-quantum state is the output of $\Sim_{\SCHEME}$. This algorithm resembles $\SCHEME.\Enc$, but instead of calling $\HE.\Enc$ (with $\pk_1$) as a subroutine, it handles the pad key information using the classical simulator $\Sim_{\HE}$ (with $\pk_L$).

If we can show that the trace distance
\[
\left\| \left(\left(\Phi_{\SCHEME.\Eval}^{\C,\rho_{\evk}, \pk} \circ \Phi_{\SCHEME.\Enc}^{\pk} \right) \otimes \id\right)\sigma - \left((\Phi_{\Sim_{\SCHEME}}^{\rho_\evk,\pk} \circ \Phi_{\C}) \otimes \id\right) \sigma\right\|_1
\]
is negligible for any quantum state $\sigma$ of an appropriate dimension, then quantum circuit privacy of $\SCHEME$ immediately follows from Defintion~\ref{def:circuit-privacy} and the definition of the diamond norm.

Write $\sigma_{sim} := \left((\Phi_{\Sim_{\SCHEME}}^{\rho_\evk,\pk} \circ \Phi_{\C}) \otimes \id\right) \sigma$, and $\sigma_{eval} := \left(\left(\Phi_{\SCHEME.\Eval}^{\C,\rho_{\evk}, \pk} \circ \Phi_{\SCHEME.\Enc}^{\pk} \right) \otimes \id\right)\sigma$. We study the state $\sigma_{sim}$ in more detail, and show how to transform it into $\sigma_{eval}$ in only a few (negligible) steps. As a result, the trace distance of these two states will be negligible.

By definition of the algorithm $\Sim_{\SCHEME}$, the state $\sigma_{sim}$ is equal to
\begin{multline}
\frac{1}{2^{2n}} \enskip \sum_{\mathclap{x,z \in \{0,1\}^n}} \enskip \,\, \bigg(\bigotimes_{i = 1}^n \rho\big(\Sim_\HE(1^{\kappa}, \pk_L, \evk_L, x[i])\big) \otimes \bigotimes_{i = 1}^n \rho\big(\Sim_\HE(1^{\kappa}, \pk_L, \evk_L, z[i])\big) \otimes\\
\left(\left(\bigotimes_{i=1}^n \X^{x[i]}\Z^{z[i]} \C \otimes \id\right) \sigma \left(\C^{\dag} \bigotimes_{i=1}^n \X^{x[i]}\Z^{z[i]} \otimes \id \right)\right)\bigg). \nonumber
\end{multline}
During the evaluation procedure of $\SCHEME$, the evaluator updates the keys to the quantum one-time pad for all $n$ qubits in the circuit. These updates depend on the circuit that is being evaluated, some randomness $r$ from the Bell measurement outcomes\footnote{Although for the scheme $\SCHEME$, the measurement outcomes will in principle be uniformly distributed, we will not make this assumption here. In case of a malicious key generator, measurement outcomes might be correlated in some way. Therefore, we will simply assume that $r$ is distributed according to some distribution $R$.} and of course on the initial one-time pad keys. Let $f_i^{\C,r}(a,b)$ denote the $\X$ key on the $i$th qubit after the evaluation of some circuit $\C$ with randomness $r$, with $a,b \in \{0,1\}^n$ the initial pad keys before the evaluation procedure. Similarly, let $g_i^{\C,r}(a,b)$ denote the $\Z$ key for that qubit.

At the end of the evaluation phase, the evaluator chooses bit strings $x$ and $z$ uniformly at random, so the final keys $f_i^{\C,r}(a,b) \oplus x[i]$ and  $g_i^{\C,r}(a,b) \oplus z[i]$ are themselves completely uniform for any $a,b$. Therefore, the state $\sigma_{sim}$ is actually equal to

\begin{outdent}
\begin{align*}
\frac{1}{2^{4n}} \enskip\enskip\enskip \sum_{\mathclap{\substack{a,b,x,z \in \{0,1\}^n \\ r \in \{0,1\}^*} } } \enskip\enskip\enskip\enskip \Pr_R(r) \bigg(&\bigotimes_{i = 1}^n \rho\big(\Sim_\HE(1^{\kappa}, \pk_L, \evk_L, f_i^{\C,r}(a,b) \oplus x[i])\big) \otimes\\
&\bigotimes_{i = 1}^n \rho\big(\Sim_\HE(1^{\kappa}, \pk_L, \evk_L, g_i^{\C,r}(a,b)\oplus z[i])\big) \otimes\\
\noalign{$\displaystyle\left(\left(\bigotimes_{i=1}^n \X^{f_i^{\C,r}(a,b) \oplus x[i]}\Z^{g_i^{\C,r}(a,b) \oplus z[i]} \C \otimes \id\right) \sigma \left(\C^{\dag} \bigotimes_{i=1}^n \X^{f_i^{\C,r}(a,b) \oplus x[i]}\Z^{g_i^{\C,r}(a,b) \oplus z[i]} \otimes \id \right)\right)
\bigg).$}
\end{align*}
\end{outdent}
This is where the classical circuit privacy property kicks in: for any fixed $i,a,b,\C,r,x$, the result of the probabilistic computation $\Sim_{\HE}(f_i^{\C,r}(a,b) \oplus x[i])$ is statistically indistinguishable from the evaluation of the function $f_i^{\C,r}(\cdot,\cdot)\oplus x[i]$ on the encryptions of $a$ and $b$. Note however that the evaluation of $f_i^{\C,r}$ is performed in several steps, with key switching in between. That is, separate functions $h_1$ through $h_L$ are evaluated in each key set $1$ through $L$, such that $f_i^{\C,r} = h_L \circ \cdots \circ h_1$. We abstract away from the exact way that the function $f_i^{\C,r}$ is broken up into these separate functions $h_1, ..., h_L$, and simply write $\HE.\Eval^{f_i^{\C,r}(\cdot,\cdot)\oplus x[i]}_{1,...,L}(\HE.\Enc_{\pk_1}(a,b))$ to denote
\[
\HE.\Eval_{\evk_L}^{(\cdot \oplus x[i]) \circ h_L}(\HE.\Rec_{(L-1) \to L}(\HE.\Eval_{\evk_{L-1}}^{h_{L-1}}(\cdots \HE.\Eval_{\evk_1}^{h_1}(\HE.\Enc_{\pk_1}(a,b)))).
\]
By Lemma~\ref{lem:circuit-privacy-keyswitch}, it follows that
\begin{outdent}
\[
\delta \Big(\HE.\Eval^{f^{\C,r}_i(\cdot,\cdot) \oplus x[i]}_{1,...,L}(\HE.\Enc_{\pk_1}(a,b)), \Sim_\HE(1^\kappa, \pk_L, \evk_L, f_i^{\C,r}(a,b) \oplus x[i])\Big) \leq \eta(\kappa)
\]
\end{outdent}
for some negligible function $\eta$. We can rewrite this equation in terms of the trace distance to get
\begin{outdent}
\[
\left\| \Big(\HE.\Eval^{f^{\C,r}_i(\cdot,\cdot) \oplus x[i]}_{1,...,L}(\HE.\Enc_{\pk_1}(a,b)) - \Sim_\HE(1^\kappa, \pk_L, \evk_L, f_i^{\C,r}(a,b) \oplus x[i])\right\|_1 \leq 2\eta(\kappa)
\]
\end{outdent}
A similar result holds for $g_i^{\C,r}(\cdot,\cdot)\oplus z[i]$. Using subadditivity of the trace norm with respect to the tensor product, it follows that the trace distance between $\sigma_{sim}$ and
\begin{outdent}
\begin{align*}
\frac{1}{2^{4n}} \enskip\enskip\enskip \sum_{\mathclap{\substack{a,b,x,z \in \{0,1\}^n \\ r \in \{0,1\}^*} } } \enskip\enskip\enskip\enskip \Pr_R(r) \bigg(&\bigotimes_{i = 1}^n \rho\big(\HE.\Eval^{f^{\C,r}_i \oplus x[i]}_{1,...,L}(\HE.\Enc_{\pk_1}(a,b))\big) \otimes\\
&\bigotimes_{i = 1}^n \rho\big(\HE.\Eval^{g^{\C,r}_i \oplus z[i]}_{1,...,L}(\HE.\Enc_{\pk_1}(a,b))\big) \otimes\\
\noalign{$\displaystyle\left(\left(\bigotimes_{i=1}^n \X^{f_i^{\C,r}(a,b) \oplus x[i]}\Z^{g_i^{\C,r}(a,b) \oplus z[i]} \C \otimes \id\right) \sigma \left(\C^{\dag} \bigotimes_{i=1}^n \X^{f_i^{\C,r}(a,b) \oplus x[i]}\Z^{g_i^{\C,r}(a,b) \oplus z[i]} \otimes \id \right)\right)\bigg)$}
\end{align*}
\end{outdent}
is at most $4n \cdot \eta(\kappa)$. Note that this last state is exactly $\sigma_{eval}$, the result of putting $\sigma$ through the channel ($\Phi_{\SCHEME.\Eval}^{\C,\rho_{\evk},\pk} \circ \Phi_{\SCHEME.\Enc}^{\pk}) \otimes \id$. We conclude that for any $\sigma$,
\[
\|\sigma_{eval} - \sigma_{sim}\|_1 \leq 4n \cdot \eta(\kappa)
\]
for some negligible function $\eta$ that does not depend on $\sigma$. Hence,
\begin{align*}
\left\|(\Phi_{\SCHEME.\Eval}^{\C,\rho_{\evk}, \pk} \circ \Phi_{\SCHEME.\Enc}^{\pk}) - (\Phi_{\Sim_{\SCHEME}}^{\rho_{\evk},\pk} \circ \Phi_{\C}) \right\|_{\lozenge} &=\\ 
\max_{\sigma} \left\| \left((\Phi_{\SCHEME.\Eval}^{\C,\rho_{\evk}, \pk} \circ \Phi_{\SCHEME.\Enc}^{\pk}) \otimes \id\right)\sigma - \left((\Phi_{\Sim_{\SCHEME}}^{\rho_{\evk},\pk} \circ \Phi_{\C}) \otimes \id\right) \sigma \right\|_{1} &\leq 4n \cdot \eta(\kappa)
\end{align*}
which is negligible if $\eta$ is negligible.
\end{proof}

\section{Gadget construction using the garden-hose model}\label{app:gadget-gh}

To construct the gadgets for specific decryption functions $\HE.\Dec$, we will consider a purified version of the construction of the gadget state.

Consider the following task among two parties Alice and Bob. Alice corresponds to the party which creates the gadget, so she has knowledge of the secret key $\sk$. Bob corresponds to the party which uses the gadget,
therefore Bob has some input $\encrypted{a}$ and a state $\P^{a} \ket{\psi}$,
where $a = \HE.\Dec_{\sk}(\encrypted{a})$. The end goal of the task is for Bob to possess the state $\X^{a'} \Z^{b'} \ket{\psi}$ for some $a'$,$b'$ that are computable
from classical information known to Alice and Bob. The players pre-share a number of EPR pairs between them,
and are only allowed to perform their actions without receiving any communication from the other player.
We only consider strategies where the players perform Bell measurements and inverse phase gates on
their local halves of the given EPR pairs (and in Bob's case also on the input qubit).

Before presenting strategies for this task, we first describe how this task translates to the creation of a gadget.
Say Alice and Bob share $2m$ EPR pairs between them (i.e.~they start with~$4m$ qubits in total).
Alice will perform Bell measurements between the halves of EPR pairs on her side, where the choices she makes depend on $\sk$. Since both players act before receiving any information from the other player, the actions of Alice and Bob are not
ordered -- we first consider how to describe the state when Alice has acted on her local half.
If Alice measures between, say, qubits $s$ and $t$,
with a two-bit outcome that describes the $\X$ and $\Z$ corrections, then we can instantly describe the qubits $s$ and $t$ on Bob's side as forming a fully entangled state --
this teleportation of EPR halves is sometimes called \emph{entanglement swapping}. Which out of the four Bell states is formed depends on the outcomes of Alice's measurement. 

We also allow Alice to perform a $\P^{\dagger}$ gate on some qubits before teleportations. Note that if Alice measures on the qubits given by $\{ (s_1, t_1), (s_2, t_2), \dots, (s_m, t_m) \}$, after applying an inverse phase gate when specified by the bit-string $p$, the state on Bob's side will exactly have the form
of $\gamma_{x,z}\bigl(g(\sk)\bigr)$, for some random binary strings $x,z$ that correspond to the outcomes of Alice's Bell measurements.
The quantum part of the gadget will be given by the reduced state on Bob's side,
while the measurement choices and outcomes of Alice will form the accompanying classical information.
The pairs that Bob chooses to perform Bell measurements on, which only depend on the encrypted information,
are exactly the output of the function $\SCHEME.\GenMeasurement$.

An upper bound to the hardness of this task is given by the \emph{garden-hose complexity} $\HE.\Dec$, written $\gh(\HE.\Dec)$, which is the least number of pipes needed for the players to compute it in the garden-hose model described in Section~\ref{sec:prelim-gardenhose}. This complexity measure is the main measure of hardness in the garden-hose model, and is relevant for the size of the gadgets in our construction.

The amount of space a Turing machine needs to compute any function $f$ is closely related to it garden-hose complexity $\gh(f)$.
The following theorem, proven in \cite{BFSS13}, provides us with a general way of transforming space-efficient algorithms into garden-hose protocols.

\begin{theorem}\cite[Theorem 2.12]{BFSS13}\label{thm:logspace}
If $f : \{0, 1\}^n \times \{0, 1\}^n \to \{0, 1\}$ is log-space computable, then $\gh(f)$ is
polynomial in $n$.
\end{theorem}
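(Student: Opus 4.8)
\begin{proof-sketch}
The plan is to compile a logarithmic-space algorithm for $f$ into a legal garden-hose strategy that uses only polynomially many pipes. First I would put the space-$O(\log n)$ Turing machine $M$ computing $f(x,y)$ into a normal form: every configuration reads exactly one input bit; $x$-reads and $y$-reads strictly alternate along every computation edge; the start configuration $c_0$ reads a bit of $y$; every accepting halting configuration is arranged to ``read a bit of $y$ next'' and every rejecting halting configuration to ``read a bit of $x$ next''; and, by the Lange--McKenzie--Tapp reversible simulation of logarithmic space, $M$ may in addition be taken to be reversible. All of this costs only a polynomial blow-up of the running time and keeps the space $O(\log n)$, so $M$ still has only $S = n^{O(1)}$ configurations. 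Call a configuration an \emph{$A$-configuration} if the input bit it reads next belongs to $x$ and a \emph{$B$-configuration} otherwise; by construction every edge of the configuration graph joins an $A$-configuration to a $B$-configuration, and $c_0$ is a $B$-configuration.

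Next I would set up the garden-hose strategy with one pipe $\pi_c$ per configuration $c$ of $M$, so $S = n^{O(1)}$ pipes in total. Alice, who knows $x$ and hence the unique successor $\delta(c,x_{j(c)})$ of every $A$-configuration $c$ (where $x_{j(c)}$ denotes the bit that $c$ reads), connects the Alice-side end of $\pi_c$ to the Alice-side end of $\pi_{\delta(c,x_{j(c)})}$; symmetrically Bob connects the Bob-side end of $\pi_d$ to that of $\pi_{\delta(d,y_{k(d)})}$ for every $B$-configuration $d$; and Alice attaches her water tap to the Alice-side end of $\pi_{c_0}$. I would then check that the water traces exactly the run $c_0 \to c_1 \to \cdots \to c_T$ of $M$ on $(x,y)$: the water alternately crosses a pipe and follows one of Alice's or Bob's hoses, and it leaves the system at the unique free end of $\pi_{c_T}$. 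Since $c_T$ is a $B$-configuration when $M$ accepts and an $A$-configuration when $M$ rejects, this free end lies on Bob's side exactly when $f(x,y) = 1$, as the garden-hose model requires. Counting pipes then gives $\gh(f) \le S = n^{O(1)}$; this is precisely the tool behind Theorem~\ref{thm:gadget-logspace}.

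The step that needs the most care -- and the only reason reversibility enters the normal form -- is verifying that these connections really form a \emph{legal} garden-hose configuration, i.e.\ that no pipe-end is used by two hoses. Alice's choices realize the partial map $c \mapsto \delta(c, x_{j(c)})$ on $A$-configurations, and for the Alice-side ends to be matched consistently this map must be injective; but reversibility of $M$ says exactly that each configuration has at most one predecessor together with the bit read there, so $\delta(c_1, x_{j(c_1)}) = \delta(c_2, x_{j(c_2)})$ forces $c_1 = c_2$. The rest is routine bookkeeping: Alice-side ends are touched only by Alice -- as a source on each $A$-pipe, as a target on the successor $B$-pipe, and at $\pi_{c_0}$ by the tap, unambiguously because the start configuration has no predecessor -- while Bob-side ends are touched only by Bob; configuration-graph components not visited by the run are irrelevant since the water never enters them; and the alternation together with the chosen form of the halting configurations makes the free end of $\pi_{c_T}$ land on the intended side. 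I expect these legality checks, rather than any single clever idea, to be the bulk of the work.
\end{proof-sketch}
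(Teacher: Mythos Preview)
Your proposal is correct and matches the approach the paper describes. The paper does not actually prove this theorem---it is quoted from~\cite{BFSS13}---but it does summarize the original argument in one sentence: ``the players effectively have to list all configurations for the Turing machine for $f$, and connect them according to the machine's transition function.'' That is exactly your construction: one pipe per configuration, Alice wiring the $x$-reading transitions and Bob the $y$-reading ones, so that the water traces the run of the machine. Your use of reversibility (via Lange--McKenzie--Tapp) to guarantee that the hose connections form a legal matching is the right way to handle the legality issue you correctly flag as the delicate point.
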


Since the garden-hose complexity is defined in a non-uniform way, the strategies of the players are not necessarily easily computable.
However, by inspection of the original proof, we see that the players effectively have to list all configurations for the Turing machine for $f$, and connect them according to the machine's transition function.
For a log-space decryption function $\HE.\Dec$, a player therefore only has to perform a polynomial-time computation to determine the strategy for a specific input.

The general construction is a direct consequence of the following lemma\footnote{The names of Alice and Bob have been swapped in order to fit the framework of this paper.}, which was recently derived in the context of instantaneous non-local quantum computation.

\begin{lemma}\label{lem:nonlocalphase} \cite[Lemma 8, paraphrased]{Spe15arxiv}
Assume Bob has a single qubit with state $\P^{f(x,y)} \ket{\psi}$, for
binary strings $x,y\in \{0,1\}^n$, where Alice knows the string $x$ and Bob knows $y$.
Let $\gh(f)$ be the garden-hose complexity of the function $f$. Then the following holds: 
 
\begin{enumerate}
\item There exists an instantaneous protocol without any communication which uses
$2 \gh(f)$ pre-shared EPR pairs after which a
known qubit of Bob is in the state
$\X^{g(\hat{x},\hat{y})} \Y^{h(\hat{x},\hat{y})} \ket{\psi}$. Here
$\hat{x}$ depends only on $x$ and the
measurement outcomes of Alice, and $\hat{y}$ depends on $y$ and
the measurement outcomes of Bob.
\item The garden-hose complexities of the functions $g$ and $h$ are at most linear in the complexity of the function $f$. 
\end{enumerate}
\end{lemma}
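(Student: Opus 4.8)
The plan is to take the garden-hose protocol guaranteed by the definition of $\gh(f)$ and convert it into an instantaneous, communication-free teleportation scheme, using precisely the pipes-to-EPR-pairs and hoses-to-Bell-measurements dictionary described in Section~\ref{sec:prelim-gardenhose}, together with a forward-and-backward doubling trick identical in spirit to the one already used in the proof of Theorem~\ref{thm:gadget-logdepth}. First I would fix a garden-hose protocol $G$ for $f$ using $k := \gh(f)$ pipes: Alice pairs up pipe-ends on her side according to $x$, Bob pairs them up according to $y$, the tap sits at a fixed pipe, and the combinatorics of the model leave exactly one end open, on Alice's side precisely when $f(x,y)=0$ and on Bob's side precisely when $f(x,y)=1$. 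Replacing each pipe by a pre-shared EPR pair (one half per player), each local hose by a Bell measurement on the two corresponding local halves, and the tap by the input qubit $\P^{f(x,y)}\ket\psi$, the composition of all Bell measurements routes the input state along the water path by repeated entanglement swapping, depositing it at the open pipe-end up to a Pauli correction $\X^{c}\Z^{d}$ determined by the measurement outcomes collected along the way.

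The open end is, however, at a location that depends on $f(x,y)$ and so is not known to Bob, and no conditional phase has yet been applied. I would fix both problems at once with a mirror copy $G'$ of the protocol, which accounts for the factor of two in the $2\gh(f)$ pre-shared pairs. Alice links her open end of $G$ to her open end of $G'$ with no extra gate, while Bob links his open end of $G$ to his open end of $G'$ through a single inverse phase gate $\P^{\dagger}$; since each player builds only the bridge on his own side, neither needs to know $f(x,y)$. The qubit then necessarily crosses Alice's bridge when $f=0$ and Bob's bridge (hence exactly one $\P^{\dagger}$) when $f=1$, after which it runs backwards through $G'$ and returns to the fixed tap location of $G'$, which can be arranged to be a qubit held by Bob regardless of $f$. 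Because $f\in\{0,1\}$, the state has been acted on by $(\P^{\dagger})^{f}$, so $\P^{f}(\P^{\dagger})^{f}=\id$ and the phase is removed, leaving only a product of teleportation Paulis on $\ket\psi$.

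It then remains to identify this surviving correction and prove item~2. The lone $\P^{\dagger}$ (present iff $f=1$) has teleportation corrections on both sides of it along the path; commuting it past the earlier corrections via $\P^{\dagger}\X\P=-\Y$ and $\P^{\dagger}\Z\P=\Z$ converts the relevant $\X$ factors into $\Y$ factors, and collecting everything (absorbing global phases) rewrites the net operator in the claimed normal form $\X^{g(\hat x,\hat y)}\Y^{h(\hat x,\hat y)}$. Here $\hat x$ bundles $x$ with Alice's Bell outcomes and $\hat y$ bundles $y$ with Bob's, since Alice's hoses and the measurements implementing them live on her side and Bob's on his. This establishes item~1. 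For item~2 I would observe that $g$ and $h$ are merely parities of the outcomes lying on the qubit's path through the doubled network; since that path, and the number of outcomes feeding into it, have length $O(\gh(f))$, each of $g$ and $h$ can be computed by a garden-hose protocol that re-uses the same routing structure, yielding $\gh(g),\gh(h)=O(\gh(f))$.

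The main obstacle is not the geometric reflect-back argument, which mirrors Theorem~\ref{thm:gadget-logdepth} almost verbatim, but the Pauli-and-phase bookkeeping: one must verify that commuting the single conditional $\P^{\dagger}$ through the accumulated teleportation corrections really does collapse to one normal form $\X^{g}\Y^{h}$ with the stated split of dependencies between $\hat x$ and $\hat y$, and then argue carefully that the exponent functions $g,h$ inherit garden-hose complexity \emph{linear} in $\gh(f)$ rather than merely polynomial. This last complexity bound is the genuinely new content, and is where I would expect to spend most of the effort.
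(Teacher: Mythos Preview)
The paper does not supply its own proof of this lemma: it is quoted from \cite{Spe15arxiv}, and the surrounding text only sketches the underlying construction (``a direct application of the garden-hose model \dots\ together with a simplifying step'') before illustrating it through the toy example in Section~\ref{sec:toyexample} and the Barrington construction in Theorem~\ref{thm:gadget-logdepth}. Your overall strategy---interpret pipes as EPR pairs, hoses as Bell measurements, then mirror the protocol so the qubit returns to a fixed location while picking up $\P^{\dagger}$ exactly when $f=1$---is precisely the construction those examples illustrate, so in that sense you are on the right track and in agreement with the paper.

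There is, however, a real gap in your description of the bridging step. A garden-hose protocol does \emph{not} in general leave ``exactly one end open''; rather, many pipe ends on each side may be left unmatched, and the water exits through one particular such end whose identity neither player can compute locally. Your sentences ``Alice links her open end of $G$ to her open end of $G'$'' and ``Bob links his open end \dots\ through a single inverse phase gate'' presuppose that each party has a unique open end to bridge, which is false and would also leave the other party's construction undefined on inputs where the water exits on the opposite side. The correct move---visible explicitly in Figure~\ref{fig:gadget-construction-2} and in the Barrington proof, where $\P^\dagger$ is placed on \emph{all} of the qubits $2_{L,\mathrm{out}},\dots,5_{L,\mathrm{out}}$---is that each player connects \emph{every} one of their unmatched pipe ends in $G$ to the corresponding end in $G'$, with Bob inserting $\P^{\dagger}$ on each of his bridges. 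Once you make this correction the rest of your argument for item~1 goes through essentially unchanged; your treatment of item~2 is admittedly a sketch, but since the paper itself neither proves nor uses that part, this is not a defect relative to the paper.
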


For our purposes, only the first part of this result is required. The construction used in this lemma is a direct
application of the garden-hose model~\cite{BFSS13},
together with a simplifying step which was inspired by results on the garden-hose model by Klauck and Podder~\cite{KP14}. 
In our case, the function $f$ will be the decryption function $\HE.\Dec$ where Alice holds $\sk$, and Bob holds $\encrypted{a}$.

\subsection{Toy example}\label{sec:toyexample}
The garden-hose protocols that correspond to actual decryption functions will quickly become complicated in their description.
Therefore, as an illustrative example, we will explicitly show how to convert a garden-hose protocol for the decryption function of a toy classical scheme $\TOY$ to a gadget.
We do not claim $\TOY$ to be homomorphic at all; we only define its very simple decryption function and leave the rest of the scheme undefined.

Consider the following definition of $\TOY.\Dec$ on ciphertext $c$ and key $\sk$ of a single bit:
\[
\TOY.\Dec_\sk(c) = \sk \oplus c \, .
\]
In Figure~\ref{fig:gadget-construction-1}, a garden-hose protocol~\cite{BFSS13} for $\TOY.\Dec$ is shown. For the protocol, Alice and Bob share three EPR-pairs which they use to teleport some qubit through, in a way that depends only on their own inputs $c$ (for Bob) and $\sk$ (for Alice). The qubit always starts in the location marked `in'. After the execution of the protocol, the qubit $\ket\psi$ should end up on Bob's side whenever $\TOY.\Dec_\sk(c) = 0$, and on Alice's side otherwise. For this small function, correctness is easily verified to hold for all possible inputs.

The \emph{garden-hose complexity} $\gh(\TOY.\Dec)$ is the minimum amount of EPR-pairs needed for the computation of $\TOY.\Dec$ in this way. See also~\cite{BFSS13}.

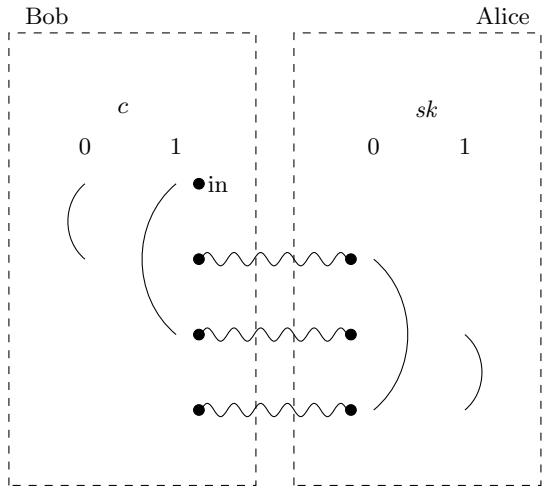
\begin{figure}[h!]
\centering
\begin{tikzpicture}[decoration=snake]
\filldraw (0,0) circle (2pt);
\filldraw (2,0) circle (2pt);
\draw[decorate] (0,0) -- (2,0);
\filldraw (0,1) circle (2pt);
\filldraw (2,1) circle (2pt);
\draw[decorate] (0,1) -- (2,1);
\filldraw (0,2) circle (2pt);
\filldraw (2,2) circle (2pt);
\draw[decorate] (0,2) -- (2,2);

\filldraw (0,3) circle (2pt);
\node[anchor=west] at (0,3) {in};

\node at (-1,4) {$c$};
\node at (-1.5,3.5) {$0$};
\node at (-0.3,3.5) {$1$};
\draw (-0.3,1) to [bend left=50] (-0.3,3);
\draw (-1.5,2) to [bend left=50] (-1.5,3);

\node at (3,4) {$\sk$};
\node at (2.3,3.5) {$0$};
\node at (3.5,3.5) {$1$};
\draw (2.3,0) to [bend right=50] (2.3,2);
\draw (3.5,0) to [bend right=50] (3.5,1);

\draw[dashed] (-2.5,-1) rectangle (0.75,5);
\draw[dashed] (1.25,-1) rectangle (4.5,5);
\node[anchor=south] at (-2,5) {Bob};
\node[anchor=south] at (4,5) {Alice};

\end{tikzpicture}
\caption{Garden-hose protocol for $\TOY.\Dec$. The snaky lines represent the EPR-pairs that form the resources to the protocol. The bended lines represent Bell measurements that Bob and Alice perform dependent on their inputs. For example, if $c = 0$ and $\sk = 0$, the qubit starting at `in' is teleported through the first EPR-pair by Bob, then back through the third EPR-pair by Alice. It comes out on the bottom location on Bob's side.}
\label{fig:gadget-construction-1}
\end{figure}

Suppose that Bob teleports some qubit $\P^a\X^a\Z^b \ket{\psi}$ through the protocol, and sets his input $c$ to be $\encrypted{a}$. Then whenever $a = \TOY.\Dec_{\sk}(\encrypted{a}) = 1$, the qubit will come out on Alice's side, and we will want to apply the correction $\P^\dag$. To make sure that the correction is applied to $\P^a\X^a\Z^b\ket\psi$, Alice can apply a $\P^\dag$ gate on all possible locations of the qubit. However, after this step, Alice and Bob do not know the location of the qubit (unless they share their inputs with one another non-homomorphically). The construction from~\cite[Lemma 8]{Spe15arxiv} solves this problem by applying the entire garden-hose protocol again in reverse: every EPR-half on which no measurement is performed, is connected through measurement with the EPR-half at the same position in the second copy of the protocol. That way, the (corrected) qubit follows the same path backwards, and always ends up on Bob's side at the `in' position of the second protocol (marked `out' in Figure~\ref{fig:gadget-construction-2}).

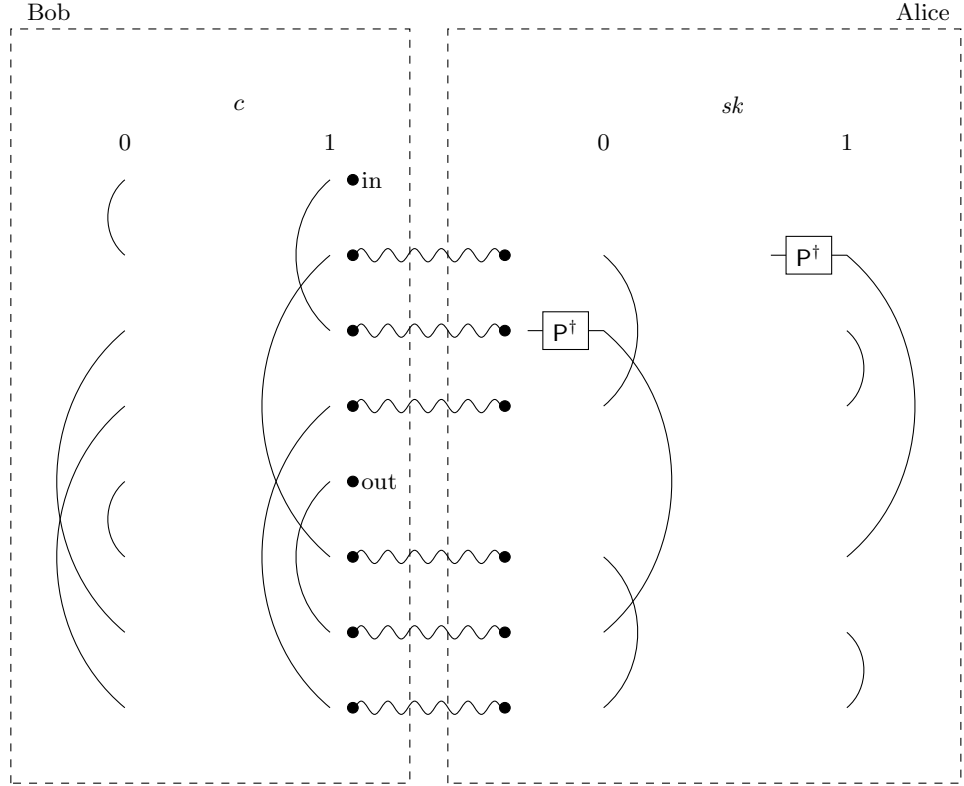
\begin{figure}[h!]
\centering
\begin{tikzpicture}[decoration=snake]
\filldraw (0,0) circle (2pt);
\filldraw (2,0) circle (2pt);
\draw[decorate] (0,0) -- (2,0);
\filldraw (0,1) circle (2pt);
\filldraw (2,1) circle (2pt);
\draw[decorate] (0,1) -- (2,1);
\filldraw (0,2) circle (2pt);
\filldraw (2,2) circle (2pt);
\draw[decorate] (0,2) -- (2,2);

\filldraw (0,3) circle (2pt);
\node[anchor=west] at (0,3) {in};

\node at (-1.5,4) {$c$};
\node at (-3,3.5) {$0$};
\node at (-0.3,3.5) {$1$};
\draw (-0.3,1) to [bend left=50] (-0.3,3);
\draw (-3,2) to [bend left=50] (-3,3);

\node at (5,4) {$\sk$};
\node at (3.3,3.5) {$0$};
\node at (6.5,3.5) {$1$};
\draw (3.3,0) to [bend right=50] (3.3,2);
\draw (6.5,0) to [bend right=50] (6.5,1);

\filldraw (0,-4) circle (2pt);
\filldraw (2,-4) circle (2pt);
\draw[decorate] (0,-4) -- (2,-4);
\filldraw (0,-3) circle (2pt);
\filldraw (2,-3) circle (2pt);
\draw[decorate] (0,-3) -- (2,-3);
\filldraw (0,-2) circle (2pt);
\filldraw (2,-2) circle (2pt);
\draw[decorate] (0,-2) -- (2,-2);

\filldraw (0,-1) circle (2pt);
\node[anchor=west] at (0,-1) {out};

\draw (-0.3,-3) to [bend left=50] (-0.3,-1);
\draw (-3,-2) to [bend left=50] (-3,-1);

\draw (3.3,-4) to [bend right=50] (3.3,-2);
\draw (6.5,-4) to [bend right=50] (6.5,-3);

\draw (-3,-3) to [bend left=50] (-3,1);
\draw (-3,-4) to [bend left=50] (-3,0);
\draw (-0.3,-2) to [bend left=50] (-0.3,2);
\draw (-0.3,-4) to [bend left=50] (-0.3,0);

\draw (3.3,-3) to [bend right=50] (3.3,1) to (2.3,1);
\draw (6.5,-2) to [bend right=50] (6.5,2) to (5.5,2);
\filldraw[fill=white] (5.7,1.75) rectangle (6.3,2.25);
\filldraw[fill=white] (2.5,0.75) rectangle (3.1, 1.25);
\node at (6,2) {$\P^\dag$};
\node at (2.8,1) {$\P^\dag$};

\draw[dashed] (-4.5,-5) rectangle (0.75,5);
\draw[dashed] (1.25,-5) rectangle (8,5);
\node[anchor=south] at (-4,5) {Bob};
\node[anchor=south] at (7.5,5) {Alice};

\end{tikzpicture}
\caption{Removal of a possible $\P$ error using two copies of the garden-hose protocol for $\TOY.\Dec$. For example, if $c = 0$ and $\sk = 1$, the qubit is teleported through EPR pairs 1 and 4, with a $\P^\dag$ applied to it by Alice in between. The input qubit always ends up on position `out'.}
\label{fig:gadget-construction-2}
\end{figure}

After the execution of the protocol, the potential $\P$ error on the qubit $\P^a\X^a\Z^b\ket\psi$ has  been removed, but additional Pauli transformations also have occurred as a result of the teleportations. The exact transformations depend on both the path the qubit has taken and the measurement outcomes of Alice and Bob. Therefore, Alice has to send all of this information (homomorphically encrypted) to Bob, so that he can update his keys to reflect the new state $\X^{a'}\Z^{b'}\ket\psi$ of the qubit.

Since the order of the Bell measurements does not influence the outcome of the protocol, Alice can perform her part of the protocol already during the key-generation phase. She starts by generating enough EPR-pairs for the gadget (six in this example), and performs the measurement on her own halves of the EPR-pairs. Effectively, this action generates six qubits that are entangled in some way that depends on $\sk$ (see Figure~\ref{fig:gadget-construction-3}). Because of the random Pauli's that arise from the Bell measurements, Bob is not able to tell which pairs are connected without knowing Alice's measurement outcomes. To him, the state of the gadget is completely mixed (see Equation~\ref{eq:mixedgadget}).

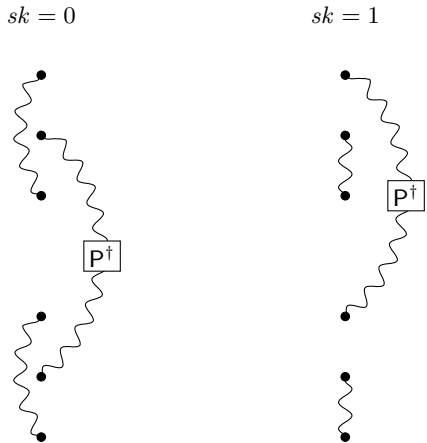
\begin{figure}[h!]
\centering
\begin{tikzpicture}[decoration=snake, scale=0.8]

\filldraw (5,0) circle (2pt);
\draw[decorate] (5,0) to (5,1);
\filldraw (5,1) circle (2pt);
\filldraw (5,2) circle (2pt);
\draw[decorate] (5,-2) to [bend right=50] (5,2);
\filldraw (5,-4) circle (2pt);
\draw[decorate] (5,-4) to (5,-3);
\filldraw (5,-3) circle (2pt);
\filldraw (5,-2) circle (2pt);

\filldraw (0,0) circle (2pt);
\draw[decorate] (0,0) to [bend left] (0,2);
\filldraw (0,1) circle (2pt);
\filldraw (0,2) circle (2pt);
\draw[decorate] (0,-3) to [bend right=50] (0,1);
\filldraw (0,-4) circle (2pt);
\draw[decorate] (0,-4) to [bend left] (0,-2);
\filldraw (0,-3) circle (2pt);
\filldraw (0,-2) circle (2pt);

\node at (0,3) {$\sk = 0$};
\node at (5,3) {$\sk = 1$};

\filldraw[fill=white] (0.7,-1.25) rectangle (1.3,-0.75);
\filldraw[fill=white] (5.7,-0.25) rectangle (6.3,0.25);
\node at (1,-1) {$\P^\dag$};
\node at (6,0) {$\P^\dag$};

\end{tikzpicture}
\caption{The two possible gadgets $\Gamma(\sk)$ that $\SCHEME.\GenGadget$ might generate for $\TOY.\Dec$. Effectively, a gadget consists of $2\gh(\TOY.\Dec)$ EPR-pairs, ordered in a way that depends on $\sk$. Some EPR-pairs have an additional $(\P^\dag \otimes I)$ transformation applied to them. The evaluator's input $c$ determines whether or not the input qubit is teleported through such a transformation, but an evaluator is unable to tell whether it is.}
\label{fig:gadget-construction-3}
\end{figure}

\section{Gadget construction for Learning With Errors}
\label{app:bv11construction}
\subsubsection*{Small modulus.}
Take the modulus $p$ to be polynomial in $\kappa$.
We describe a series of small `permutation gadgets' that move an arbitrary qubit to a location, depending
on whether $m=0$ or $m=1$. By doubling the construction as seen before, it is easy to turn these into a gadget
which applies an inverse phase gate whenever $m=1$. Note that we could just apply Theorem~\ref{thm:logspace} in order to construct a gadget directly from a log-space Turing machine of the decryption function. In this example, however, we choose to exhibit a more efficient gadget that exploits the structure of the BV11 scheme.

We follow \cite[Section 4.5]{BV11} in rewriting Equation~\ref{eq:bv11decrypt} in terms of binary arithmetic. Let
$\mathbf{s}[i](j)$ denote the $j$th bit of the $i$th entry of $\mathbf{s}$, then the inner product can be written as
\begin{align}
w - \langle \mathbf{v}, \mathbf{s}\rangle  \pmod p &= w - \sum^k_{i=1} \mathbf{v}[i] \mathbf{s}[i] \pmod p \nonumber \\
&= w - \sum^k_{i=1} \sum^{\log p}_{j=0}  \mathbf{v}[i](j) \cdot 2^j \cdot \mathbf{s}[i] \pmod p \label{eq:bv11decomp}
\end{align}

Let a \emph{permutation gadget} be a subgadget of size $2p$, parametrized
by a number $q \in \mathbb{Z}_p$. Label the first $p$ qubits by
$0_\mathrm{in}$ to $(p-1)_\mathrm{in}$, and the second $p$ qubits 
by $0_\mathrm{out}$ to $(p-1)_\mathrm{out}$. The gadget simply
creates EPR pairs between $x_\mathrm{in}$ and $({x+q \pmod p})_\mathrm{out}$,
for all $x \in \mathbb{Z}_p$.
Such a gadget can effectively simulate addition with $q$ over $\mathbb{Z}_p$.

For each element of the vector $\mathbf{s}$ we will create 
$\log p$ permutation gadgets. The intuition behind
the construction is as follows: The inner product which computed the decryption
of the ciphertext is written as a sum of $\kappa \log p$ numbers, that 
either contribute to the sum
or not, depending on a bit of the ciphertext $\mathbf{v}$.

For each $i$ from 1 to $\kappa$, and each $j$ from $0$ to $\log p$,
we create a permutation gadget, labeled by $(i,j)$, for the number $2^j \cdot \mathbf{s}[i]$.

The evaluator uses this gadget in the following way. He performs a Bell measurement between the input qubit and the $0_\mathrm{in}$ qubit
of the first gadget such that $\mathbf{v}[i](j) = 1$.
Then, he connects all output qubits  $0_\mathrm{out}$ to $(p-1)_\mathrm{out}$
of this gadget to all the input qubits
of the next gadget for which $\mathbf{v}[i](j) = 1$.

After teleporting his qubit through all gadgets, the qubit will be exactly 
at the location $z_\mathrm{out}$ of the final gadget the evaluator used,
where $z = \sum^\kappa_{i=1} \mathbf{v}[i] \mathbf{s}[i] \pmod p$. (Although
of course the evaluator has no way of knowing which of the $p$ locations this is.)
He can then, by simple permutation, apply an inverse phase gate whenever $w - z \pmod 2 = 1$.

Finally, as in the construction from the previous section,
we double the entire construction to route the unknown qubit back to a known location.
The size of the total gadget is then bounded by $4 \kappa p \log p$.

\subsubsection*{Large modulus.}
In case the modulus $p$ is superpolynomially large, constructing the gadget
explicitly appears to be much harder, and a log-space algorithm for this inner product is not immediately obvious.
For completeness, we sketch a proof strategy to reiterate that such a polynomial-sized gadget does still exist in this case.

The decryption function of Equation~\ref{eq:bv11decrypt}
has depth $O(\log \kappa + \log \log p)$,
see for example \cite[Lemma 4.5]{BV11}. This can be proven
by writing the decomposition of Equation~\ref{eq:bv11decomp} as a Wallace tree. 

Given a low-depth circuit, we could now apply Theorem~\ref{thm:gadget-logdepth}
to convert this circuit into a garden-hose protocol. In contrast to the small-modulus case, we do not exploit the structure of the decryption function to construct a more efficient gadget.

\section{Constructing gadgets using swap and Paulis}\label{app:gadget-construction-limited-quantum-power}
In the current description of the gadget generation, the key generator has to be able to perform a variety of tasks: he has to generate EPR-pairs, as well as perform $\P^{\dag}$ gates and Bell measurements. We show in this section how the gadgets can be generated securely using only $\X$, $\Z$ and $\cnot$, when the key generator is given resources by some computationally more powerful (but potentially malicious) party, for example the evaluator.

As described in Section~\ref{sec:gadget}, we see from Figure~\ref{fig:gadget-construction-3} that the gadget $\Gamma_{\pk'}(\sk)$ is effectively a list of $2m$ EPR-pairs (some of which have an extra $(\P^\dag \otimes I)$ transformation on them), with the qubits ordered in some way that depends on $\sk$. If the key generator is supplied with a list of $2m$ EPR-pairs $\ket{\Phi^+}$ and as many pairs $(I \otimes \P^\dag)\ket{\Phi^+}$, it is clear that he can create the gadget by swapping some of the qubits (using $\cnot$ gates), and applying random Pauli operations (using $\X$ and $\Z$ gates) on every pair. Any unused pairs are discarded.

If the supplier of these pairs follows the protocol and sends actual EPR-pairs to the key generator, this tactic suffices to hide all information about $\sk$. However, if the supplier acts maliciously, he may send two qubits to the key generator claiming that they form an EPR-pair, while in reality he is keeping some form of entanglement with one or both of the qubits. We need to make sure that even in this case, where the supplier actively tries to gather information about $\sk$, this information is still secure.

The key generator, upon receiving the (real or fake) EPR-pairs, can apply independently selected random Pauli transformations on every qubit. If the qubits really formed EPR-pairs, it would suffice to apply a random Pauli to only one of the two qubits in the pair, but by applying this transformation to both qubits, any entanglement that a malicious supplier might hold with any of them becomes completely useless. Since any swap of two qubits consists of three $\cnot$ gates that commute with the Pauli's, the state after swapping the qubits into the correct order is still completely mixed. Hence, no information about $\sk$ is revealed to the supplier.

\end{document}